\newtheorem{remark}{\textbf{Remark}}
\newtheorem{corollary}{\textbf{Corollary}}
\newtheorem{definition}{\textbf{Definition}}
\newtheorem{theorem}{\textbf{Theorem}}
\newtheorem{lemma}{\textbf{Lemma}}
\newcommand{\Rmnum}[1]{\expandafter\@slowromancap\romannumeral #1@}
\g@addto@macro{\normalsize}{%
   \setlength{\abovedisplayskip}{3pt plus 2pt minus 2pt}
   \setlength{\abovedisplayshortskip}{3pt plus 2pt minus 2pt}
   \setlength{\belowdisplayskip}{3pt plus 2pt minus 2pt}
   \setlength{\belowdisplayshortskip}{3pt plus 2pt minus 2pt}
   \setlength{\textfloatsep}{10pt plus 2pt minus 2pt}
   }
\begin{document}
%

\title{Statistical Modeling and Probabilistic Analysis of Cellular Networks with Determinantal Point Processes}

\author{ Yingzhe Li, Fran\c{c}ois Baccelli, Harpreet S. Dhillon, Jeffrey G. Andrews \thanks{Y. Li, F. Baccelli and J. G. Andrews are with the Wireless Networking and Communications Group (WNCG), The University of Texas at Austin (email: yzli@utexas.edu, francois.baccelli@austin.utexas.edu, jandrews@ece.utexas.edu). H. S. Dhillon is with the Wireless@VT, Department of Electrical and Computer Engineering, Virginia Tech, Blacksburg, VA (email: hdhillon@vt.edu). Part of this paper will be presented in IEEE GLOBECOM 2014 in Austin, TX~\cite{li2014fitting}. Date revised: \today.}}

\maketitle

\begin{abstract}
Although the Poisson point process (PPP) has been widely used to model base station (BS) locations in cellular networks, it is an idealized model that neglects the spatial correlation among BSs. The present paper proposes the use of determinantal point process (DPP) to take into account these correlations; in particular the repulsiveness among macro base station locations. DPPs are demonstrated to be analytically tractable by leveraging several unique computational properties. Specifically, we show that the empty space function, the nearest neighbor function, the mean interference and the signal-to-interference ratio (SIR) distribution have explicit analytical representations and can be numerically evaluated for cellular networks with DPP configured BSs. In addition, the modeling accuracy of DPPs is investigated by fitting three DPP models to real BS location data sets from two major U.S. cities. Using hypothesis testing for various performance metrics of interest, we show that these fitted DPPs are significantly more accurate than popular choices such as the PPP and the perturbed hexagonal grid model.
   
\end{abstract}

\begin{IEEEkeywords}
Cellular networks, determinantal point process, stochastic geometry, SIR distribution, hypothesis testing
\end{IEEEkeywords}

\section{Introduction}

Historically, cellular base stations have been modeled by the deterministic grid-based model, especially the hexagonal grid. However, the increasingly dense capacity-driven deployment of BSs, along with other topological and demographic factors, have made cellular BS deployments more organic and irregular. Therefore, random spatial models, in particular the PPP, have been widely adopted to analyze cellular networks using stochastic geometry~\cite{trac,Harpreet,dhillon2013load, mukherjee2012distribution,madhusudhanan2011multi,elsawy2013stochastic, heath2013hetnets,madhusudhanan2013modeling,dhillon2013downlink,kountouris2013antennas}. However, since no two macro base stations are deployed arbitrarily close to each other, the PPP assumption for the BS locations fails to model the underlying repulsion among macro BSs and generally gives a pessimistic signal-to-interference-plus-noise ratio (SINR) distribution~\cite{trac}. In this paper, we propose to use DPPs~\cite{zeros} to model the macro BS locations. We demonstrate the analytical tractability of the proposed model and present statistical evidence to validate the accuracy of DPPs in modeling BS deployments. 

\subsection{Related Works}
Cellular network performance metrics, such as the coverage probability and achievable rate, strongly depend on the spatial configuration of BSs. PPPs have become increasingly popular to model cellular BSs not only because they can describe highly irregular placements, but also because they allow the use of powerful tools from stochastic geometry and are amenable to tractable analysis~\cite{trac}. While cellular networks with PPP distributed BSs have been studied in early works such as~\cite{baccelli1997stochastic,brown2000shotgun,baccelli2001coverage}, the coverage probability and average Shannon rate were derived only recently in~\cite{trac}. The analysis of cellular networks with PPP distributed BSs has been widely extended to other network scenarios, including heterogeneous cellular networks (HetNets)~\cite{Harpreet,dhillon2013load,mukherjee2012distribution,madhusudhanan2011multi,elsawy2013stochastic}, MIMO cellular networks~\cite{heath2013hetnets,madhusudhanan2013modeling}, and MIMO HetNets~\cite{heath2013hetnets,dhillon2013downlink,kountouris2013antennas}. 

Real (macro) BS deployments exhibit ``repulsion'' between the BSs, which means that macro BSs are typically distributed more regularly than the realization of a PPP. Although the statistics of the propagation losses between a typical user and the BSs converge to that of a Poisson network model under i.i.d. shadowing with large variance~\cite{blaszczyszyn2014wireless}, these assumptions are quite restrictive and may not always hold in practice.
Therefore, several recent research efforts have been devoted to investigating more accurate point process models for representing BS deployments. One class of such point processes is the Gibbs point process~\cite{pairmodel,anjin,riihijarvi2010modeling}. Gibbs models were validated to be statistically similar to real BS deployments using SIR distribution and Voronoi cell area distribution~\cite{pairmodel}. The Strauss process, which is an important class of Gibbs processes, can also provide accurate statistical fit to real BS deployments~\cite{anjin,riihijarvi2010modeling}. By contrast, the PPP and the grid models were demonstrated to be less accurate models for real BS deployments~\cite{pairmodel,anjin}. A significant limitation of Gibbs processes is their lack of tractability, since their probability generating functional is generally unknown~\cite{anjin}. Therefore, point processes that are both tractable and accurate in modeling real BS deployments are desirable.

For several reasons, determinantal point processes (DPPs) are a promising class of point processes to model cellular BS deployments. First, DPPs have soft and adaptable repulsiveness~\cite{FME}. Second, there are quite effective statistical inference tools for DPPs~\cite{zeros,rubak}. Third, many stationary DPPs can be easily simulated~\cite{rubak,decreusefond2013perfect,decreusefond14ginibre}. Fourth, DPPs have many attractive mathematical properties, which can be used for the analysis of cellular network performance~\cite{shirai2003random,miyoshi2012cellular}.

The Ginibre point process, which is a type of DPP, has been recently proposed as a possible model for cellular BSs. Closed-form expressions of the coverage probability and the mean data rate were derived for Ginibre single-tier cellular networks in~\cite{miyoshi2012cellular}, and heterogeneous cellular networks in~\cite{nakata2013spatial}. In~\cite{deng14ginibre}, several spatial descriptive statistics and the coverage probability were derived for Ginibre single-tier networks. These results were empirically validated by comparing to real BS deployments. 
That being said, the modeling accuracy and analytical tractability of using general DPPs to model cellular BS deployments are still largely unexplored. 

\subsection{Contributions}
In this work, we derive several key performance metrics in cellular networks with DPP configured BSs for the first time. Then we use statistical methods to show that DPPs indeed accurately model cellular BSs. Finally, we describe the gains provided by the use of DPPs for the performance evaluation of cellular networks. The main contributions of this paper are now summarized.


\textbf{DPPs are tractable models to analyze cellular networks:} 
We summarize three key computational properties of the DPPs, and derive the Laplace functional of the DPPs and independently marked DPPs for functions satisfying certain conditions. 
Based on these computational properties, we analytically derive and numerically evaluate several performance metrics, including the empty space function, nearest neighbor function, mean interference\footnote{By interference, we mean the sum interference power, which is a random shot-noise field~\cite{stochgeom}.} and SIR distribution. The Quasi-Monte Carlo integration method is used for efficient evaluation of the derived empty space function, nearest neighbor function, and mean interference. Finally, the SIR distribution under the nearest BS association scheme is derived, and a close approximation is proposed for efficient numerical evaluation in the high SIR regime.

\textbf{DPPs are accurate models for macro BS deployments:} We fit three stationary DPP models---the Gauss, Cauchy and Generalized Gamma DPP---to real macro BS deployments from two major U.S. cities, and show that these DPP models are generally accurate in terms of spatial descriptive statistics and coverage probability. We find that the Generalized Gamma DPP provides the best fit to real BS deployments in terms of coverage probability, but is generally less tractable. In contrast, the Gauss DPP model also provides a reasonable fit while offering better mathematical tractability. Compared to other DPP models, the fitted Cauchy DPP provides the least precise results in terms of coverage probability. We also show that the fitted Generalized Gamma DPP is the most repulsive while the fitted Cauchy DPP is the least repulsive.

\textbf{DPPs outperform the PPPs to predict key performance metrics in cellular networks:} By combining the analytical, numerical and statistical results, we show that DPPs are more accurate than PPPs to model BS deployments in terms of the empty space function, the nearest neighbor function, the mean interference and most importantly, the coverage probability.

\section{Mathematical Preliminaries on Determinantal Point Processes}~\label{SECPrelimi}
\vspace{-2.5em}
\subsection{Definition of DPPs} 

DPPs are defined based on their $n$-th order product density. Consider a spatial point process $\Phi$ defined on a locally compact space $\Lambda$; then $\Phi$ has $n$-th order product density function $\rho^{(n)} : \Lambda^{n} \rightarrow [0,\infty)$ if for any Borel function $h: \Lambda^{n} \rightarrow [0,\infty)$:
\begin{equation}\label{productdensity}
\begin{split}
E \sum_{X_{1},...,X_{n} \in \Phi}^{\neq} h(X_{1},...,X_{n}) =& \int_{\Lambda} \cdotp\cdotp\cdotp \int_{\Lambda} \rho^{(n)}(x_{1},...,x_{n}) \times h(x_{1},...,x_{n}){\rm d}x_{1}\cdotp \cdotp \cdotp {\rm d}x_{n},
\end{split}
\end{equation}
where $\neq$ means $X_{1},...,X_{n}$ are pair-wise different. 

Let $\mathbb{C}$ denote the complex plane; then for any function $K: \Lambda \times \Lambda \rightarrow \mathbb{C}$, we use $\left(K(x_{i},x_{j})\right)_{1\leq i,j\leq n}$ to denote the square matrix with  $K(x_{i},x_{j})$ as its $(i,j)$-th entry. In addition, denote by $\det \textit{A}$ the determinant of the square matrix \textit{A}. 

\begin{definition}\label{DPPdefnition}
The point process $\Phi$ defined on a locally compact space $\Lambda$ is called a determinantal point process with kernel $K: \Lambda \times \Lambda \rightarrow \mathbb{C}$, if its $n$-th order product density has the following form:
\begin{equation}\label{DPPdefn}
\rho^{(n)}(x_{1},...,x_{n})=\text{det}\left(K(x_{i},x_{j})\right)_{1\leq i,j\leq n}, \qquad (x_1,...,x_n) \in \Lambda^n.
\end{equation}
\end{definition}

Throughout this paper, we will focus on DPPs defined on the Euclidean plane $\mathbb{R}^2$, and we denote the DPP $\Phi$ with kernel $K$ by $\Phi \sim \text{DPP}(K)$. 
The kernel function $K(x,y)$ is assumed to be a continuous, Hermitian, locally square integrable and non-negative definite function\footnote{This is not a sufficient condition to guarantee the existence of the DPP. Readers are referred to~\cite{zeros,rubak} for more details.}. 

\begin{remark}
The soft-core repulsive nature of DPPs can be explained by the fact that when two points $x_{i} \approx x_{j}$ for $i\neq j$, we have $\rho^{(n)}(x_{1},...,x_{n}) \approx 0$. 
\end{remark}

A DPP $\Phi$ is stationary if its $n$-th order product density is invariant under translations. A natural way to guarantee the stationarity of a DPP is that its kernel $K$ has the form: 
\begin{equation*}
K(x,y)=K_{0}(x-y),  \qquad x,y \in \mathbb{R}^2.
\end{equation*}
In this case, $K_{0}$ is also referred to as the covariance function of the DPP. For stationary DPPs, the intensity measure (i.e., first order product density) is constant over $\mathbb{R}^2$. Further if the stationary DPP is isotropic, i.e., invariant under rotations, its kernel only depends on the distance between the node pair. 
Another important property of stationary DPPs is their spectral density.
 
\begin{definition}
(Spectral Density\cite{rubak}) The spectral density $\varphi$ of a stationary DPP $\Phi$ with covariance function $K_{0}(t)$ is defined as the Fourier transform of $K_{0}(t)$, i.e., $\varphi(x)=\int_{\mathbb{R}^2} K_{0}(t) e^{-2 \pi ix \cdot t}{\rm d} t$ for $x \in \mathbb{R}^2$.
\end{definition}

The spectral density is useful for simulating stationary DPPs.
In addition, the spectral density can also be used to assess the existence of the DPP associated with a certain kernel. Specifically, from Proposition 5.1 in~\cite{rubak}, the existence of a DPP is equivalent to its spectral density $\varphi$ belonging to $[0,1]$.

\subsection{Computational Properties of DPPs}\label{SubSecCompProp}
We now list the computational properties which make DPPs mathematically tractable for analyzing cellular networks. 

1. DPPs have closed-form product densities of any order. Specifically, for any $n \in \mathbb{N}$, the $n$-th order product density of $\Phi \sim \text{DPP}(K)$ is given by~(\ref{DPPdefn}). Therefore, higher order moment measures of shot noise fields such as the mean/variance of interference in cellular networks can be derived. In addition, the factorial moment expansion approach of~\cite{FME} can also be applied to derive the success probability in wireless networks, which only depends on the product density~\cite[Theorem~3]{FME}. 


2. DPPs have a closed-form Laplace functional for any nonnegative measurable function $f$ on $\mathbb{R}^2$ with compact support~\cite[Theorem 1.2]{shirai2003random}.

\begin{lemma}[Shirai \textit{et al.}~\cite{shirai2003random}]\label{LFDPP}
Consider $\Phi \sim \text{DPP}(K)$ defined on $\mathbb{R}^2$, where the kernel $K$ guarantees the existence of $\Phi$. Then $\Phi$ has the Laplace functional: 
\begin{equation}\label{LPFunEq}
\small
\mathbb{E}\left[\exp\left(-\int_{\mathbb{R}^2} f(x) \Phi({\rm d} x)\right)\right] = \sum_{n=0}^{+\infty}  \frac{(-1)^n}{n!}  \int_{(\mathbb{R}^2)^n}\det \left(K(x_i,x_j)\right)_{1 \leq i,j \leq n} \prod_{i=1}^{n} \left(1-\exp(-f(x_i))\right) {\rm d}x_1 ... {\rm d}x_n,
\end{equation}
for any nonnegative measurable function $f$ on $\mathbb{R}^2$ with compact support.
\end{lemma}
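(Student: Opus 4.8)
The plan is to derive the Laplace functional directly from the product-density characterization of the DPP in Definition~\ref{DPPdefnition}, by expanding the exponential into a series whose terms are precisely the factorial moment measures. First I would introduce the auxiliary function $g(x) = 1 - \exp(-f(x))$; since $f \geq 0$ has compact support, $g$ takes values in $[0,1)$ and vanishes outside some bounded Borel set $B$. The starting observation is the pointwise identity
\begin{equation*}
\exp\left(-\int_{\mathbb{R}^2} f(x)\,\Phi({\rm d}x)\right) = \prod_{X \in \Phi} e^{-f(X)} = \prod_{X \in \Phi \cap B} \left(1 - g(X)\right),
\end{equation*}
where the product is finite almost surely because $\Phi(B) < \infty$ for bounded $B$.

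Next I would expand the finite product using the elementary identity $\prod_{x \in A}(1 - g(x)) = \sum_{S \subseteq A}(-1)^{|S|}\prod_{x \in S} g(x)$, grouping the subsets $S$ by their cardinality $n$ and passing from unordered subsets to ordered $n$-tuples of distinct points (which introduces the factor $1/n!$ and the $\neq$ restriction). For each realization this yields
\begin{equation*}
\exp\left(-\int_{\mathbb{R}^2} f\,{\rm d}\Phi\right) = \sum_{n=0}^{\infty} \frac{(-1)^n}{n!} \sum_{X_1,\dots,X_n \in \Phi}^{\neq} \prod_{i=1}^{n} g(X_i).
\end{equation*}
Taking expectations term by term and applying the product-density formula~(\ref{productdensity}) with $h(x_1,\dots,x_n) = \prod_{i=1}^n g(x_i)$, followed by the determinantal form~(\ref{DPPdefn}) of $\rho^{(n)}$, would give exactly the right-hand side of~(\ref{LPFunEq}) after substituting back $g = 1 - e^{-f}$. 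Note that $h = \prod_i g(x_i)$ is nonnegative, as required for~(\ref{productdensity}), so each individual term is unambiguously defined.

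The main obstacle is justifying the interchange of expectation and the infinite alternating sum. Since the terms are not sign-definite, I cannot invoke monotone convergence; instead I would establish absolute summability. Here Hadamard's inequality for the non-negative definite Hermitian matrix $\left(K(x_i,x_j)\right)$ is the key tool: it gives $|\det(K(x_i,x_j))_{1\leq i,j \leq n}| \leq \prod_{i=1}^n K(x_i,x_i)$, which for a stationary kernel equals $\rho^n$ with $\rho = K_0(0)$ the intensity. Combined with $0 \leq g \leq 1$ and the compact support $B$, the series of absolute values is dominated by $\sum_{n} (\rho |B|)^n / n! = \exp(\rho |B|) < \infty$, so Fubini's theorem legitimizes the interchange. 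A secondary technical point I would verify is that $\Phi$ has almost surely finitely many points in $B$; this follows from the finiteness of the mean $\mathbb{E}[\Phi(B)] = \int_B K(x,x)\,{\rm d}x$ for continuous $K$ and bounded $B$, and guarantees that the per-realization expansion is a genuine finite sum before expectations are taken.
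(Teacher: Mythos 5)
Your argument is correct, but note that the paper does not actually prove this statement: Lemma~\ref{LFDPP} is imported verbatim from Shirai and Takahashi~\cite{shirai2003random} (their Theorem~1.2), and the only Laplace-functional proof the paper supplies is for the extension to non-compactly-supported $f$ in Lemma~\ref{LFDPP2}, which is obtained by truncating $f$ to balls $B(0,k)$ and passing to the limit \emph{assuming} the compact-support case. Your proposal fills in that assumed base case by the standard direct route: expand $\prod_{X\in\Phi\cap B}(1-g(X))$ over subsets, convert subsets of size $n$ to ordered distinct $n$-tuples with the $1/n!$ factor (legitimate since a DPP is simple and $\prod_i g$ is symmetric), identify each term with the $n$-th factorial moment measure via~(\ref{productdensity}) and~(\ref{DPPdefn}), and justify the sum--expectation interchange by Hadamard's inequality, which bounds the $n$-th absolute term by $\bigl(\int_B K(x,x)\,{\rm d}x\bigr)^n/n!$ so that the dominating series is $\exp\bigl(\int_B K(x,x)\,{\rm d}x\bigr)<\infty$; this also gives $\mathbb{E}[\Phi(B)]<\infty$, so the per-realization expansion is a finite sum. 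The only cosmetic blemish is that you phrase the Hadamard bound in terms of a stationary intensity $\rho=K_0(0)$, whereas the lemma is stated for general kernels; the bound $\prod_i K(x_i,x_i)$ with $K$ continuous on the compact set $B$ works verbatim without stationarity, as your own display already shows. In short: a correct, self-contained proof of a result the paper treats as a black box, using exactly the ingredients (product densities, Hadamard) that the paper deploys elsewhere in the appendix for Lemma~\ref{LFDPP2}.
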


In the next lemma, we relax the strong requirement for $f$ to have compact support, and show~(\ref{LPFunEq}) holds for more general functions. 

\begin{lemma}\label{LFDPP2}
Consider $\Phi \sim \text{DPP}(K)$ defined on $\mathbb{R}^2$, where the kernel $K$ guarantees the existence of $\Phi$. Then for any nonnegative measurable function $f$ which satisfies the following conditions\footnote{For $x \in \mathbb{R}^2$ and $r \geq 0$, $B(x,r)$ ($B^{o}(x,r)$) denotes the closed (open) ball with center $x$ and radius $r$. In addition, $B^{c}(x,r)$ denotes the complement of $B(x,r)$.}: (a) $\lim\limits_{|x| \rightarrow \infty} f(x) = 0$; (b) $\lim\limits_{r \rightarrow \infty} \int_{\mathbb{R}^2 \setminus B(0,r)} K(x,x) f(x) {\rm d}x = 0$; and (c) $\int_{\mathbb{R}^2} K(x,x) (1-\exp(-f(x))) {\rm d}x < +\infty$, the Laplace functional of $\Phi$ is given by~(\ref{LPFunEq}).
\end{lemma}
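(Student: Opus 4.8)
The plan is to prove the identity by a truncation-and-limit argument that reduces the general $f$ to the compactly supported case already covered by Lemma~\ref{LFDPP}. For each $r > 0$ I would set $f_r(x) = f(x)\mathbf{1}_{B(0,r)}(x)$, which is a nonnegative measurable function with compact support, so that Lemma~\ref{LFDPP} applies verbatim to $f_r$ and yields~(\ref{LPFunEq}) with $f$ replaced by $f_r$. Since $f \geq 0$, the truncations increase monotonically, $f_r \uparrow f$ pointwise as $r \to \infty$, and correspondingly $1 - e^{-f_r} \uparrow 1 - e^{-f}$, with all these quantities lying in $[0,1]$. The whole argument then consists of passing $r \to \infty$ on both sides of~(\ref{LPFunEq}).

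For the left-hand side I would use the elementary bound $|e^{-a} - e^{-b}| \leq |a - b|$ for $a,b \geq 0$ together with Campbell's formula. Writing the first-order product density of $\Phi$ as $K(x,x)$, this gives
\begin{equation*}
\left| \mathbb{E}\Bigl[e^{-\int f\, {\rm d}\Phi}\Bigr] - \mathbb{E}\Bigl[e^{-\int f_r\, {\rm d}\Phi}\Bigr] \right| \leq \mathbb{E}\Bigl[ \int_{B^{c}(0,r)} f(x)\, \Phi({\rm d}x) \Bigr] = \int_{B^{c}(0,r)} K(x,x) f(x)\, {\rm d}x,
\end{equation*}
which tends to $0$ by condition (b); note that (b) also guarantees $\mathbb{E}[\int f\,{\rm d}\Phi] < \infty$, so $\int f\,{\rm d}\Phi$ is finite almost surely and the left-hand side of~(\ref{LPFunEq}) is well defined. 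Condition (a) supplies the qualitative decay of $f$ at infinity underlying this tail estimate.

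The main obstacle is the right-hand side, where I must justify interchanging the limit $r\to\infty$ with the infinite series and the $n$-fold integrals. The key tool is Hadamard's inequality for the Hermitian positive semi-definite matrices $(K(x_i,x_j))_{1\le i,j\le n}$, namely $\det(K(x_i,x_j)) \leq \prod_{i=1}^n K(x_i,x_i)$ (the left side being nonnegative as a product density). Denoting the $n$-th term of the series by $T_n(r)$ and using $0 \le 1 - e^{-f_r} \le 1 - e^{-f}$, this yields
\begin{equation*}
|T_n(r)| \leq \frac{1}{n!}\left( \int_{\mathbb{R}^2} K(x,x)\bigl(1 - e^{-f(x)}\bigr)\, {\rm d}x \right)^n = \frac{M^n}{n!},
\end{equation*}
where $M < \infty$ by condition (c). Since $\sum_n M^n/n! = e^{M}$, the series is dominated uniformly in $r$, which licenses the interchange of $\lim_{r\to\infty}$ with $\sum_n$ and simultaneously shows the target $f$-series converges absolutely.

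For each fixed $n$, the integrand $\det(K(x_i,x_j)) \prod_i (1 - e^{-f_r(x_i)})$ converges pointwise to $\det(K(x_i,x_j)) \prod_i (1 - e^{-f(x_i)})$ and is dominated by the integrable function $\prod_i K(x_i,x_i)(1 - e^{-f(x_i)})$, so dominated convergence gives $T_n(r) \to T_n(\infty)$. Combining this termwise convergence with the uniform summable bound shows the right-hand side of~(\ref{LPFunEq}) for $f_r$ converges to the corresponding $f$-series, and matching the two limits establishes~(\ref{LPFunEq}) for $f$. The only genuinely delicate point is verifying through Hadamard's inequality and condition (c) that the terms admit a bound summable in $n$; once that is in hand, everything reduces to routine dominated convergence.
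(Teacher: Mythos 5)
Your proposal is correct and follows essentially the same route as the paper: truncate $f$ to balls, invoke Lemma~\ref{LFDPP} for the compactly supported truncations, dominate the series terms by $M^n/n!$ via Hadamard's inequality and condition (c), and pass to the limit termwise by dominated convergence. The only (minor) difference is on the left-hand side, where you use the Lipschitz bound $|e^{-a}-e^{-b}|\leq|a-b|$ together with Campbell's formula and condition (b), whereas the paper appeals to monotone convergence; both are valid and yours is arguably more direct.
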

\begin{proof}
The proof is provided in Appendix~\ref{LFDPP2Appdx}.
\end{proof}

Based on Lemma~\ref{LFDPP2}, we can easily derive the probability generating functional (pgfl)~\cite{stochgeom} of $\Phi \sim \text{DPP}(K)$, which is given in the following corollary.
\begin{corollary}\label{pgfl}
If $K$ guarantees the existence of $\Phi \sim \text{DPP}(K)$, then the pgfl of $\Phi$ is:
\allowdisplaybreaks
\begin{align}
\small
G[v] \triangleq \mathbb{E}\left(\prod_{x \in \Phi} v(x) \right) = \sum_{n=0}^{+\infty} \frac{(-1)^n}{n!}  \int_{(\mathbb{R}^2)^n}\det \left(K(x_i,x_j)\right)_{1 \leq i,j \leq n} \prod_{i=1}^{n} \left(1-v(x_i)\right) {\rm d}x_1 ... {\rm d}x_n,
\end{align}
for all measurable functions $v : \mathbb{R}^2 \rightarrow [0,1]$, such that $-\log v$ satisfies the conditions in Lemma~\ref{LFDPP2}.
\end{corollary}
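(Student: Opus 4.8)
The plan is to observe that the probability generating functional is nothing but the Laplace functional under the substitution $f = -\log v$, so that Lemma~\ref{LFDPP2} applies directly. First I would set $f(x) = -\log v(x)$ for $x \in \mathbb{R}^2$. Since $v$ takes values in $[0,1]$, the function $f$ is nonnegative and measurable (taking values in $[0,+\infty]$), and by hypothesis $f = -\log v$ satisfies conditions (a)--(c) of Lemma~\ref{LFDPP2}. The key elementary identity is $v(x) = \exp(-f(x))$, which holds pointwise, including where $v(x) = 0$ (so that $f(x) = +\infty$ and $\exp(-f(x)) = 0$).

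Next I would rewrite the product defining the pgfl in exponential form. Using $v(x) = \exp(-f(x))$,
\begin{equation*}
\prod_{x \in \Phi} v(x) = \prod_{x \in \Phi} \exp(-f(x)) = \exp\left(-\sum_{x \in \Phi} f(x)\right) = \exp\left(-\int_{\mathbb{R}^2} f(x)\, \Phi({\rm d}x)\right).
\end{equation*}
Taking expectations on both sides shows that $G[v]$ coincides exactly with the Laplace functional of $\Phi$ evaluated at $f$, i.e., with the left-hand side of~(\ref{LPFunEq}).

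Finally, I would invoke Lemma~\ref{LFDPP2} to expand this Laplace functional as the series on the right-hand side of~(\ref{LPFunEq}), and then substitute back $\exp(-f(x_i)) = v(x_i)$. Under this substitution each factor $1 - \exp(-f(x_i))$ becomes $1 - v(x_i)$, while the determinantal kernels and the integration are unaffected, so the resulting series is precisely the claimed expression for $G[v]$.

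The argument involves no genuine analytic difficulty; the only point requiring care is the transfer of hypotheses. Since the conditions of Lemma~\ref{LFDPP2} are imposed on $-\log v$, which is exactly our $f$, they hold by assumption and need no further verification. I would simply remark that the identity $v = \exp(-f)$ remains valid (with both sides equal to $0$) on the set where $v$ vanishes, so that the substitution, and hence the whole reduction to Lemma~\ref{LFDPP2}, is legitimate everywhere.
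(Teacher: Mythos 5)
Your proof is correct and follows essentially the same route as the paper's (omitted) argument: substitute $f=-\log v$, identify the pgfl with the Laplace functional via $\prod_{x\in\Phi}v(x)=\exp\bigl(-\sum_{x\in\Phi}f(x)\bigr)$, and apply Lemma~\ref{LFDPP2}, with the hypotheses on $-\log v$ transferring by assumption. Your extra remark on the set where $v$ vanishes is a harmless clarification and does not change the argument.
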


This corollary can be derived using Lemma~\ref{LFDPP2}, thus we omit the detailed proof.

In the next lemma, we extend the Laplace functional of DPPs to independently marked DPPs, where the marks are independent and identically distributed (i.i.d.) and also independent of the ground point process. 
\begin{lemma}~\label{LFMarkDPPCoro}
Consider a DPP $\Phi = \sum_{i} \delta_{x_{i}}$, where $\Phi$ is defined on $\mathbb{R}^2$ with kernel $K$. Each node $x_i \in \Phi$ is associated with an i.i.d. mark $p_i$, which is also independent of $x_i$. Denote the probability law of the marks as $F(\cdot)$. Then the Laplace functional of the independently marked point process $\tilde{\Phi} = \sum_i \delta_{(x_i,p_i)}$ is given by:
\begin{small}
\begin{align}
\allowdisplaybreaks
 L_{\tilde{\Phi}}(f) &\triangleq \mathbb{E}\left[\exp\left(-\sum_{i} f(x_i,p_i) \right)\right] \nonumber\\
&=\sum_{n = 0}^{+\infty} \frac{(-1)^n}{n!} \int_{(\mathbb{R}^2)^n} \det \left(K(x_i,x_j)\right)_{1 \leq i,j \leq n} \prod_{i=1}^{n} \left(1-\int_{\mathbb{R}^{+}}\exp(-f(x_i,p_i)) F({\rm d} p_i) \right) {\rm d}x_1 ... {\rm d}x_n,
\end{align}
\end{small}
for any nonnegative measurable function $f$ on $\mathbb{R}^2$, such that $-\log {\int_{\mathbb{R}^{+}} \exp(-f(x,p))F({\rm d}p) }$ satisfies the conditions in Lemma~\ref{LFDPP2}.
\end{lemma}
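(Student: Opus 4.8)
The plan is to reduce the claim to the probability generating functional of the underlying (unmarked) DPP, which is already available in Corollary~\ref{pgfl}. First I would condition on the ground process $\Phi = \sum_i \delta_{x_i}$ and integrate out the marks. Because the marks $p_i$ are i.i.d.\ with law $F$ and independent of the points, the conditional expectation over the marks factorizes, so that the target functional becomes
\begin{equation*}
L_{\tilde{\Phi}}(f) = \mathbb{E}\left[\,\mathbb{E}\left[\exp\left(-\sum_i f(x_i,p_i)\right)\,\Big|\,\Phi\right]\right] = \mathbb{E}\left[\prod_i \int_{\mathbb{R}^+} \exp(-f(x_i,p))\,F({\rm d}p)\right].
\end{equation*}
The inner integrand lies in $[0,1]$, so the interchange of the two expectations is justified by Fubini's theorem, and each factor of the product belongs to $(0,1]$.

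Next I would introduce the function $g(x) \triangleq \int_{\mathbb{R}^+} \exp(-f(x,p))\,F({\rm d}p)$. Since $f \geq 0$ and $F$ is a probability measure, $g$ maps $\mathbb{R}^2$ into $(0,1] \subseteq [0,1]$, and by hypothesis $-\log g$ satisfies conditions (a)--(c) of Lemma~\ref{LFDPP2}. The displayed identity then reads $L_{\tilde{\Phi}}(f) = \mathbb{E}[\prod_{x\in\Phi} g(x)] = G[g]$; that is, the marked Laplace functional coincides with the pgfl of $\Phi$ evaluated at the admissible function $g$. Applying Corollary~\ref{pgfl} directly with $v = g$ and substituting $1 - v(x_i) = 1 - \int_{\mathbb{R}^+}\exp(-f(x_i,p))\,F({\rm d}p)$ into the series expansion of $G[g]$ yields exactly the asserted formula.

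I expect the main obstacle to be the rigorous justification of the conditioning-and-factorization step: one must argue that, given the realized points, the mark-averaging genuinely produces the product $\prod_i g(x_i)$, that this product is a measurable function of $\Phi$, and that it converges for a possibly infinite configuration. The boundedness $\exp(-f) \in (0,1]$ settles Fubini and measurability, while convergence of the infinite product and its agreement with $G[g]$ is controlled precisely by condition (c) of Lemma~\ref{LFDPP2} applied to $-\log g$, which bounds $\int_{\mathbb{R}^2} K(x,x)(1-g(x))\,{\rm d}x$ and hence the expected number of points where $g$ departs appreciably from $1$. Once this point is handled, the remainder is a direct invocation of Corollary~\ref{pgfl}.
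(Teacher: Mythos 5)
Your proposal is correct and follows essentially the same route as the paper's proof: condition on the ground process, use the independence of the i.i.d.\ marks to factor the conditional expectation into $\prod_i \int_{\mathbb{R}^+}\exp(-f(x_i,p))\,F({\rm d}p)$, and then apply Corollary~\ref{pgfl} to the resulting function $v(x)=\int_{\mathbb{R}^+}\exp(-f(x,p))\,F({\rm d}p)$, whose admissibility is exactly the stated hypothesis on $-\log v$. Your additional remarks on Fubini and the convergence of the infinite product only make explicit what the paper leaves implicit.
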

\begin{proof}
The proof is provided in Appendix~\ref{LFMarkCoroApp}.
\end{proof}

The Laplace functional provides a strong tool to analyze the shot noise field of a DPP. In particular, it facilitates the
analysis of interference and coverage probability in cellular networks.

3.  Under the reduced Palm distribution\footnote{For a spatial point process $\Phi$, denote $\mathbb{P}_{x_0}^{!}(\cdot)$ as the reduced Palm distribution given $x_0 \in \Phi$. For any event $A$, a heuristic definition of $\mathbb{P}_{x_0}^{!}(\cdot)$ is: $\mathbb{P}_{x_0}^{!}(A) = \mathbb{P}(\Phi \backslash \{x_0\} \in A | x_0 \in \Phi)$. The readers are referred to~\cite[p. 131]{chiu2013stochastic} for formal definitions.}, the DPP has the law of another DPP whose kernel is given in closed-form~\cite[Theorem 1.7]{shirai2003random}.

\begin{lemma}[Shirai \textit{et al.}~\cite{shirai2003random}]\label{PalmDPP}
Consider $\Phi \sim \text{DPP}(K)$, where the kernel $K$ guarantees the existence of $\Phi$. Then under the reduced Palm distribution at $x_0 \in \mathbb{R}^2$, $\Phi$ coincides with another DPP associated with kernel $K_{x_0}^{!}$ for Lebesgue almost all $x_0$ with $K(x_0,x_0) >0$, where:
\begin{align}\label{PalmEqn}
\allowdisplaybreaks
K_{x_0}^{!} (x,y) = \frac{1}{K(x_0,x_0)} \det\left(\begin{array}{cc}
K(x,y) & K(x,x_0) \\ 
K(x_0,y) & K(x_0,x_0)
\end{array} \right).
\end{align}
\end{lemma}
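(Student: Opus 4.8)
The plan is to characterize the reduced Palm process through its product densities and then match them, term by term, with the product densities of a DPP whose kernel is $K_{x_0}^!$. Since Definition~\ref{DPPdefnition} specifies a DPP entirely through its product densities, it suffices to show that the order-$n$ product density of $\Phi$ under the reduced Palm distribution at $x_0$ equals $\det\left(K_{x_0}^!(x_i,x_j)\right)_{1 \leq i,j \leq n}$ for every $n$.

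First I would invoke the standard Palm-calculus identity relating reduced Palm product densities to the product densities of the original process: for a point process admitting product densities of all orders, the order-$n$ reduced Palm product density at $x_0$ is
\[
\rho_{x_0}^{!,(n)}(x_1,\ldots,x_n) = \frac{\rho^{(n+1)}(x_0,x_1,\ldots,x_n)}{\rho^{(1)}(x_0)},
\]
valid for Lebesgue-almost all $x_0$ with $\rho^{(1)}(x_0) > 0$. Substituting the DPP product densities from~(\ref{DPPdefn}) gives $\rho^{(1)}(x_0) = K(x_0,x_0)$ and $\rho^{(n+1)}(x_0,x_1,\ldots,x_n) = \det\left(K(y_i,y_j)\right)_{0 \leq i,j \leq n}$, where $y_0 = x_0$ and $y_i = x_i$ for $i \geq 1$.

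The crux is then a determinantal identity. I would arrange the $(n+1)\times(n+1)$ matrix $\left(K(y_i,y_j)\right)_{0 \leq i,j \leq n}$ in block form isolating the $x_0$ row and column, so that the trailing block is the scalar $K(x_0,x_0)$. Applying the Schur-complement factorization $\det\left(\begin{smallmatrix} A & B \\ C & D \end{smallmatrix}\right) = \det(D)\det\left(A - BD^{-1}C\right)$ with $D = K(x_0,x_0)$ yields
\[
\det\left(K(y_i,y_j)\right)_{0 \leq i,j \leq n} = K(x_0,x_0)\det\left(K(x_i,x_j) - \frac{K(x_i,x_0)K(x_0,x_j)}{K(x_0,x_0)}\right)_{1 \leq i,j \leq n}.
\]
Expanding the $2\times2$ determinant in~(\ref{PalmEqn}) shows that its Schur-complement entry $K(x_i,x_j) - K(x_i,x_0)K(x_0,x_j)/K(x_0,x_0)$ is precisely $K_{x_0}^!(x_i,x_j)$. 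Dividing through by $K(x_0,x_0)$ therefore identifies $\rho_{x_0}^{!,(n)}$ with $\det\left(K_{x_0}^!(x_i,x_j)\right)_{1 \leq i,j \leq n}$, and by Definition~\ref{DPPdefnition} the reduced Palm process is $\text{DPP}(K_{x_0}^!)$.

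The main obstacles are two points of rigor rather than computation. The first is justifying the Palm-density identity together with the ``almost all $x_0$'' qualifier: it holds off a Lebesgue-null set where the disintegration of the reduced Campbell measure degenerates, which is also why the statement restricts to $x_0$ with $K(x_0,x_0) > 0$ so that the division is legitimate. The second is confirming that matching the product densities genuinely pins down the law---this relies on the fact that for DPPs the correlation functions determine the void probabilities and hence the distribution---and that $K_{x_0}^!$ is itself an admissible kernel (Hermitian and non-negative definite, being a Schur complement of such a kernel), so that $\text{DPP}(K_{x_0}^!)$ exists.
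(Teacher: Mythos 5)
Your proposal is correct and follows essentially the same route as the paper: the paper does not reprove this result (it is imported from Shirai \textit{et al.}), but the identity you derive via the Schur complement is exactly~(\ref{PalmRelation}), and the paper's subsequent remark that $\rho_{x_0}^{(n)}(x_1,\ldots,x_n) = \rho^{(n+1)}(x_0,x_1,\ldots,x_n)/\rho^{(1)}(x_0)$ is the same reduced-Palm product-density identity you start from. Your added care about the ``almost all $x_0$'' disintegration, the admissibility of $K_{x_0}^!$, and the fact that correlation functions (bounded via Hadamard) determine the law is exactly what is needed to make the cited argument rigorous.
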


This property shows that DPPs are closed under the reduced Palm distribution, which provides a tool similar to Slyvniak's theorem for Poisson processes~\cite{chiu2013stochastic}. In cellular networks, when $x_0$ is chosen as the serving base station to the typical user, this property shows that all other interferers will form another DPP with the modified kernel provided in~(\ref{PalmEqn}). 

In addition, it has been proved in~\cite[Theorem 6.5]{shirai2003random} that if $K(x_0,x_0) > 0$, we have:
\begin{equation}~\label{PalmRelation}
\det(K_{x_0}^{!} (x_i,x_j))_{1 \leq i,j \leq n} = \frac{1}{K(x_0,x_0)} \det(K(x_i,x_j))_{0 \leq i,j \leq n}.
\end{equation}
Therefore, under the reduced Palm distribution at $x_0$ with $ \rho^{(1)}(x_0) >0$, a DPP $\Phi$ with $n$-th order product density function $\rho^{(n)}(x_1,...,x_n)$ will coincide with another DPP with $n$-th order product density: $\rho_{x_0}^{(n)}(x_1,...,x_n) = \rho^{(n+1)}(x_0,x_1,...,x_n) / \rho^{(1)}(x_0)$.

\subsection{Examples of Stationary DPP Models}~\label{SubSectDPPEg}
We will study three DPP models which were proposed in~\cite{rubak}. 

1. (Gauss DPP Model): A stationary point process $\Phi$ is a Gauss DPP if it has covariance function:
\begin{equation}\label{GaussKernel}
K_{0}(x)= \lambda \exp(-\|x\|^2/\alpha^2), \qquad x \in \mathbb{R}^2.
\end{equation}
In the above definition, $\lambda$ denotes the spatial intensity of the Gauss DPP, while $\alpha$ is a measure of its repulsiveness. In order to guarantee the existence of the Gauss DPP model, the parameter pair $(\lambda, \alpha)$ needs to satisfy: $\lambda \leq (\sqrt{\pi} \alpha)^{-2}$.


2. (Cauchy DPP Model): The Cauchy DPP model has a covariance function:
\begin{equation}
K_{0}(x)=\frac{\lambda}{\left( 1+\|x\|^{2}/ \alpha^{2}\right) ^{\nu+1}}, \qquad x \in \mathbb{R}^{2}.
\end{equation}
In this model, $\lambda$ describes the intensity, while $\alpha$ is the scale parameter and $\nu$ is the shape parameter. Both $\alpha$ and $\nu$ affect the repulsiveness of the Cauchy DPP. To guarantee the existence of a Cauchy DPP, the parameters need to satisfy: $\lambda \leq \frac{\nu}{(\sqrt{\pi}\alpha)^{2}}$.

3. (Generalized Gamma DPP Model): The Generalized Gamma DPP model is defined based on its spectral density: 
\begin{equation}
\varphi(x)=\lambda \frac{\nu \alpha^{2}}{2 \pi \Gamma(2/\nu)}\exp(-\| \alpha x \|^{\nu}),
\end{equation}
where $\Gamma(\cdot)$ denotes the Euler Gamma function. 
The existence of a Generalized Gamma DPP can be guaranteed when $\lambda \leq \frac{2 \pi \Gamma(2/\nu)}{\nu \alpha^2}$.



\subsection{Two Base Station Deployment Examples}

BS deployments in two major U.S. cities are investigated in this paper\footnote{BS location data was provided by Crown Castle.}. Fig.~\ref{RealBS} shows the BS deployment of 115 BSs in a 16 km $\times$ 16 km area of Houston, as well as the deployment of 184 BSs in a 28 km $\times$ 28 km area of Los Angeles (LA). Both deployments are for sprawling and relatively flat areas, where repulsion among BSs is expected. 

\begin{figure}
         \centering
        \begin{subfigure}[b]{0.3\textwidth}
                \includegraphics[height=1.8in, width=2in]{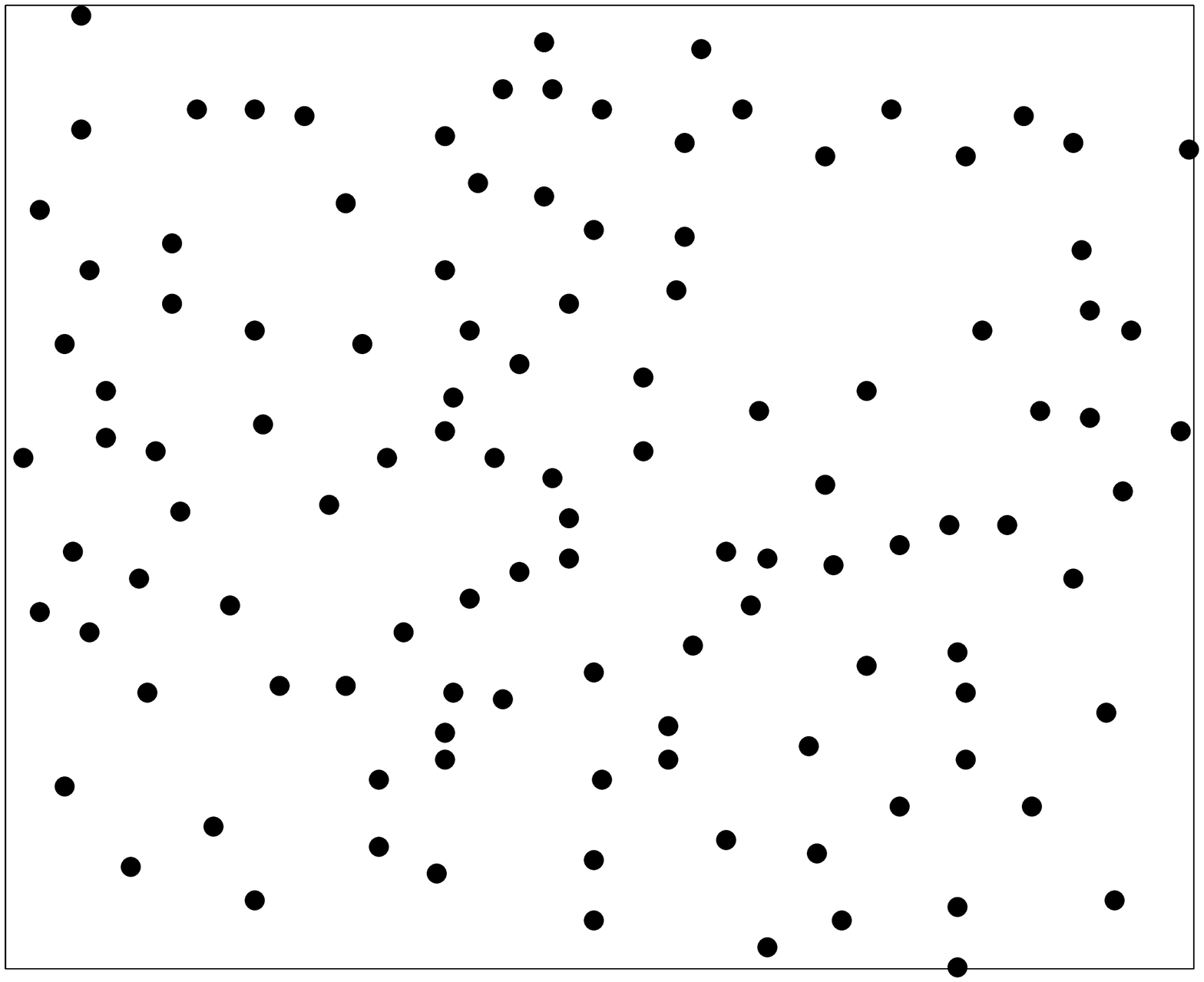}
                \caption{Houston data set}
        \end{subfigure}
        \begin{subfigure}[b]{0.3\textwidth}
                \includegraphics[height=1.8in, width=2in]{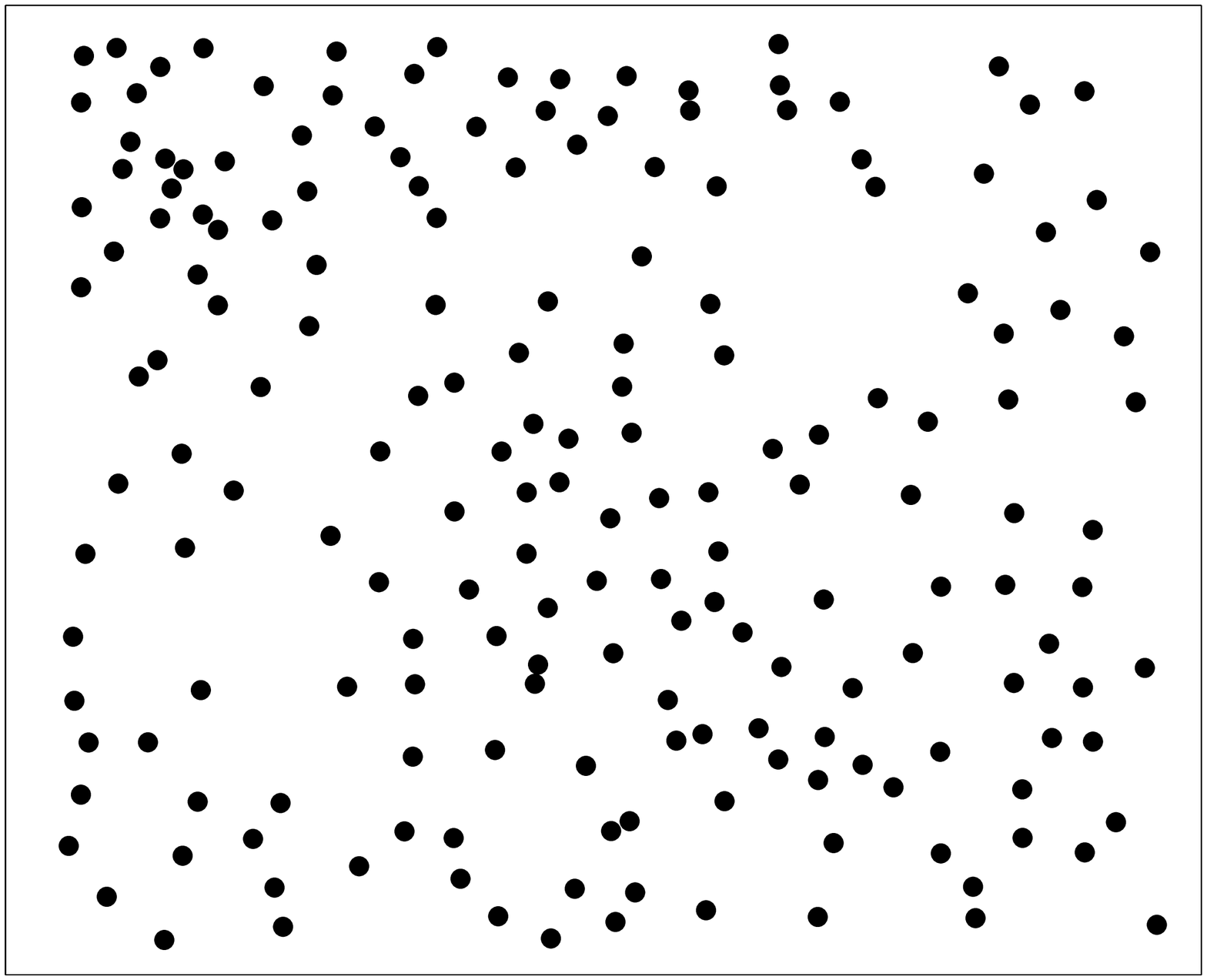}
                \caption{LA data set}
        \end{subfigure}
        \caption{Real macro BS deployments.}\label{RealBS}
\end{figure}

\begin{figure}
        \centering          
        \begin{subfigure}[b]{0.3\textwidth}
                 \includegraphics[height=1.8in, width=2in]{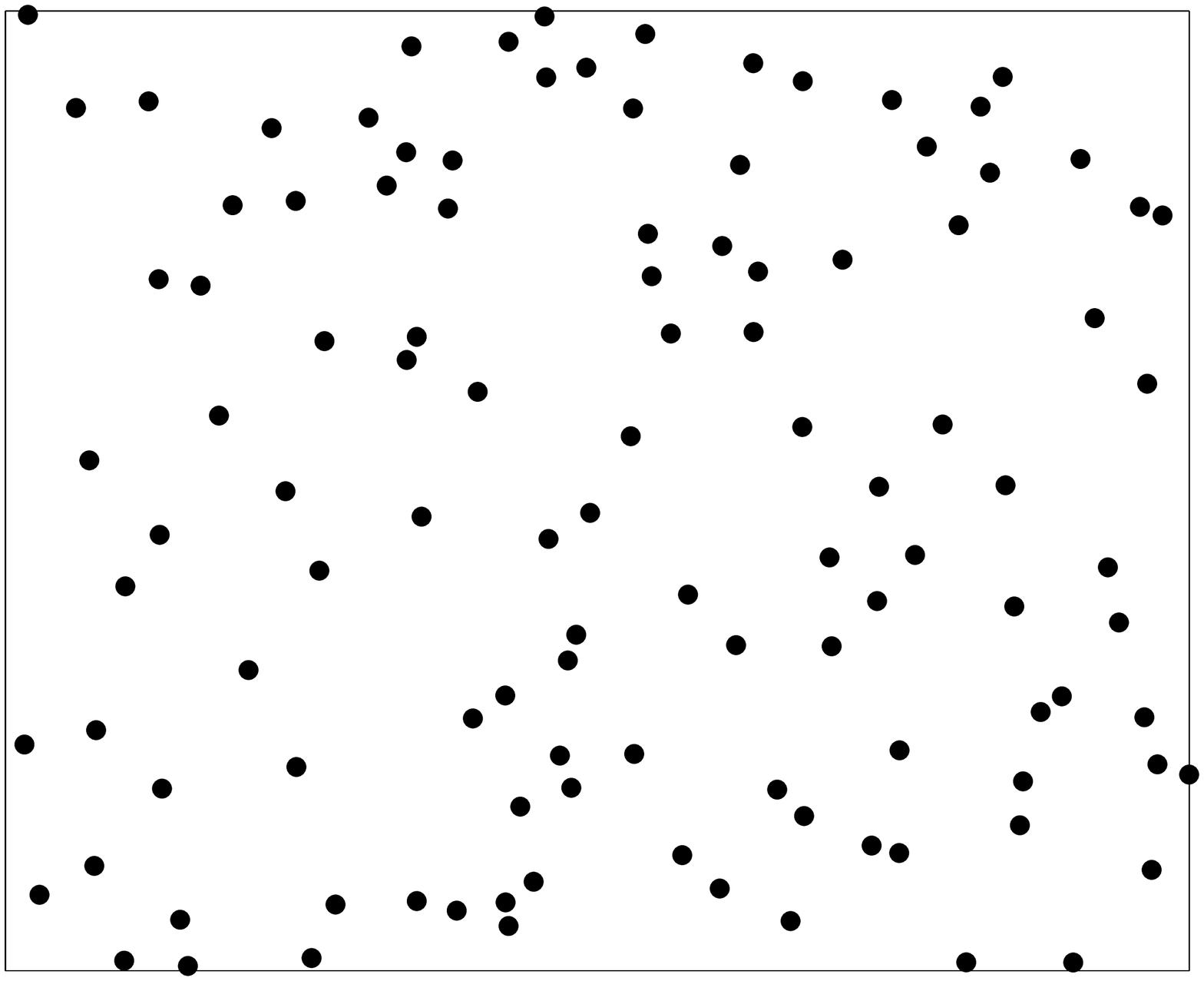}
                 \caption{Gauss DPP}
        \end{subfigure}      
        \begin{subfigure}[b]{0.3\textwidth}
                 \includegraphics[height=1.8in, width=2in]{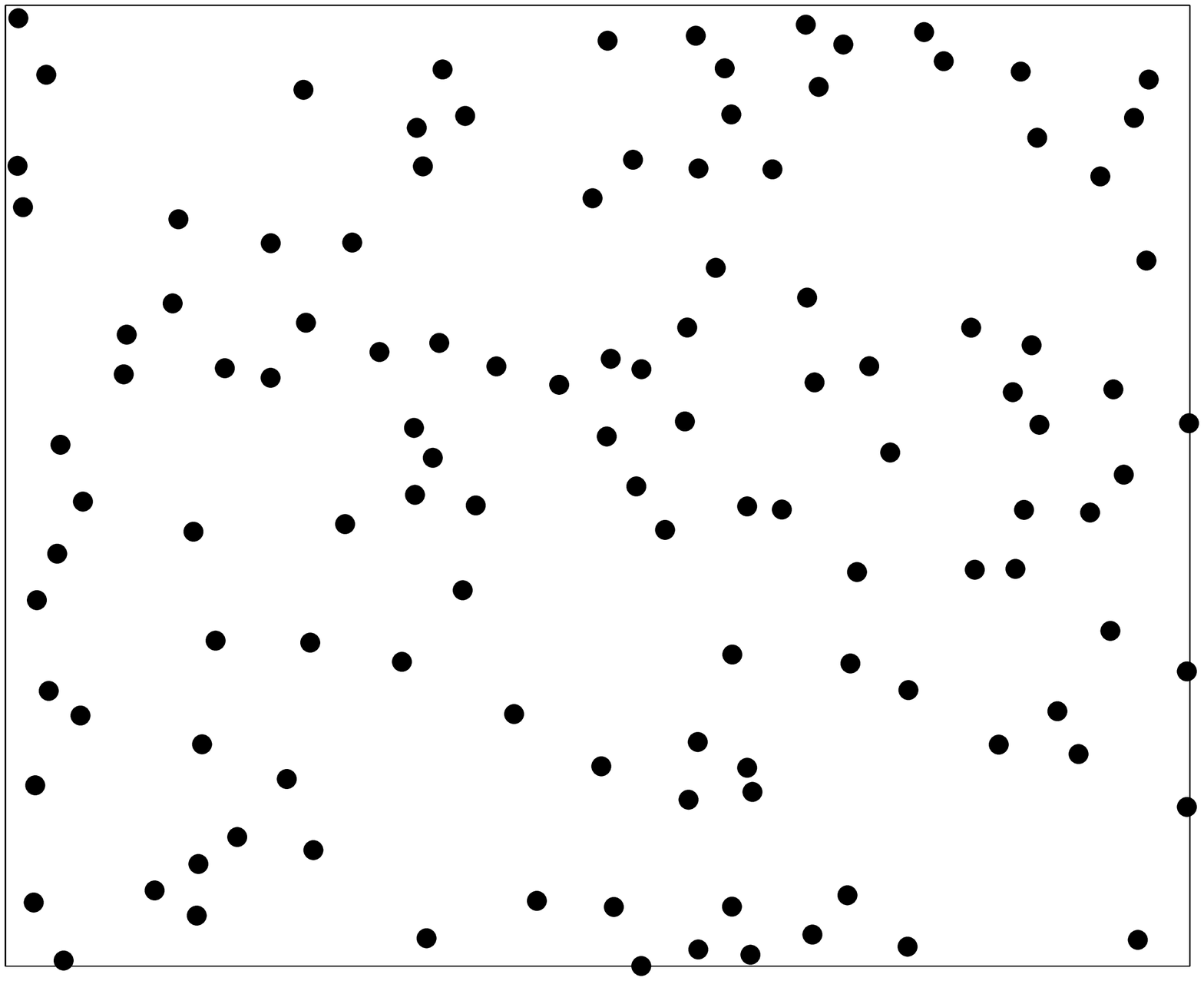}
                 \caption{Cauchy DPP}
        \end{subfigure}    
        \begin{subfigure}[b]{0.3\textwidth}
                 \includegraphics[height=1.8in, width=2in]{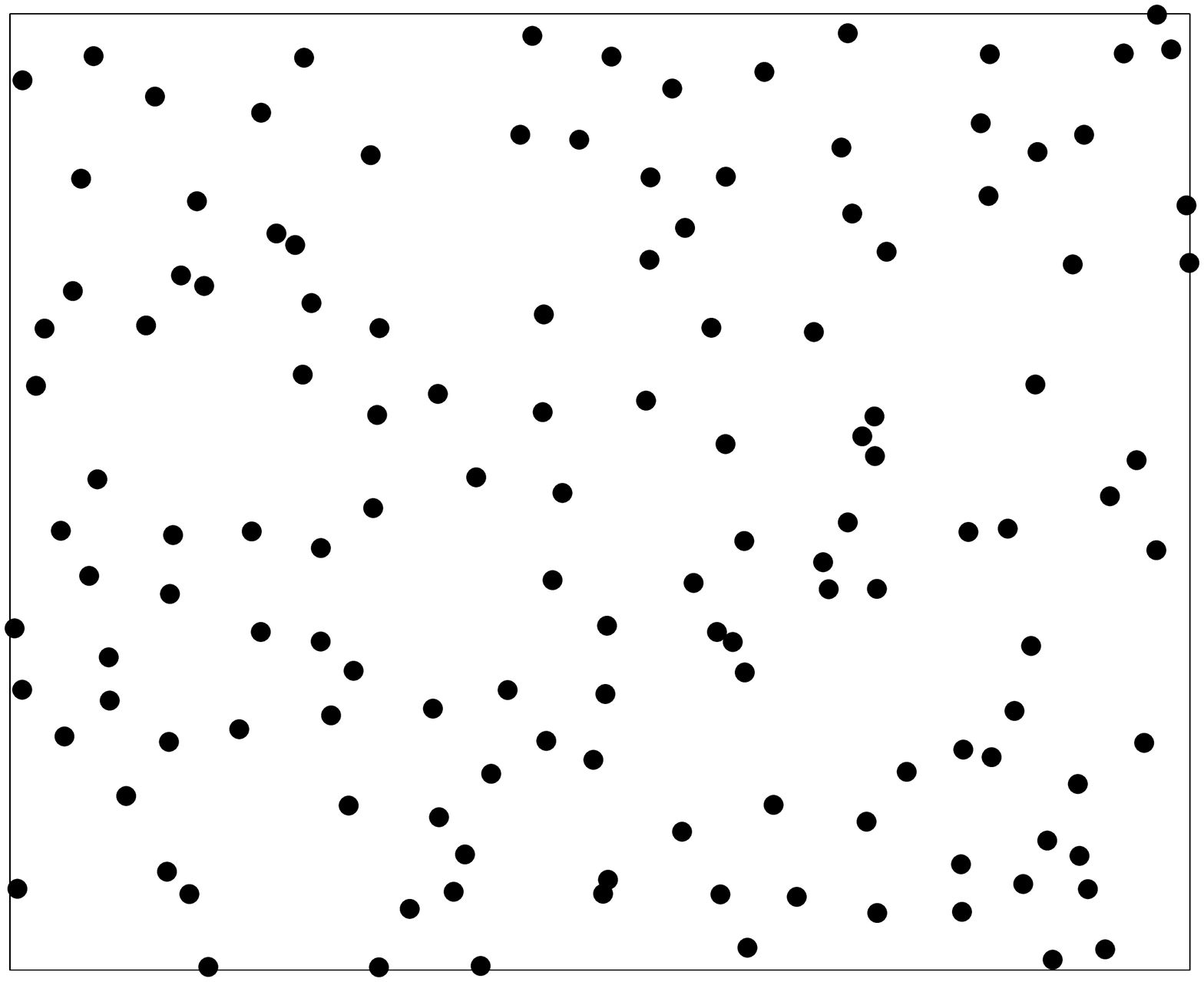}
                 \caption{Generalized Gamma DPP}
        \end{subfigure}                         
        \caption{DPP models fitted to the Houston BS deployment.}\label{BSfit}
\end{figure}

Based on the maximum likelihood (ML) estimate method which is implemented in the software package provided in~\cite{rubak}, we have summarized the estimated parameters for different DPPs fitted to the Houston and LA data set in Table~\ref{par1} and Table~\ref{par2}. Realizations of the Gauss DPP, Cauchy DPP and Generalized Gamma DPP fitted to the Houston urban area deployment are shown in Fig.~\ref{BSfit}. From these figures, it can be qualitatively observed that the fitted DPPs are regularly distributed and close to the real BS deployments. In Section~\ref{SECModelAccu}, we will rigorously validate the accuracy of these DPP models based on different summary statistics.

\begin{table}[h]
\small
\center \caption{DPP Parameters for the Houston Data Set}
\begin{tabular}{|c|c|c|c|p{60mm}|}
\hline  Model & $\lambda$ & $\alpha$ & $\nu$ \\ 
\hline  Gauss DPP  &  0.4492  & 0.8417 & $-$ \\  
\hline  Cauchy DPP & 0.4492 & 1.558 &  3.424 \\ 
\hline Generalized Gamma DPP  & 0.4492 &  2.539 & 2.63\\ 
\hline 
\end{tabular}  
\label{par1}

\center \caption{DPP Parameters for the LA Data Set}
\begin{tabular}{|c|c|c|c|p{60mm}|}
\hline  Model & $\lambda$ & $\alpha$ & $\nu$ \\ 
\hline  Gauss DPP  &  0.2347  & 1.165 & $-$ \\ 
\hline  Cauchy DPP & 0.2347 & 2.13 &  3.344 \\ 
\hline Generalized Gamma DPP  & 0.2347 &  3.446 & 2.505\\ 
\hline 
\end{tabular}  
\label{par2}
\end{table}

\section{Analyzing Cellular Networks using Determinantal Point Processes}\label{SectAnalyze}
In this section, based on the three important computational properties discussed in Section~\ref{SubSecCompProp}, we analyze several fundamental metrics for the analysis of downlink cellular networks with DPP configured BSs: (1) the empty space function; (2) the nearest neighbor function; (3) the mean interference and (4) the downlink SIR distribution.

\subsection{Empty Space Function}
The empty space function is the cumulative distribution function (CDF) of the distance from the origin to its nearest point in the point process. It is also referred to as the spherical contact distribution. Consider $\Phi \sim \text{DPP}(K)$ and let $d(o, \Phi) = \inf \{\|x\| : x \in \Phi \}$; then the empty space function $F(r)$ is defined as: $F(r) = \mathbb{P} \left(d(o, \Phi) \leq r\right)$ for $r \geq 0$~\cite{chiu2013stochastic}. 

In cellular networks, when each user is associated with its nearest BS, the empty space function provides the distribution of the distance from the typical user to its serving BS, which further dictates the statistics of the received signal power at the typical user. 

\begin{lemma}\label{ESF}
For any $\Phi \sim \text{DPP}(K)$, the empty space function $F(r)$ for $r \geq 0$ is given by:
\begin{equation}\label{ESFEq}
F(r) = \sum_{n=1}^{+ \infty} \frac{(-1)^{n-1}}{n!} \int_{\left(B(0,r)\right)^n} \det\left(K(x_i,x_j)\right)_{1 \leq i,j \leq n } {\rm d}x_1...{\rm d}x_n.
\end{equation}
\end{lemma}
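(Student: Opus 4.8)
The plan is to identify the empty space function with a void probability and then read the latter off the probability generating functional of Corollary~\ref{pgfl}. First I would observe that $\{d(o,\Phi) > r\}$ is precisely the event that $\Phi$ places no point in the closed ball $B(0,r)$, so that $1 - F(r) = \mathbb{P}(\Phi(B(0,r)) = 0)$. To express this void probability as an expected product over the configuration, take $v = \mathbbm{1}_{B^{c}(0,r)}$: the random product $\prod_{x \in \Phi} v(x)$ equals $1$ exactly when $\Phi$ avoids $B(0,r)$ (every factor is a $1$) and equals $0$ otherwise (some factor is a $0$), whence $\mathbb{P}(\Phi(B(0,r))=0) = \mathbb{E}\big[\prod_{x\in\Phi} v(x)\big] = G[v]$.

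Next I would substitute this $v$ into the pgfl expansion of Corollary~\ref{pgfl}. Since $1 - v(x) = \mathbbm{1}_{B(0,r)}(x)$, the factor $\prod_{i=1}^{n}(1 - v(x_i))$ collapses each $n$-fold integral from $(\mathbb{R}^2)^n$ to $(B(0,r))^n$, giving $G[v] = \sum_{n=0}^{\infty} \frac{(-1)^n}{n!} \int_{(B(0,r))^n} \det(K(x_i,x_j))_{1\le i,j\le n}\, {\rm d}x_1\cdots {\rm d}x_n$. The $n=0$ term is the determinant of the empty matrix, namely $1$; subtracting it and using $1 - F(r) = G[v]$ gives $F(r) = -\sum_{n=1}^{\infty}\frac{(-1)^n}{n!}\int_{(B(0,r))^n}\det(K(x_i,x_j))\,{\rm d}x_1\cdots{\rm d}x_n$, which after the sign rearrangement $-(-1)^n = (-1)^{n-1}$ is exactly~(\ref{ESFEq}).

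The step requiring care, and which I expect to be the main technical obstacle, is justifying that the expansion may be invoked for the discontinuous choice $v = \mathbbm{1}_{B^{c}(0,r)}$, i.e., that $f := -\log v$ meets the hypotheses. Here $f$ vanishes off $B(0,r)$ and equals $+\infty$ on $B(0,r)$; note first that its support $B(0,r)$ is compact, so in fact the original compactly-supported Laplace functional of Lemma~\ref{LFDPP} already applies and the extension Lemma~\ref{LFDPP2} is not even needed. The conditions hold nonetheless: (a) is immediate because $f(x)=0$ for $|x|>r$; (b) is trivial since the integrand vanishes on $\mathbb{R}^2\setminus B(0,r')$ for every $r'>r$; and (c) reduces to $\int_{B(0,r)} K(x,x)\,{\rm d}x < \infty$, which is finite because $K(x,x)$ is continuous, hence bounded, on the compact ball. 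If one prefers to avoid assigning $f$ the value $+\infty$, I would approximate $\mathbbm{1}_{B^{c}(0,r)}$ from above by a decreasing sequence of continuous functions $v_k \in [0,1]$ with $v_k \downarrow v$, apply the formula to each $v_k$ (whose $-\log v_k$ are finite and plainly satisfy (a)--(c)), and pass to the limit on both sides by monotone convergence, dominating the series term by term via the local integrability of $K$ on $B(0,r)$.
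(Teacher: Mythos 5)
Your proof is correct and takes essentially the same route as the paper's: the paper applies Lemma~\ref{LFDPP2} with $f(x) = -\log\mathbbm{1}_{\{\|x\|>r\}}$, which is exactly your pgfl substitution $v = \mathbbm{1}_{B^{c}(0,r)}$, and then performs the same sign rearrangement. Your explicit verification of conditions (a)--(c) (and the remark that compact support of $f$ already suffices for Lemma~\ref{LFDPP}) is a bit of care the paper's proof omits, but it does not change the argument.
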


\begin{proof}
Choose $f(x) = -\log\mathbbm{1}_{\{\|x\| > r\}}$ for $x \in \mathbb{R}^2$,
we have:
\allowdisplaybreaks
\begin{align*}
\mathbb{E}\left[\exp\left(-\int f(x) \Phi({\rm d} x) \right)\right] &= \mathbb{E} \left[\exp\left(- \sum_{x_i \in \Phi} -\log \mathbbm{1}_{\|x_i\|>r}\right)\right] \\
&= \mathbb{P} \left[d(o, \Phi) > r\right].
\end{align*}

Therefore, based on Lemma~\ref{LFDPP2}, the empty space function is given by:
\begin{align*}
\allowdisplaybreaks
F(r) &= 1 - \mathbb{E}\left[\exp\left(-\int f(x) \Phi({\rm d}x) \right)\right]\\
&= 1- \sum_{n=0}^{+\infty} \frac{(-1)^n}{n!} \int_{(\mathbb{R}^2)^n} \det \left(K(x_i,x_j)\right)_{1 \leq i,j \leq n} \prod_{i=1}^{n} \left(1-\exp(\log \mathbbm{1}_{\{\|x_i\| > r\}})\right) {\rm d}x_1...{\rm d}x_n \\
&= \sum_{n=1}^{+ \infty} \frac{(-1)^{n-1}}{n!} \int_{\left(B(0,r)\right)^n} \det\left(K(x_i,x_j)\right)_{1 \leq i,j \leq n } {\rm d}x_1...{\rm d}x_n.
\end{align*}
\end{proof}

Based on Lemma~\ref{ESF}, we can also characterize the probability density function (PDF) $f (r) $ of the distance from the origin to its nearest point for all stationary and isotropic DPPs $\Phi$.

\begin{corollary}\label{ESFPDFCoro}
Let $F(r)$ denote the empty space function for a stationary and isotropic DPP $\Phi$ with covariance function $K$. Then $f(r) \triangleq \frac{{\rm d} F(r)}{{\rm d}r}$ is given by:
\begin{align}\label{ESFPDFEq}
f(r) = 2\pi r \sum\limits_{n=0}^{+\infty} \frac{(-1)^n}{n!} \int_{(B(0,r))^n} \det(K(x_i,x_j))_{0 \leq i,j \leq n} \bigg|_{x_0 = (r,0)} {\rm d}x_1...{\rm d}x_n.
\end{align}
\end{corollary}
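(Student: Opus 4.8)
The plan is to differentiate the series representation of $F(r)$ from Lemma~\ref{ESF} term by term with respect to $r$. Writing $F(r) = \sum_{n=1}^{+\infty} \frac{(-1)^{n-1}}{n!} I_n(r)$ where $I_n(r) = \int_{(B(0,r))^n} \det(K(x_i,x_j))_{1 \le i,j \le n}\, {\rm d}x_1 \cdots {\rm d}x_n$, the task reduces to computing $\frac{{\rm d}}{{\rm d}r} I_n(r)$ and then relabeling indices to match the claimed form. I would first justify swapping the derivative and the infinite sum (more on this below), then focus on each $I_n(r)$.

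For a fixed $n$, I would compute $\frac{{\rm d}}{{\rm d}r} I_n(r)$ by noting that the integration domain is the $n$-fold product of the ball $B(0,r)$, so the $r$-dependence is entirely in the domain boundary. The clean way is to pass to polar/spherical shells: by the layer-cake or co-area structure of the product domain, enlarging each ball from radius $r$ to $r + {\rm d}r$ adds a thin shell in each of the $n$ coordinates. Since the integrand is symmetric in $x_1,\dots,x_n$, each of the $n$ coordinates contributes equally, giving a factor $n$ that cancels the $n$ in $\frac{1}{n!}$ to leave $\frac{1}{(n-1)!}$. Concretely, differentiating the $n$-fold product domain yields
\begin{equation*}
\frac{{\rm d}}{{\rm d}r} I_n(r) = n \int_{\partial B(0,r)} \left( \int_{(B(0,r))^{n-1}} \det(K(x_i,x_j))_{1 \le i,j \le n}\, {\rm d}x_2 \cdots {\rm d}x_n \right) {\rm d}\sigma(x_1),
\end{equation*}
where ${\rm d}\sigma$ is the arc-length measure on the circle of radius $r$. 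Here I would use isotropy: the inner integral, as a function of the boundary point $x_1$, is invariant under rotations of $x_1$ (since rotating all coordinates simultaneously leaves both $\det K$ and the ball $B(0,r)$ unchanged, and then integrating out $x_2,\dots,x_n$ over the rotation-invariant domain removes the dependence on the remaining angle). Hence the integrand over $\partial B(0,r)$ is constant in the angle of $x_1$, and I may replace $x_1$ by the representative point $(r,0)$ and multiply by the circumference $2\pi r$.

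After this reduction, shifting the index by relabeling $x_1 \mapsto x_0 = (r,0)$ and $x_{k+1} \mapsto x_k$ for the remaining variables, the term becomes $2\pi r \cdot \frac{1}{(n-1)!} \int_{(B(0,r))^{n-1}} \det(K(x_i,x_j))_{0 \le i,j \le n-1}\big|_{x_0 = (r,0)}\, {\rm d}x_1 \cdots {\rm d}x_{n-1}$. Re-indexing the outer sum (replacing $n-1$ by $n$ so the sum runs from $n=0$) converts the alternating sign $(-1)^{n-1}$ into $(-1)^n$ and produces exactly the claimed expression~(\ref{ESFPDFEq}).

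The main obstacle I expect is the analytic justification for differentiating the infinite series term by term and for differentiating under the domain integral. For the latter, I would appeal to continuity of the kernel $K$ (assumed throughout) so that each $I_n$ is $C^1$ in $r$ with the boundary-integral derivative above. For the former, I would need a dominated-convergence or uniform-convergence argument establishing that $\sum_n \frac{(-1)^{n-1}}{n!} I_n'(r)$ converges uniformly on compact $r$-intervals; this follows from the same Hadamard-type determinant bounds (e.g.\ $|\det(K(x_i,x_j))| \le \prod_i K(x_i,x_i)$ for a non-negative definite kernel, giving a factorial-beating bound via $K(x_i,x_i) = \lambda$ in the stationary case) that guarantee convergence of the series for $F$ itself in Lemma~\ref{LFDPP2}. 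Once term-by-term differentiation is licensed, the geometric computation and reindexing are routine.
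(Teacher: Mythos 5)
Your proposal is correct and follows essentially the same route as the paper's proof: term-by-term differentiation of the series from Lemma~\ref{ESF}, a symmetry argument on the determinant producing the factor $n$ that converts $1/n!$ into $1/(n-1)!$, isotropy to replace the boundary integral by $2\pi r$ times the value at $x_1=(r,0)$, and a Hadamard-bound uniform-convergence argument (the paper invokes Rudin's theorem on differentiating uniformly convergent sequences, its Lemma~\ref{RubinLemma}) to license the interchange. The paper merely formalizes your ``differentiate the moving product domain'' step via Stieltjes integrals and Dirac measures rather than a direct Leibniz/co-area argument, but the substance is identical.
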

\begin{proof}
The proof is provided in Appendix~\ref{ESPPDFCoroProof}. 
\end{proof}


\subsection{Nearest Neighbor Function}
The nearest neighbor function gives the distribution of the distance from the typical point of a point process to its nearest neighbor in the same point process. For all stationary DPPs $\Phi$, the nearest neighbor function can be defined based on the reduced Palm distribution of $\Phi$ as: $D(r) = \mathbb{P}_{o}^{!} (d(o,\Phi) \leq r )$~\cite{chiu2013stochastic}. 

In cellular networks, the nearest neighbor function provides the distribution of the distance from a typical BS to its nearest neighboring BS, which can be used as a metric to indicate the clustering/repulsive behavior of the network. Specifically, compared to the PPP, a regularly deployed network corresponds to a larger nearest neighbor function, while a clustered network corresponds to a smaller nearest neighbor function. Therefore, when each user is associated with its nearest BS, the dominant interferers in regularly deployed networks are farther from the serving BS than a completely random network. 

\begin{lemma}\label{NN}
For any $\Phi \sim \text{DPP}(K)$ defined on $\mathbb{R}^2$, its nearest neighbor function $D(r)$ is given by:
\begin{equation}\label{NNEq}
D(r) = \sum_{n=1}^{+ \infty} \frac{(-1)^{n-1}}{n !} \int_{\left(B(0,r)\right)^n}\det\left(K_{o}^{!} (x_i,x_j)\right)_{1 \leq i,j \leq n} {\rm d}x_1...{\rm d}x_n,
\end{equation}
\end{lemma}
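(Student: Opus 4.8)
The plan is to recognize the nearest neighbor function $D(r)$ as nothing other than the empty space function of $\Phi$ computed under its reduced Palm distribution at the origin, and then to reduce the claim to Lemma~\ref{ESF} applied to the Palm version of the process. Indeed, by the definition recalled immediately above the statement, $D(r) = \mathbb{P}_{o}^{!}(d(o,\Phi) \leq r)$, which is exactly the probability that, under the law $\mathbb{P}_{o}^{!}$, the ball $B(0,r)$ contains at least one point of $\Phi$. This is structurally identical to the empty space function $F(r)$ of Lemma~\ref{ESF}, the only difference being that the governing law is the reduced Palm law rather than the stationary law of $\Phi$.

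The first step is to invoke Lemma~\ref{PalmDPP}. Since $\Phi$ is stationary, its first order product density is the constant intensity $K(o,o) = K_{0}(0) = \lambda > 0$, so the hypothesis $K(x_0,x_0) > 0$ of Lemma~\ref{PalmDPP} is met at $x_0 = o$ (for a stationary kernel it holds at every point, so the ``Lebesgue almost all'' qualifier is vacuous here). Consequently, under $\mathbb{P}_{o}^{!}$ the process $\Phi$ coincides in distribution with a genuine DPP whose kernel is $K_{o}^{!}$ given by~(\ref{PalmEqn}), and in particular the existence of this Palm version is already guaranteed by that lemma. The second step is then purely mechanical: I would apply Lemma~\ref{ESF} verbatim to this Palm DPP, substituting $K_{o}^{!}$ for $K$. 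Taking the same test function $f(x) = -\log \mathbbm{1}_{\{\|x\| > r\}}$ as in the proof of Lemma~\ref{ESF}, the Laplace-functional quantity $\mathbb{E}[\exp(-\int f \, \Phi({\rm d}x))]$ evaluated under $\mathbb{P}_{o}^{!}$ equals $\mathbb{P}_{o}^{!}(d(o,\Phi) > r) = 1 - D(r)$, and the series of Lemma~\ref{ESF} then delivers~(\ref{NNEq}) directly with $K_{o}^{!}$ in place of $K$.

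The only point that genuinely requires care is the justification that the Laplace-functional expansion of Lemma~\ref{LFDPP2}, on which Lemma~\ref{ESF} rests, may legitimately be applied to the Palm kernel $K_{o}^{!}$. I do not expect this to be a real obstacle: the test function $f(x) = -\log \mathbbm{1}_{\{\|x\| > r\}}$ vanishes outside $B(0,r)$, so $f$ has compact support and the more elementary Lemma~\ref{LFDPP} already suffices, its integrability hypotheses being trivially satisfied because $1 - \exp(-f(x))$ is just the bounded indicator of the compact ball $B(0,r)$. Thus no decay conditions on $K_{o}^{!}$ at infinity are invoked, and the chain of substitutions---from the Palm identity of Lemma~\ref{PalmDPP} to the empty-space series of Lemma~\ref{ESF}---goes through cleanly to yield the stated formula.
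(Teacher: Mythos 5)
Your proposal is correct and follows exactly the paper's own route: identify $D(r)$ as the empty space function of the Palm version of $\Phi$, use Lemma~\ref{PalmDPP} to replace that Palm version by the DPP with kernel $K_{o}^{!}$, and apply Lemma~\ref{ESF} to it. Your additional remark that the test function has compact support (so that Lemma~\ref{LFDPP} already suffices for the Palm kernel) is a worthwhile clarification the paper leaves implicit.
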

where $K_{o}^{!} (x,y)$ is: 
\begin{equation}\label{PalmKernel}
K_{o}^{!} (x,y) = \frac{1}{K(0,0)} \det\left(\begin{array}{cc}
K(x,y) & K(x,0) \\ 
K(0,y) & K(0,0)
\end{array} \right).
\end{equation}

\begin{proof}
Denote $\tilde{\Phi} \sim \text{DPP}(K_{o}^{!} (x,y))$; then it follows from Lemma~\ref{PalmDPP} that:
\begin{equation*}
\mathbb{P}_{o}^{!} (d(o,\Phi) \leq r ) = \mathbb{P} (d(o,\tilde{\Phi}) \leq r).
\end{equation*}
Therefore, the proof can be concluded by applying Lemma~\ref{ESF} to the DPP $\tilde{\Phi}$.
\end{proof}

\subsection{Interference Distribution}\label{SubSecIntDist}
In this section, we analyze properties of shot noise fields associated with a DPP. Our aim is to evaluate interference in cellular networks under two BS association schemes. Firstly, the BS to which the typical user is associated is assumed to be at an arbitrary but fixed location\footnote{This simple conditional interference scenario provides fundamental understanding of interference in wireless networks with DPP configured nodes. The results in this case can be extended to ad-hoc networks as well.}. We show that in this case, the mean interference is easy to characterize with DPP configured BSs. Secondly, each user is assumed to be associated with its nearest BS. In this case, we derive the Laplace transform of interference. 

Throughout this part, the cellular BSs are assumed to be distributed according to a stationary and isotropic DPP $\Phi \sim \text{DPP}(K)$, while the mobile users are uniformly distributed and independent of the BSs. Since $\Phi$ is invariant under translations, we focus on the performance of the typical user which can be assumed to be located at the origin. The location for the serving BS of the typical user is denoted by $x_0$. Each BS $x \in \Phi$ has single transmit antenna with transmit power $P$, and it is associated with an independent mark $h_x$ which represents the small scale fading effects between the BS and the typical user. Independent Rayleigh fading channels with unit mean are assumed, which means $h_x \sim \exp(1)$ for $\forall x \in \Phi$. The shadowing effects are neglected, and the thermal noise power is assumed to be 0, i.e., negligible compared to interference power. In addition, the path loss function is denoted by $\mathit{l}(x): \mathbb{R}^2 \mapsto \mathbb{R}^{+}$, which is a non-increasing function with respect to (w.r.t.) the norm of $x$. 


\subsubsection{Interference with fixed associated BS scheme} Since $\Phi$ is invariant under translation and rotation, we assume the typical user located at the origin is served by the base station at $x_{0} = (r_0,0)$, where $r_0$ denotes the distance from the origin to $x_0$. 
Conditionally on $x_{0} \in \Phi$ being the serving BS, the interference at the origin is:
$I = \sum_{x_{i} \in \Phi \backslash x_{0}} P h_{x_{i}} \mathit{l}(x_{i}) $. 
\begin{lemma}~\label{MeanInt}
Given $x_{0} = (r_0,0)$ is the serving BS for the typical user located at the origin, the mean interference seen by this typical user is: 
\begin{equation}\label{meanInt1}
\mathbb{E}[I |  x_{0} = (r_0,0)] = P \int_{\mathbb{R}^2} K_{x_{0}}^{!} (x,x) \mathit{l} (x) {\rm d}x,
\end{equation}
where $K_{x_{0}}^{!}(\cdot,\cdot)$ is given in~(\ref{PalmEqn})\footnote{This lemma can be seen as a general property of the shot noise field $I$ created by a DPP, since it holds for all function $\mathit{l}(\cdot)$.}. 
\end{lemma}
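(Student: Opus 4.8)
The plan is to compute the conditional mean interference directly by using the reduced Palm characterization of the DPP together with the first-order moment (intensity) formula for the Palm process. Conditioning on $x_0 = (r_0,0) \in \Phi$ being the serving BS, Lemma~\ref{PalmDPP} tells us that the remaining points $\Phi \setminus \{x_0\}$ form another DPP with kernel $K_{x_0}^{!}$, whose intensity at a point $x$ is precisely its diagonal value $K_{x_0}^{!}(x,x)$. The quantity of interest is $\mathbb{E}[I \mid x_0 = (r_0,0)]$ where $I = \sum_{x_i \in \Phi \setminus x_0} P\, h_{x_i}\, \mathit{l}(x_i)$, a marked shot-noise field with i.i.d. marks $h_{x_i} \sim \exp(1)$ that are independent of the ground process.

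First I would take the conditional expectation and exploit the independence of the fading marks from the point locations, so that $\mathbb{E}[h_{x_i}] = 1$ factors out and the mean reduces to a sum over the points of the Palm process: $\mathbb{E}[I \mid x_0] = P\, \mathbb{E}_{x_0}^{!}\!\left[\sum_{x \in \Phi} \mathit{l}(x)\right]$, where the expectation is under the reduced Palm distribution at $x_0$. Next I would apply the first moment formula (Campbell's theorem) to this sum: for any point process with intensity $\rho^{(1)}(\cdot)$, we have $\mathbb{E}\left[\sum_{x} \mathit{l}(x)\right] = \int_{\mathbb{R}^2} \mathit{l}(x)\, \rho^{(1)}(x)\,{\rm d}x$. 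Since the Palm process is $\text{DPP}(K_{x_0}^{!})$, its first-order product density is the diagonal $\rho_{x_0}^{(1)}(x) = K_{x_0}^{!}(x,x)$, which follows from Definition~\ref{DPPdefnition} applied with $n=1$ (and is consistent with the relation~(\ref{PalmRelation})). Substituting gives $\mathbb{E}[I \mid x_0] = P \int_{\mathbb{R}^2} K_{x_0}^{!}(x,x)\, \mathit{l}(x)\,{\rm d}x$, which is exactly~(\ref{meanInt1}).

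The main technical subtlety, and the step I would handle most carefully, is justifying the interchange of expectation and the infinite sum over the points of the shot-noise field, i.e., verifying that Campbell's formula applies and the integral is well-defined (finite or at least meaningfully $+\infty$). This is governed by the integrability of $\mathit{l}(x)$ against the Palm intensity $K_{x_0}^{!}(x,x)$; because $\mathit{l}$ is non-increasing in $\|x\|$ and the DPP intensity is bounded by the stationary value $K(0,0)$, the finiteness hinges on the tail decay of the path-loss function, but the lemma is stated as an identity that holds regardless, so no integrability hypothesis on $\mathit{l}$ is strictly needed for the formula itself. I would also note that factoring out $\mathbb{E}[h_{x_i}]=1$ requires only the independence of marks from locations and from each other, which is exactly the independently marked structure of Lemma~\ref{LFMarkDPPCoro}; indeed, an alternative route is to differentiate the marked Laplace functional at $0$, but the direct Palm-plus-Campbell argument is cleaner and avoids the convergence issues inherent in the Laplace-functional series.
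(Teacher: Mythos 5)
Your proposal is correct and follows essentially the same route as the paper: condition via Lemma~\ref{PalmDPP} so the interferers form $\text{DPP}(K_{x_0}^{!})$, then apply Campbell's theorem with the Palm intensity $K_{x_0}^{!}(x,x)$ and use that the unit-mean fading marks integrate out. The only cosmetic difference is that the paper applies Campbell's formula to the marked process and then integrates out $h$, whereas you factor out $\mathbb{E}[h]=1$ first; these are equivalent.
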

\allowdisplaybreaks[1]
\begin{proof}
From Lemma~\ref{PalmDPP}, the mean interference can be expressed as:
\begin{align}
\mathbb{E}[ \sum_{x_{i} \in \Phi \backslash x_{0}}P h_{x_{i}} \mathit{l}(x_{i})|  x_{0} = (r_0,0)] =&\mathbb{E}[\sum_{x_{i} \in \tilde{\Phi}} P h_{x_{i}} \mathit{l}(x_{i})] \nonumber\\
\overset{(a)}{=}& P \int_{\mathbb{R}^2} \int_{\mathbb{R}^{+}} h \mathit{l}(x) K_{x_0}^{!} (x,x) \exp(-h) {\rm d}h {\rm d}x \nonumber \\
=& P\int_{\mathbb{R}^2} K_{x_0}^{!} (x,x) \mathit{l} (x) {\rm d}x \nonumber,
\end{align}
where $\tilde{\Phi} \sim \text{DPP} (K_{x_0}^{!})$ follows from Lemma~\ref{PalmDPP}, and (a) follows from Campbell's theorem. 
\end{proof}


In fact, all the higher order moment measures of the interference can be calculated similarly based on 
Definition~\ref{DPPdefnition} and Lemma~\ref{PalmDPP}. 
\subsubsection{Interference with nearest BS association scheme} In this part, we consider the BS association scheme where each user is served by its nearest BS. In single tier cellular networks, the nearest BS association scheme provides the highest average received power for each user. 

For a user located at $y \in \mathbb{R}^2$, its associated BS is denoted by $x^{*}(y) = \underset{x \in \Phi}{\operatorname{argmin}} \|x-y\|$. Consider the typical user located at the origin and its associated BS $x^{*}(0)$. The interference at the typical user is then given by $I = \sum\limits_{x_{i} \in \Phi \backslash x^{*}(0)}P h_{x_i} \mathit{l}(x_{i})$, where $h_{x_i} \sim \exp(1)$ denotes the Rayleigh fading variable from $x_i$ to the origin. 
In the next theorem, we provide the general result which characterizes the Laplace transform of interference conditional on the position of the BS nearest to the typical user. 

\begin{theorem}\label{LTIntThm}
Conditionally on $x^*(0) = x_0$ being the serving BS of the typical user at the origin, if $f(x,h_x) = sP h_x \mathit{l}(x) \mathbbm{1}_{|x| \geq r_0} - \log \mathbbm{1}_{|x| \geq r_0}$ satisfies the conditions in Lemma~\ref{LFMarkDPPCoro}, then the Laplace transform of the interference at the typical user is:
\begin{equation}
\small
\mathbb{E}[e^{-sI}| x^{*}(0) = x_0] = \frac{\sum\limits_{n=0}^{+\infty} \frac{(-1)^n}{n!}\int_{(\mathbb{R}^2)^n} \det(K_{x_0}^{!}(x_i,x_j))_{1\leq i,j \leq n} \prod\limits_{i=1}^{n}[1- \frac{\mathbbm{1}_{|x_i| \geq r_0}}{1+sP\mathit{l}(x_i)}] {\rm d}x_1...{\rm d}x_n}{\sum\limits_{n=0}^{+\infty} \frac{(-1)^n}{n!} \int_{B(0,r_0)^n} \det(K_{x_0}^{!}(x_i,x_j))_{1 \leq i,j \leq n}{\rm d}x_1...{\rm d}x_n },
\end{equation}
where $r_0 = |x^{*}(0)|$ and $K_{x_0}^{!}(\cdot,\cdot)$ is given in~(\ref{PalmEqn}).
\end{theorem}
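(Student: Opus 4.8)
The plan is to recognize that conditioning on $x^{*}(0) = x_0$ is strictly stronger than conditioning on $x_0 \in \Phi$: it also requires that no point of $\Phi$ other than $x_0$ lie closer to the origin, i.e. that $\Phi \setminus \{x_0\}$ be void on the ball $B(0,r_0)$ with $r_0 = |x_0|$. Accordingly, I would split the conditioning into a Palm part (presence of a point at $x_0$) and a void part (that point being the nearest), and then evaluate the resulting constrained Laplace transform as a ratio of two series.

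First I would apply Lemma~\ref{PalmDPP}: under the reduced Palm distribution at $x_0$, the interferer configuration $\Phi \setminus \{x_0\}$ is a DPP $\tilde{\Phi} \sim \text{DPP}(K_{x_0}^{!})$ with kernel~(\ref{PalmEqn}). Relative to $\tilde{\Phi}$, the event $\{x^{*}(0) = x_0\}$ is precisely the void event $\{\tilde{\Phi}(B(0,r_0)) = 0\}$, the bounding sphere being negligible. Since the marks $h_x \sim \exp(1)$ are independent of the ground process and remain i.i.d. under Palm conditioning, and since $I$ is a functional of $\tilde{\Phi}$ and its marks, Bayes' rule yields the ratio
\begin{equation*}
\mathbb{E}[e^{-sI} \mid x^{*}(0) = x_0] = \frac{\mathbb{E}_{\tilde{\Phi}}\!\left[ e^{-sI}\, \mathbbm{1}_{\{\tilde{\Phi}(B(0,r_0)) = 0\}} \right]}{\mathbb{P}\!\left( \tilde{\Phi}(B(0,r_0)) = 0 \right)},
\end{equation*}
which already has the shape of the claimed expression.

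The denominator is the void probability of $\tilde{\Phi}$ over $B(0,r_0)$, namely $\mathbb{P}(d(o,\tilde{\Phi}) > r_0) = 1 - F_{\tilde{\Phi}}(r_0)$; applying Lemma~\ref{ESF} to $\tilde{\Phi}$ and taking the complement (the $n = 0$ term contributing $1$) returns the denominator series with kernel $K_{x_0}^{!}$. For the numerator I would write the void indicator as the product $\prod_{x_i \in \tilde{\Phi}} \mathbbm{1}_{\{|x_i| \geq r_0\}}$ and absorb it into the fading sum, so that the numerator equals the marked Laplace functional $L_{\tilde{\Phi}}(f)$ for exactly the $f$ in the statement. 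One checks $e^{-f(x,h)} = e^{-sP h\, \mathit{l}(x)} \mathbbm{1}_{\{|x| \geq r_0\}}$, whence $\prod_i e^{-f(x_i,h_i)} = e^{-sI}\,\mathbbm{1}_{\{\tilde{\Phi}(B(0,r_0)) = 0\}}$ as required; integrating $e^{-f(x,p)}$ against the $\exp(1)$ law $F({\rm d}p) = e^{-p}\,{\rm d}p$ gives $\mathbbm{1}_{\{|x| \geq r_0\}}/(1 + sP \mathit{l}(x))$, and substituting into Lemma~\ref{LFMarkDPPCoro} (with kernel $K_{x_0}^{!}$) produces the per-point factor $1 - \mathbbm{1}_{\{|x_i| \geq r_0\}}/(1 + sP \mathit{l}(x_i))$ and the numerator verbatim. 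As a consistency check, setting $s = 0$ collapses this factor to $\mathbbm{1}_{\{|x_i| < r_0\}}$, so numerator and denominator coincide and the ratio equals $1$, as it must.

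The main obstacle is the first, conceptual step: making rigorous that conditioning on the exact location of the nearest BS factors into the reduced Palm distribution at $x_0$ composed with conditioning on the void event over $B(0,r_0)$. This is a conditioning on a measure-zero (point-location) event, so the clean Bayes ratio must be interpreted through the regular conditional distribution supplied by Palm theory, and one must confirm that the fading marks stay i.i.d. $\exp(1)$ under this joint conditioning. Once this decomposition is granted, the rest is bookkeeping---matching the series term by term and invoking the integrability hypotheses on $f$, which the theorem assumes outright via Lemma~\ref{LFMarkDPPCoro}.
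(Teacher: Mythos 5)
Your proposal is correct and follows essentially the same route as the paper's proof: both decompose the conditioning on $x^{*}(0)=x_0$ into the reduced Palm distribution at $x_0$ (invoking Lemma~\ref{PalmDPP} to pass to $\tilde{\Phi}\sim\text{DPP}(K_{x_0}^{!})$) composed with the void event on the ball of radius $r_0$, apply Bayes' rule to obtain the ratio, compute the denominator as the void-probability series, and evaluate the numerator by absorbing the void indicator into the marked Laplace functional of Lemma~\ref{LFMarkDPPCoro} with the stated $f$, integrating out the $\exp(1)$ fading to get the factor $1-\mathbbm{1}_{\{|x_i|\geq r_0\}}/(1+sP\mathit{l}(x_i))$. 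The only cosmetic difference is that the paper is explicit about using the open ball $B^{o}(0,r_0)$ (since $x_0$ lies on its boundary), which your remark that the bounding sphere is negligible already covers.
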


\begin{proof}
Denote $\tilde{\Phi} \sim \text{DPP}(K_{x_0}^{!})$, we have:
\allowdisplaybreaks
\begin{align}
\mathbb{E}[\exp(-sI) | x^*(0)=x_0] \nonumber = &\mathbb{E}[\exp(-sI) | x_0 \in \Phi, \Phi(B^o(0,r_0))=0] \nonumber\\
\overset{(a)}{=} & \mathbb{E}_{x_0}^{!} [\exp(-s \sum_{x_i \in \Phi \cap B^c(0,r_0)} P h_{x_i} \mathit{l}(x_i) ) | \Phi(B^{o}(0,r_0))=0] \nonumber \\
 \overset{(b)}{=} & \mathbb{E} [\exp(-s \sum_{x_i \in \tilde{\Phi} \cap B^c(0,r_0)} P h_{x_i} \mathit{l}(x_i) ) \mathbbm{1}_{\tilde{\Phi} (B^{o}(0,r_0))=0}] / \mathbb{P}[\tilde{\Phi} (B^{o}(0,r_0))=0]~\label{LTIntEq1},
\end{align}
where (a) follows from the Bayes' rule, 
and the fact that conditionally on $x_0 \in \Phi$, $(\Phi-\delta_{x_0})(B^{o}(0,r_0))=0$ is equivalent to $\Phi(B^{o}(0,r_0))=0$ since $x_0$ lies on the boundary of the open ball $B^{o}(0,r_0)$. In addition, (b) follows from the fact that for all random variables $X$ and events $A$, $\mathbb{E}[X|A] = \frac{\mathbb{E}[X\mathbbm{1}_A]}{\mathbb{P}(A)}$.

Next, it is clear that the denominator in~(\ref{LTIntEq1}) is given by:
\allowdisplaybreaks
\begin{align}~\label{LTIntEq2}
\allowdisplaybreaks
 \mathbb{P}[\tilde{\Phi} (B^{o}(0,r_0))=0] 
 &=\sum\limits_{n=0}^{+\infty} \frac{(-1)^n}{n!} \int_{B(0,r_0)^n} \det(K_{x_0}^{!}(x_i,x_j))_{1 \leq i,j \leq n}{\rm d}x_1...{\rm d}x_n.
\end{align}

The numerator in~(\ref{LTIntEq1}) is calculated as:
\allowdisplaybreaks
\begin{align}
\allowdisplaybreaks
&\mathbb{E} [\exp(-s \sum_{x_i \in \tilde{\Phi} \cap B^c(0,r_0)}P h_{x_i} \mathit{l}(x_i) ) \mathbbm{1}_{\tilde{\Phi} (B^{o}(0,r_0))=0}] \nonumber\\
\overset{(a)}{=} &\sum\limits_{n=0}^{+\infty} \frac{(-1)^n}{n!} \int_{(\mathbb{R}^2)^n} \int_{(\mathbb{R}^{+})^n} \det(K_{x_0}^{!}(x_i,x_j))_{1 \leq i,j \leq n} \prod\limits_{i=1}^{n} [(1- \nonumber\\
&\exp(-s P h_{x_i} \mathit{l}(x_i)\mathbbm{1}_{|x_i| \geq r_0} + \log\mathbbm{1}_{|x_i| \geq r_0})) \exp(-h_{x_i}) {\rm d}h_i ] {\rm d}x_1...{\rm d}x_n \nonumber\\
= & \sum\limits_{n=0}^{+\infty} \frac{(-1)^n}{n!}\int_{(\mathbb{R}^2)^n} \det(K_{x_0}^{!}(x_i,x_j))_{1\leq i,j \leq n} \prod\limits_{i=1}^{n}[1- \frac{\mathbbm{1}_{|x_i| \geq r_0}}{1+sP\mathit{l}(x_i)}] {\rm d}x_1...{\rm d}x_n,~\label{LTIntEq3}
\end{align}
where (a) is obtained from Lemma~\ref{LFMarkDPPCoro}.
Finally, substituting~(\ref{LTIntEq2}) and~(\ref{LTIntEq3}) into~(\ref{LTIntEq1}) yields the result. 
\end{proof}

\begin{remark}
In contrast with what happens in the PPP case, because of the repulsion among DPP points, $\Phi \cap B^c(0,r_0)$ and $\Phi \cap B^o(0,r_0)$ are not independent.
\end{remark}

\begin{remark}
If $\Phi$ is a stationary PPP with intensity $\lambda$, then by substituting $\det(K(x_i,x_j))_{1 \leq i,j \leq n} = \det(K_{x_0}^{!}(x_i,x_j))_{1 \leq i,j \leq n} = \lambda^n$, Theorem~\ref{LTIntThm} gives the Laplace transform of the interference at the typical user to be:
\begin{equation*}
\mathbb{E}[e^{-sI} | x^*(0)= x_0] = \exp\left(-\lambda \int_{B^c(0,r_0)} (1-\frac{1}{1+sP\mathit{l}(x)}){\rm d}s\right),
\end{equation*}
which is consistent with (12) in~\cite{trac}. 
\end{remark}

Since the Laplace transform fully characterizes the probability distribution, many important performance metrics can be derived using Theorem~\ref{LTIntThm}. Specifically, the next lemma gives the mean interference under the nearest BS association scheme. 

\begin{lemma}~\label{MeanIntNearestBSLemma}
The mean interference at the typical user conditional on $x^*(0) = x_0$ is:
\begin{equation}~\label{MeanIntx0}
\mathbb{E}[I|x^*(0) = x_0] = \frac{\sum\limits_{n=0}^{+\infty} \frac{(-1)^{n}}{n!}  \int_{(B(0,r_0))^{n}} \int_{B^{c}(0,r_0)} \det(K_{x_0}^{!} (x_i,x_j))_{1\leq i,j \leq n+1}P \mathit{l}(x_1) {\rm d}x_1...{\rm d}x_{n+1}}{ \sum\limits_{n=0}^{+\infty}\frac{(-1)^n}{n!} \int_{B(0,r_0)^n} \det(K_{x_0}^{!}(x_i,x_j))_{1 \leq i,j \leq n}{\rm d}x_1...{\rm d}x_n},
\end{equation}
where $r_0 = |x_0|$. 
\end{lemma}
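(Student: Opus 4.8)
The plan is to obtain the mean interference by differentiating the conditional Laplace transform of Theorem~\ref{LTIntThm} at $s=0$. Since $I\geq 0$ conditionally on $x^*(0)=x_0$, we have $\mathbb{E}[I\mid x^*(0)=x_0] = -\frac{\mathrm{d}}{\mathrm{d}s}\mathbb{E}[e^{-sI}\mid x^*(0)=x_0]\big|_{s=0^+}$, provided the mean is finite and we may differentiate under the expectation. Writing the transform as a ratio $N(s)/D$, where $N(s)$ is the numerator and $D$ the denominator appearing in Theorem~\ref{LTIntThm}, the key structural observation is that $D=\mathbb{P}[\tilde\Phi(B^o(0,r_0))=0]$ carries no $s$-dependence, so $\mathbb{E}[I\mid x^*(0)=x_0] = -N'(0)/D$. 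Setting $s=0$ in each factor $1-\frac{\mathbbm{1}_{|x_i|\geq r_0}}{1+sP\mathit{l}(x_i)}$ yields $\mathbbm{1}_{|x_i|<r_0}$, which collapses $N(0)$ onto $D$ and confirms the transform equals one at the origin, a useful consistency check.

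Next I would differentiate $N(s)$ term by term. All $s$-dependence sits in the product $\prod_{i=1}^n\big[1-\frac{\mathbbm{1}_{|x_i|\geq r_0}}{1+sP\mathit{l}(x_i)}\big]$, so the product rule yields a sum over $k=1,\dots,n$ of the product with its $k$-th factor replaced by its derivative $\frac{\mathbbm{1}_{|x_k|\geq r_0}P\mathit{l}(x_k)}{(1+sP\mathit{l}(x_k))^2}$. Evaluating at $s=0$, the differentiated factor becomes $\mathbbm{1}_{|x_k|\geq r_0}P\mathit{l}(x_k)$ while every remaining factor becomes $\mathbbm{1}_{|x_i|<r_0}$. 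Thus, inside each $n$-fold integral, the $k$-th summand restricts $x_k$ to $B^c(0,r_0)$ and all other arguments to $B(0,r_0)$, weighted by $P\mathit{l}(x_k)$.

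The determinant $\det(K_{x_0}^{!}(x_i,x_j))_{1\leq i,j\leq n}$ is invariant under any simultaneous permutation of its arguments $x_1,\dots,x_n$, since a permutation acts identically on rows and columns. Hence all $n$ summands contribute the same integral, and the sum over $k$ equals $n$ times the $k=1$ term. Combining $\frac{n}{n!}=\frac{1}{(n-1)!}$, reindexing $n\mapsto n+1$, and tracking the sign via $-\frac{(-1)^{n+1}}{n!}=\frac{(-1)^n}{n!}$ turns $-N'(0)$ into precisely the numerator of~(\ref{MeanIntx0}); dividing by $D$ completes the derivation.

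The main obstacle is analytic rather than algebraic: justifying the interchange of the derivative with the infinite series and the Lebesgue integrals, and guaranteeing the mean is finite. I would control this using the hypotheses inherited from Theorem~\ref{LTIntThm} and Lemma~\ref{LFMarkDPPCoro}, together with the fact that $K_{x_0}^{!}$ is the kernel of an existing DPP, so its diagonal is integrable against $\mathit{l}$ on $B^c(0,r_0)$ provided the path loss decays faster than $|x|^{-2}$. A dominated-convergence argument applied to the difference quotients, uniformly in a neighborhood of $s=0$, then legitimizes differentiating under both the sum and the integral.
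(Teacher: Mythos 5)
Your proposal is correct and follows essentially the same route as the paper's proof in Appendix~\ref{MeanIntNearestBSAppdix}: differentiate the conditional Laplace transform of Theorem~\ref{LTIntThm} at $s=0$, note the denominator is $s$-independent, apply the product rule, use the permutation invariance of the determinant to collapse the sum over $k$ into $n$ times the $k=1$ term, and reindex. The only cosmetic difference is in justifying the term-by-term differentiation (you invoke dominated convergence on difference quotients, the paper cites a uniform-convergence-of-derivatives theorem via Lemma~\ref{RubinLemma}), which does not change the substance of the argument.
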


\begin{proof}
The proof is provided in Appendix~\ref{MeanIntNearestBSAppdix}. 
\end{proof}

Since the DPPs are assumed to be stationary and isotropic, thus only the distance from the origin to its nearest BS will affect the mean interference result, which can be observed from Lemma~\ref{MeanIntNearestBSLemma}.

\subsection{SIR Distribution}
Based on the same assumptions as in Section~\ref{SubSecIntDist}, we derive the SIR distribution as the complementary cumulative distribution function (CCDF) of the SIR at the typical user under the nearest BS association scheme. 
Denote by $x^*(0)$ the BS to which the typical user at the origin associates, its received SIR can be expressed as:
\begin{equation}~\label{COPEq}
\text{SIR}(0,\Phi) = \frac{ P h_{x_0} \mathit{l}(x^*(0))}{\sum_{x_{i} \in \Phi \setminus x^*(0)}  P h_{x_i} \mathit{l}(x_i)}.
\end{equation}



\begin{lemma}\label{COPThmx0}
The SIR distribution for the typical user at the origin, given $x^*(0) = x_0$ is:
\allowdisplaybreaks
\begin{align}
\small
&\mathbb{P}[\text{SIR}(0,\Phi) >T | x^*(0) = x_0] \nonumber\\
= &\frac{\sum\limits_{n=0}^{+\infty} \frac{(-1)^n}{n!}\int_{(\mathbb{R}^2)^n} \det(K_{x_0}^{!}(x_i,x_j))_{1\leq i,j \leq n} \prod\limits_{i=1}^{n}[1- \frac{\mathbbm{1}_{|x_i| \geq r_0}}{1+ T \mathit{l}(x_i)/\mathit{l}(x_0) }] {\rm d}x_1...{\rm d}x_n}{\sum\limits_{n=0}^{+\infty} \frac{(-1)^n}{n!} \int_{B(0,r_0)^n} \det(K_{x_0}^{!}(x_i,x_j))_{1 \leq i,j \leq n}{\rm d}x_1...{\rm d}x_n }~\label{COPx0Eq},
\end{align}
where $r_0 = |x_0|$. 
\end{lemma}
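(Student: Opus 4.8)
The plan is to reduce the SIR coverage probability directly to the conditional Laplace transform of the interference established in Theorem~\ref{LTIntThm}. The key observation is that the serving-link fading $h_{x_0}$ is exponentially distributed with unit mean and is independent of both the interference field and the conditioning event $\{x^*(0) = x_0\}$. Since the latter event is determined entirely by the point locations of $\Phi$ (namely $x_0 \in \Phi$ together with $\Phi(B^o(0,r_0)) = 0$) and does not involve the mark $h_{x_0}$, the conditional law of $h_{x_0}$ remains $\exp(1)$, and $h_{x_0}$ stays independent of $I = \sum_{x_i \in \Phi \setminus x^*(0)} P h_{x_i} \mathit{l}(x_i)$.

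First I would rewrite the event $\{\text{SIR}(0,\Phi) > T\}$. From~(\ref{COPEq}), this event is equivalent to $\{P h_{x_0} \mathit{l}(x_0) > T I\}$, i.e. $\{h_{x_0} > T I / (P \mathit{l}(x_0))\}$. Conditioning on the interference $I$ and invoking the complementary CDF of the exponential distribution gives
\begin{equation*}
\mathbb{P}[\text{SIR}(0,\Phi) > T \mid x^*(0) = x_0, I] = \exp\left(-\frac{T I}{P \mathit{l}(x_0)}\right).
\end{equation*}

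Next I would take the expectation over $I$ conditional on $x^*(0) = x_0$ and recognize the result as the conditional Laplace transform of the interference evaluated at $s = T / (P \mathit{l}(x_0))$:
\begin{equation*}
\mathbb{P}[\text{SIR}(0,\Phi) > T \mid x^*(0) = x_0] = \mathbb{E}\left[e^{-sI} \mid x^*(0) = x_0\right]\Big|_{s = T/(P\mathit{l}(x_0))}.
\end{equation*}
Applying Theorem~\ref{LTIntThm} and substituting $s P \mathit{l}(x_i) = T \mathit{l}(x_i)/\mathit{l}(x_0)$ turns the factor $1 - \mathbbm{1}_{|x_i| \geq r_0}/(1 + s P \mathit{l}(x_i))$ into exactly $1 - \mathbbm{1}_{|x_i| \geq r_0}/(1 + T \mathit{l}(x_i)/\mathit{l}(x_0))$, which reproduces both the numerator and denominator of~(\ref{COPx0Eq}).

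The only point requiring care is the verification that the mark-dependent function $f(x,h_x) = s P h_x \mathit{l}(x) \mathbbm{1}_{|x| \geq r_0} - \log \mathbbm{1}_{|x| \geq r_0}$ satisfies the hypotheses of Lemma~\ref{LFMarkDPPCoro} at the specific value $s = T/(P \mathit{l}(x_0))$; this is precisely the standing assumption under which Theorem~\ref{LTIntThm} is invoked, and with $\mathit{l}(\cdot)$ non-increasing and decaying sufficiently fast it holds. Consequently no genuine obstacle arises, and the statement follows as a direct corollary of Theorem~\ref{LTIntThm} via the exponential-fading substitution $s = T/(P \mathit{l}(x_0))$.
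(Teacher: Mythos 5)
Your proposal is correct and follows essentially the same route as the paper: exploit the unit-mean exponential law of the serving-link fading to write the conditional coverage probability as the conditional Laplace transform of the interference at $s = T/(P\,\mathit{l}(x_0))$, then invoke Theorem~\ref{LTIntThm}. Your version is in fact slightly more careful than the paper's two-line argument, since you explicitly note the independence of $h_{x_0}$ from the conditioning event and track the factor $P$ in the substitution.
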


\begin{proof}
Since the channels are subject to Rayleigh fading with unit mean, we have: 
\allowdisplaybreaks
\begin{align*}
\mathbb{P}[\text{SIR}(0,\Phi) >T | x^*(0) = x_0] =& 
\mathbb{P} [\frac{h_0 \mathit{l}(x_0)}{I} >T | x^*(0) = x_0] \\
=& \mathbb{E} [\exp(-\frac{T}{\mathit{l}(x_0)}I) | x^*(0) = x_0],
\end{align*}
and the result follows from Theorem~\ref{LTIntThm}.
\end{proof}


In Corollary~\ref{ESFPDFCoro}, the probability density function for the distance from the origin to its nearest BS has been characterized. Therefore, by combining Corollary~\ref{ESFPDFCoro} and Lemma~\ref{COPThmx0}, we are able to compute the SIR distribution of the typical user under the nearest BS association scheme. 

\begin{theorem}\label{COPThm}
The SIR distribution of the typical user at the origin is given by:
\begin{align}\label{COPThmEq}
\small
\begin{split}
&\mathbb{P}(\text{SIR}(0,\Phi) >T ) \\
&= \int_{0}^{+\infty} \lambda 2\pi \left[\sum\limits_{n=0}^{+\infty} \frac{(-1)^n}{n!} \int_{(\mathbb{R}^2)^n} \det(K_{x_0}^{!}(x_i,x_j))_{1 \leq i,j \leq n}  \prod\limits_{i=1}^{n}[1-\frac{\mathbbm{1}_{|x_i| \geq r_0}}{1+ T\mathit{l}(x_i)/\mathit{l}(x_0) }]\bigg|_{x_0 = (r_0,0)}{\rm d}x_1...{\rm d}x_n\right] r_0  {\rm d}r_0.
\end{split}
\end{align}
\end{theorem}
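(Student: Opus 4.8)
The plan is to de-condition on the distance $r_0 = |x^*(0)|$ from the typical user to its serving BS. By the law of total probability,
\begin{equation*}
\mathbb{P}(\text{SIR}(0,\Phi) > T) = \int_0^{+\infty} \mathbb{P}\left[\text{SIR}(0,\Phi) > T \mid x^*(0) = x_0\right]\Big|_{x_0 = (r_0,0)} f(r_0)\,{\rm d}r_0,
\end{equation*}
where isotropy lets us place $x_0$ on the positive horizontal axis at distance $r_0$, the density $f(r_0)$ of the nearest-point distance is supplied by Corollary~\ref{ESFPDFCoro}, and the conditional CCDF is supplied by Lemma~\ref{COPThmx0}. The first step is simply to substitute these two expressions into the integral.

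The crucial observation is that the denominator appearing in Lemma~\ref{COPThmx0}, namely $\sum_{n}\frac{(-1)^n}{n!}\int_{B(0,r_0)^n}\det(K_{x_0}^{!}(x_i,x_j))_{1\le i,j\le n}\,{\rm d}x_1\cdots{\rm d}x_n$, coincides exactly with the factor carried by $f(r_0)$ once the latter is re-expressed through the Palm kernel. Concretely, I would invoke the Palm relation~(\ref{PalmRelation}) together with the stationarity identity $K(x_0,x_0) = K_0(0) = \lambda$ to rewrite the determinant $\det(K(x_i,x_j))_{0\le i,j\le n}$ that defines $f(r_0)$ in Corollary~\ref{ESFPDFCoro} as $\lambda\det(K_{x_0}^{!}(x_i,x_j))_{1\le i,j\le n}$. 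This turns $f(r_0)$ into $2\pi\lambda r_0$ times precisely the denominator of the conditional CCDF, so those two factors cancel upon multiplication, leaving $2\pi\lambda r_0$ times the numerator of Lemma~\ref{COPThmx0}, evaluated at $x_0 = (r_0,0)$. Integrating over $r_0 \in (0,+\infty)$ then reproduces~(\ref{COPThmEq}) verbatim.

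The main difficulty is not the algebra but the bookkeeping that makes the cancellation exact. One must check that the integration domain $B(0,r_0)$ and the configuration index ranges match between Corollary~\ref{ESFPDFCoro}, which is stated with the raw kernel $K$ over indices $0\le i,j\le n$, and Lemma~\ref{COPThmx0}, which is stated with the reduced Palm kernel $K_{x_0}^{!}$ over indices $1\le i,j\le n$; the Palm-relation conversion is the bridge, and it is valid for Lebesgue-almost every $r_0$ with $\rho^{(1)}(x_0) = \lambda > 0$, which is exactly the hypothesis under which~(\ref{PalmRelation}) holds. A secondary point is to justify the interchange of the $r_0$-integral with the infinite series defining both $f(r_0)$ and the conditional CCDF; this follows from the absolute convergence underlying the Laplace-functional expansions established in Lemma~\ref{LFDPP2} and used throughout.
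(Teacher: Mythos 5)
Your proposal is correct and follows essentially the same route as the paper: de-condition on the location of the nearest BS using the density from Corollary~\ref{ESFPDFCoro}, invoke isotropy to fix the angle, and use the Palm relation~(\ref{PalmRelation}) with $K(x_0,x_0)=\lambda$ so that $f(r_0)$ equals $2\pi\lambda r_0$ times the denominator of Lemma~\ref{COPThmx0}, which then cancels. The paper's proof writes out the angular integral explicitly but is otherwise identical in substance.
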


\begin{proof}
When expressing the location of the closest BS to the typical user in polar form as $x^*(0) = (r_0,\theta)$, we know that $x^*(0)$ admits the probability density $\frac{{\rm d}\theta}{2\pi} f (r_0) {\rm d} r_0$, where $f(r_0)$ is given in Corollary~\ref{ESFPDFCoro}. Therefore, we have:
\begin{align*}
\mathbb{P}(\text{SIR}(0,\Phi) >T ) &= \int_{0}^{+\infty} \int_{0}^{2\pi} \mathbb{P}[\text{SIR}(0,\Phi) >T | x^*(0) = (r_0, \theta)]  \frac{1}{2\pi} f (r_0) {\rm d}\theta {\rm d} r_0 \\
&\overset{(a)}{=} \int_{0}^{+\infty} \mathbb{P}[\text{SIR}(0,\Phi) >T | x^*(0) = (r_0, 0)] f(r_0) {\rm d} r_0, 
\end{align*}
where (a) is because the DPP is stationary and isotropic, so that the angle of $x_0$ will not affect the result of $ \mathbb{P}[\text{SIR}(0,\Phi) >T | x^*(0) = (r_0, \theta)] $. It follows from~(\ref{PalmRelation}) that $\det(K_{x_0}^{!}(x_i,x_j))_{1 \leq i,j \leq n}  = \frac{1}{K(x_0,x_0)} \det(K(x_i,x_j))_{0 \leq i,j \leq n}$, then the proof is completed by applying Corollary~\ref{ESFPDFCoro} and Lemma~\ref{COPThmx0}.
\end{proof}

\begin{remark}
If we choose $\Phi$ as a stationary PPP with intensity $\lambda$, i.e., $\det(K(x_i,x_j))_{1 \leq i,j \leq n} = \det(K_{x_0}^{!} (x_i,x_j))_{1 \leq i,j \leq n}= \lambda^n$, then Theorem~\ref{COPThm} leads to the same result as~\cite[Theorem 2]{trac}. 
\end{remark}

\section{Numerical Evaluation using Quasi-Monte Carlo Integration Method}\label{SectNumerical}
In this section, we provide the numerical method used to evaluate the analytical results derived in Section~\ref{SectAnalyze}. The Laplace functional of DPPs involves a series representation, where each term is a multi-dimensional integration. Therefore, we adopt the Quasi-Monte Carlo (QMC) integration method~\cite{kuo2005lifting} for efficient numerical integration.

The QMC integration method approximates the multi-dimensional integration of function $f: [0,1]^n \rightarrow \mathbb{R}$ as:
\begin{equation*}
\int_{[0,1]^n} f(\textbf{x}) d\textbf{x} \approx \frac{1}{N} \sum_{n=0}^{N-1} f(\textbf{x}_n).
\end{equation*}
The sample points $\textbf{x}_0,...,\textbf{x}_{N-1} \in [0,1]^n$ are chosen deterministically in the QMC method, and we use the Sobol points generated in MATLAB as the choice for sample points~\cite{blaszczyszyn2014studying}. Compared to the regular Monte Carlo integration method which uses a pseudo-random sequence as the sample points, the QMC integration method converges much faster. 




In the following, we will focus on the numerical results using the Gauss DPP fitted to the Houston and LA data set. 
The modeling accuracy of the fitted Gauss DPPs compared to the real data sets will be validated in Section~\ref{SECModelAccu}. In addition, our simulation results for each metric are based on the average of 1000 realizations of the fitted Gauss DPP. 



\subsection{Empty Space Function}
Since the QMC integration method requires integration over the unit square,~(\ref{ESFEq}) can be rewritten as:
\begin{equation}~\label{ESFEqUnit}
\small
F(r) = \sum_{n=1}^{+ \infty} \frac{(-1)^{n-1} (2r)^{2n}}{n!} \int_{\left([0,1] \times [0,1]\right)^n} \det\left(K_0(2r(x_i - x_j))\right)_{1 \leq i,j \leq n } \prod_{i} \mathbbm{1}_{\{\|x_i-(\frac{1}{2},\frac{1}{2})\| \leq \frac{1}{2}\}}{\rm d}x_1...{\rm d}x_n,
\end{equation}
where $K_0(x)$ is the covariance function for the DPP $\Phi$. 

The accuracy of~(\ref{ESFEqUnit}) is verified by computing the empty space function of the Gauss DPP fitted to the Houston and LA data set respectively. Specifically, for the Gauss DPP model, $K_0(x) = \lambda \exp(-\|x/ \alpha\|^2)$, where $\lambda$ and $\alpha$ are chosen according to Table~\ref{par1} and Table~\ref{par2}. Fig.~\ref{ESFGauss} shows the QMC integration results of~(\ref{ESFEqUnit}) with different numbers of Sobol points, as well as the simulation result for the fitted Gauss DPP. We have observed that when the number of Sobol points is $2^{11}$, (\ref{ESFEqUnit}) can be computed very efficiently (in a few seconds) and the QMC integration results are accurate except for the part where $F(r)$ is over 95\%. In contrast, if the number of Sobol points is increased to $2^{15}$, the QMC integration method is almost 10 times slower while the results are accurate for a much larger range of $r$. 



 \begin{figure}
         \begin{subfigure}[b]{0.47\textwidth}
                 \includegraphics[height=2in, width=3.2in]{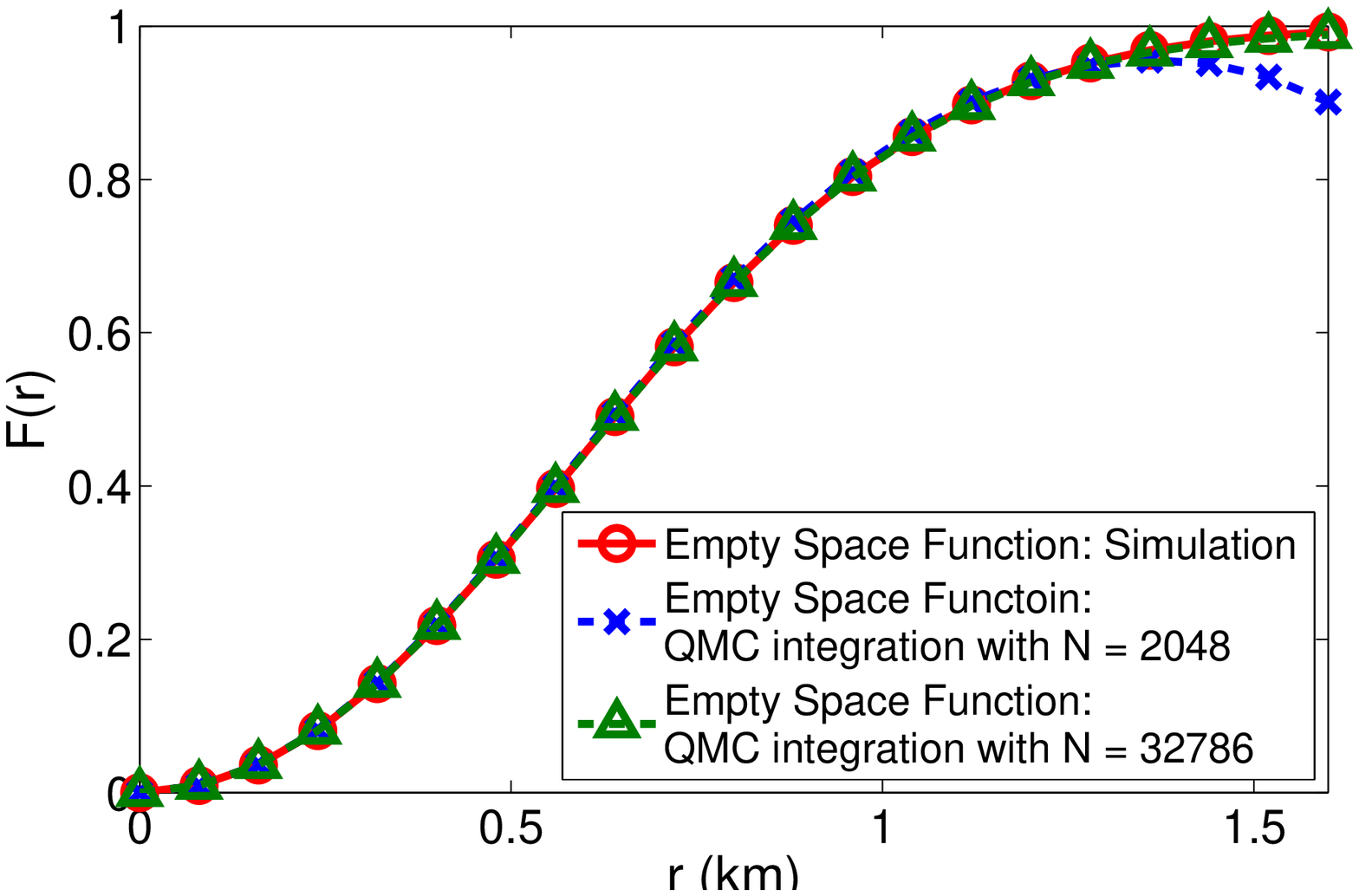}
                 \caption{Houston data set}
         \end{subfigure}
         \hfill
         \begin{subfigure}[b]{0.47\textwidth}
                 \includegraphics[height=2in, width=3.2in]{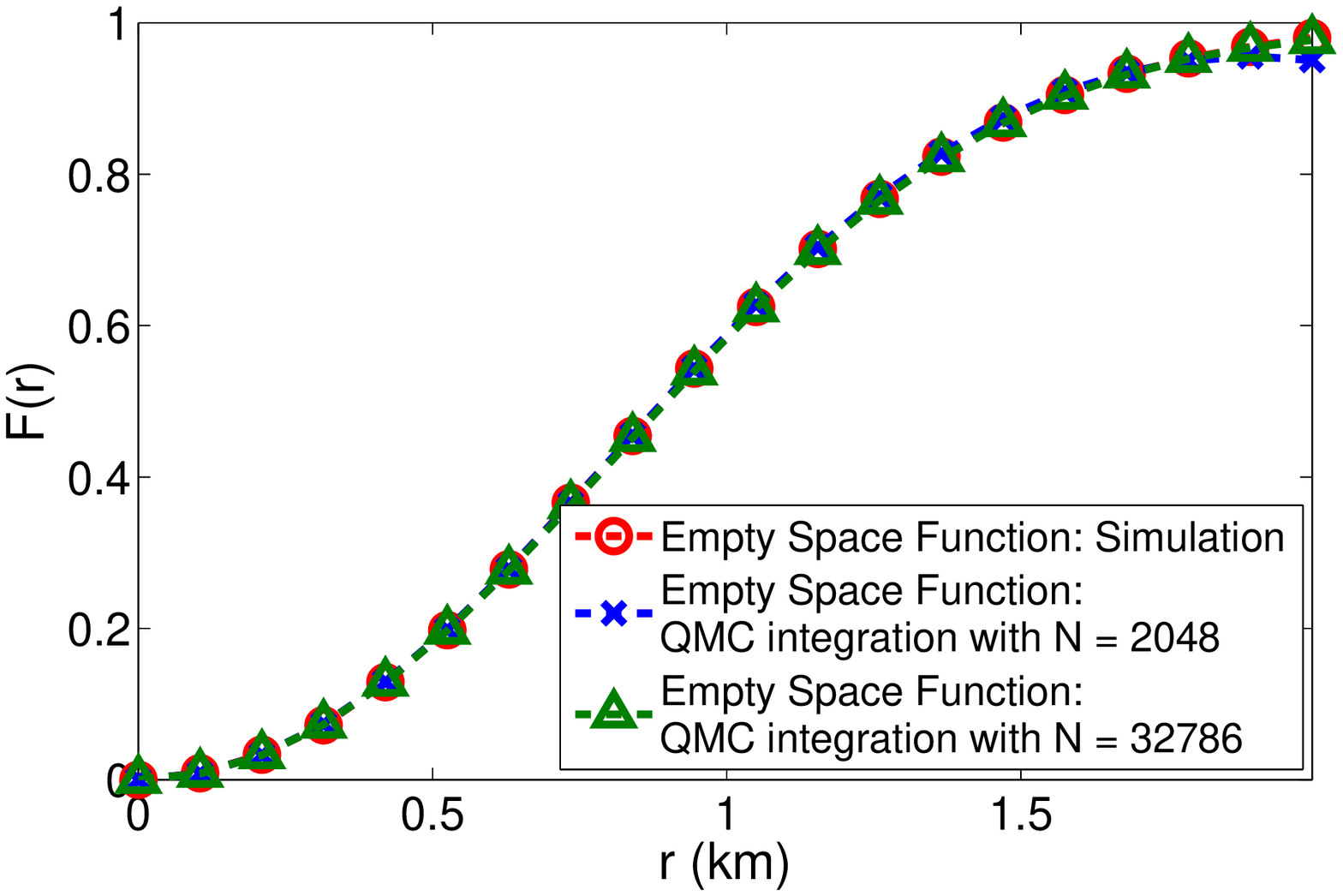}
                 \caption{LA data set}
         \end{subfigure}
         \caption{Empty space function of the fitted Gauss DPP.}\label{ESFGauss}
 \end{figure}

\subsection{Nearest Neighbor Function}


\begin{figure}
        \begin{subfigure}[b]{0.47\textwidth}
                \includegraphics[height=2in, width=3.2in]{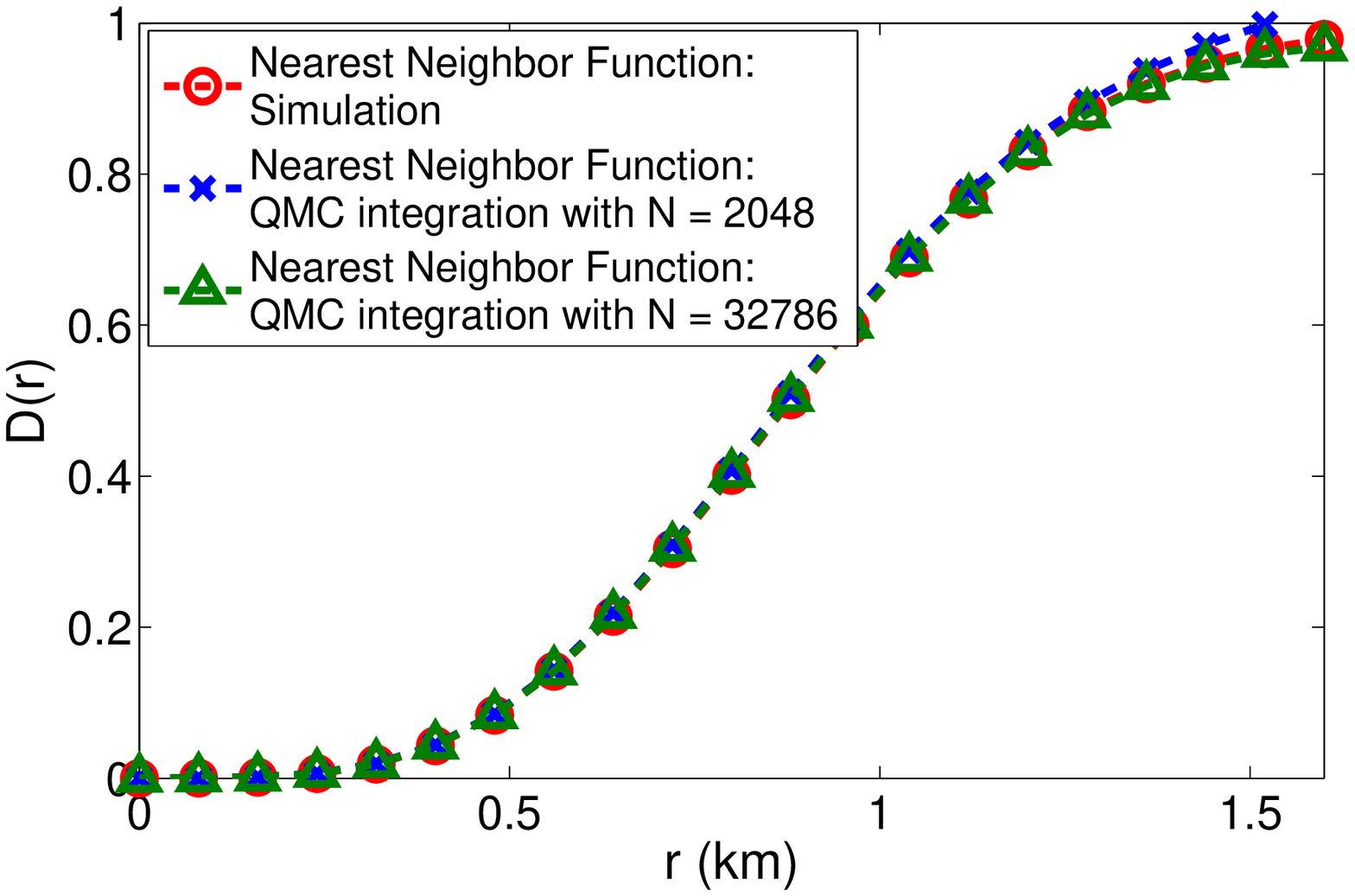}
                \caption{Houston data set}
        \end{subfigure}
        \hfill
        \begin{subfigure}[b]{0.47\textwidth}
                \includegraphics[height=2in, width=3.2in]{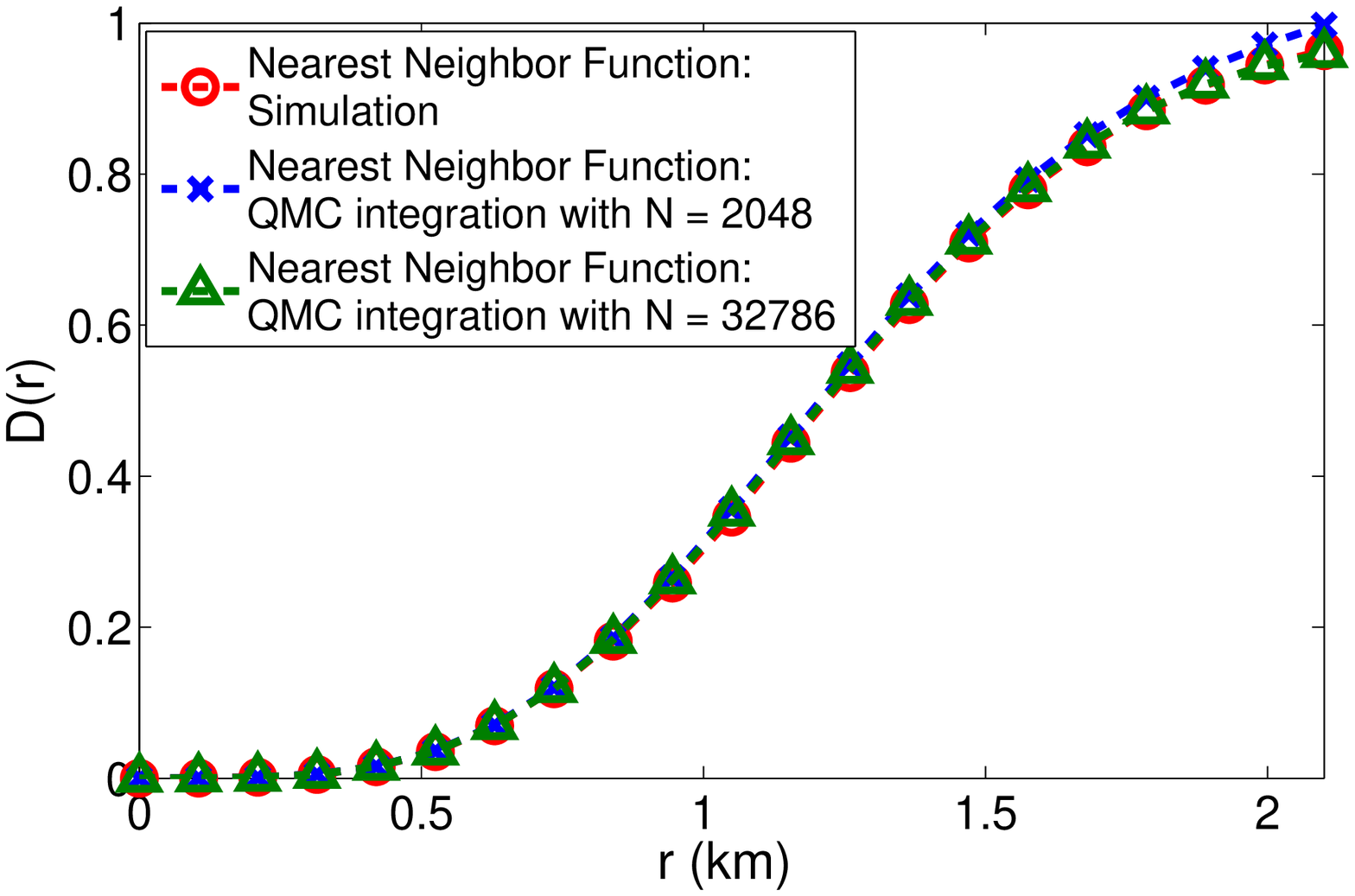}
                \caption{LA data set}
        \end{subfigure}
        \caption{Nearest neighbor function of the fitted Gauss DPP.}\label{NNGauss}
\end{figure}

The QMC integration method is also efficient in the numerical evaluation of the nearest neighbor function. Similar to the empty space function, the QMC integration method with $N=2^{11}$ takes a few seconds to return $D(r)$ in Fig.~\ref{NNGauss}, which is accurate up to 95\%. By contrast, the QMC integration method is more accurate but almost ten times slower when the number of sample points is increased to $N = 2^{15}$.



\subsection{Mean Interference}
In this part, the mean interference of the Gauss DPP is numerically evaluated for the two BS association schemes discussed in Section~\ref{SubSecIntDist}. The path loss model is chosen as $\mathit{l}(x) = \min (1, |x|^{-\beta})$, where $\beta > 2$ is the path loss exponent.
\subsubsection{Mean interference with fixed associated BS scheme}
\begin{corollary}\label{MeanIntCoro}
Conditionally on $x_{0} = (r_0,0)$ as the serving BS for the typical user, the mean interference at the typical user when BSs are distributed according to the Gauss DPP with parameters $(\lambda, \alpha)$ is given by:
\begin{equation*}
\mathbb{E}[I |  x_{0} = (r_0,0)] = \frac{P \pi \lambda \beta}{\beta-2} - 2 P \pi \lambda \exp(-\frac{2r_{0}^{2}}{\alpha^2}) (A_1(r_0)+A_2(r_0)),
\end{equation*}
where  
$A_1(r_0) = \int_{0}^{1} \exp(-\frac{2 r^2}{\alpha^2}) I_0 (\frac{4rr_0}{\alpha^2}) r {\rm d}r$, and $A_2(r_0) = \int_{1}^{\infty} \exp(-\frac{2 r^2}{\alpha^2}) r^{1-\beta} I_0(\frac{4rr_0}{\alpha^2}) {\rm d}r$. Here $I_0(\cdot)$ denotes the modified Bessel function of first kind with parameter $\nu = 0$~\cite{integrate}. 
\end{corollary}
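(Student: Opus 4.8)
The plan is to start from the conditional mean interference integral supplied by Lemma~\ref{MeanInt}, namely $\mathbb{E}[I \mid x_0 = (r_0,0)] = P \int_{\mathbb{R}^2} K_{x_0}^{!}(x,x)\,\mathit{l}(x)\,{\rm d}x$, and to evaluate it explicitly for the Gauss kernel. The first step is to compute the diagonal of the Palm kernel. Plugging the Gauss covariance $K_0(x) = \lambda \exp(-\|x\|^2/\alpha^2)$ into the determinantal formula~(\ref{PalmEqn}) and using that it is real and satisfies $K(x,x) = K(x_0,x_0) = \lambda$, one obtains $K_{x_0}^{!}(x,x) = \lambda \left(1 - \exp(-2\|x - x_0\|^2/\alpha^2)\right)$. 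Substituting this into the integral splits $\mathbb{E}[I \mid x_0]$ into a constant piece $P\lambda \int_{\mathbb{R}^2} \mathit{l}(x)\,{\rm d}x$ and a correction piece $-P\lambda \int_{\mathbb{R}^2} \exp(-2\|x-x_0\|^2/\alpha^2)\,\mathit{l}(x)\,{\rm d}x$.

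Next I would handle the constant piece in polar coordinates. With $\mathit{l}(x) = \min(1, |x|^{-\beta})$ and $\beta > 2$, the radial integral splits at $r=1$ and yields $\int_{\mathbb{R}^2} \mathit{l}(x)\,{\rm d}x = 2\pi \bigl(\tfrac{1}{2} + \tfrac{1}{\beta-2}\bigr) = \pi\beta/(\beta-2)$, producing the first term $P\pi\lambda\beta/(\beta-2)$ of the claimed formula. Convergence here is exactly what forces the hypothesis $\beta > 2$.

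The main work is the correction piece. Writing $x = (r\cos\phi, r\sin\phi)$ and using $\|x - x_0\|^2 = r^2 - 2 r r_0 \cos\phi + r_0^2$, I would factor out $\exp\bigl(-2(r^2 + r_0^2)/\alpha^2\bigr)$ and carry out the angular integral via the standard identity $\int_0^{2\pi} e^{z\cos\phi}\,{\rm d}\phi = 2\pi I_0(z)$ with $z = 4 r r_0/\alpha^2$. This is the step that demands the most care: it is where the modified Bessel function $I_0$ appears and where the isotropy of the Gauss DPP is exploited to collapse the two-dimensional integral into a one-dimensional radial one. Splitting the remaining radial integral at $r=1$ according to the two branches of $\mathit{l}$ reproduces exactly $A_1(r_0)$ and $A_2(r_0)$, and collecting the prefactor $-2\pi P \lambda \exp(-2 r_0^2/\alpha^2)$ delivers the second term.

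I do not anticipate a genuine obstacle beyond bookkeeping; the only point requiring verification is integrability of the radial tail, which holds because $\beta > 2$ makes the $A_2$ integrand decay like $r^{1-\beta}$ (further suppressed by the Gaussian factor) at infinity, so that all the integrals above are finite and the term-by-term manipulation is legitimate. Assembling the two contributions then gives the stated closed form.
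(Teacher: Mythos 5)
Your proposal is correct and follows essentially the same route as the paper: the paper's proof likewise substitutes the Gauss kernel into Lemma~\ref{MeanInt}, obtaining $K_{x_0}^{!}(x,x)=\lambda\bigl(1-\exp(-2\|x-x_0\|^2/\alpha^2)\bigr)$, splits the integral at $|x|=1$ according to the two branches of $\mathit{l}$, and collapses the angular integral via $\int_0^{2\pi}\exp(\pm\beta\cos\phi)\,{\rm d}\phi = 2\pi I_0(\beta)$. All of your intermediate computations (the $\pi\beta/(\beta-2)$ constant piece and the $A_1$, $A_2$ radial integrals) check out.
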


\begin{proof}
Based on the fact that $\int_{0}^{2\pi} \exp(\pm \beta \cos(x)) {\rm d} x = 2\pi I_0(\beta)$~\cite[p. 491]{integrate}, this corollary can be derived by substituting the Gauss DPP kernel into Lemma~\ref{MeanInt}.
\end{proof}

In Fig.~\ref{MeanIntFig}, the mean interference for the Gauss DPP fitted to the Houston and LA data sets are provided under different path loss exponents with $P=1$. 
From Fig.~\ref{MeanIntFig}, it can be observed that the mean interference increases as $r_0$ increases; this is because it increases the probability for the existence of a strong interferer close to the typical user. In addition, given $r_0$, the mean interference is decreasing when the path loss exponent $\beta$ increases; this is because the path loss function is decreasing with respect to $\beta$ for all interferers.


\begin{figure}
        \begin{subfigure}[b]{0.47\textwidth}
                \includegraphics[height=2in, width=3.2in]{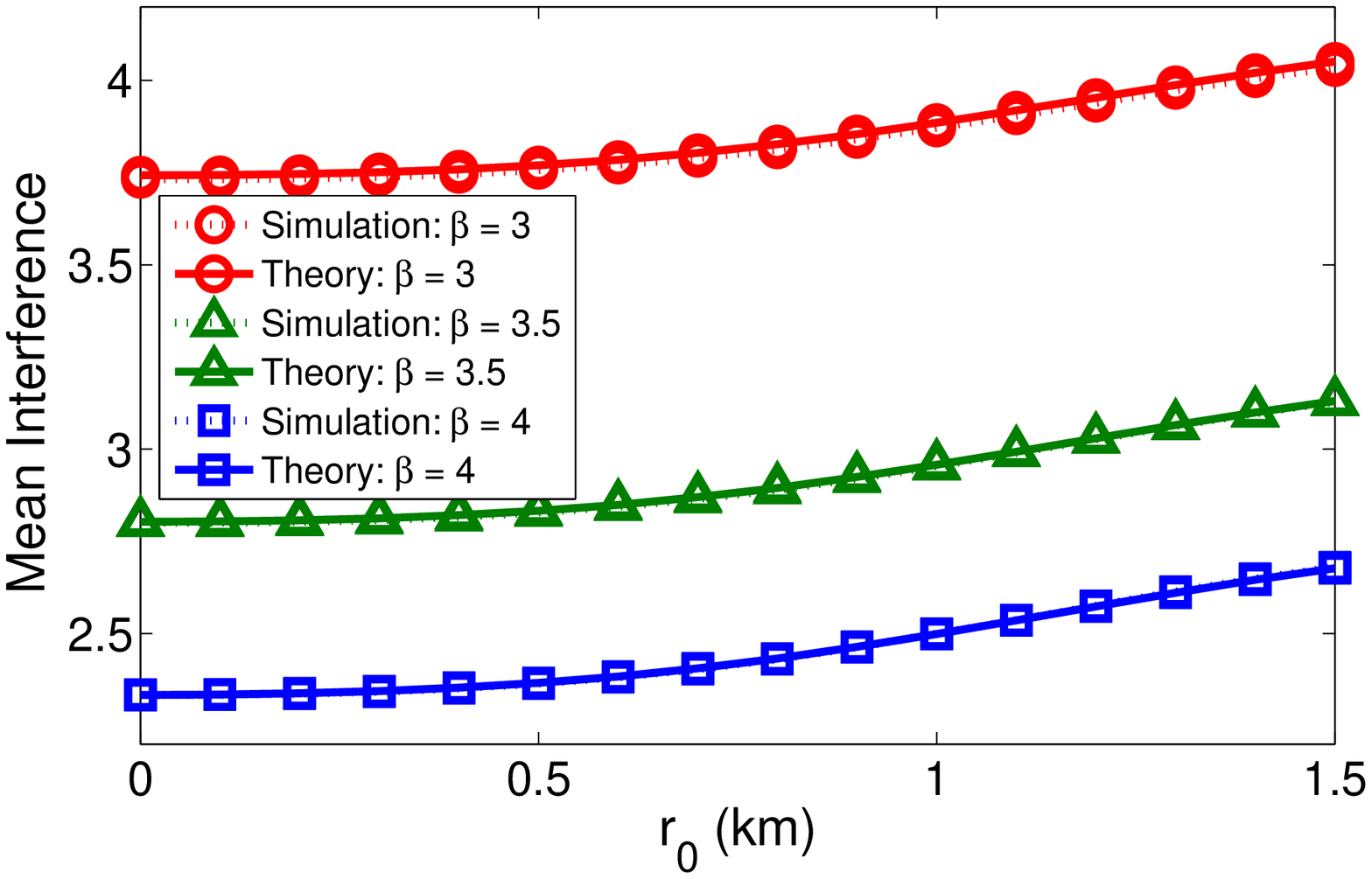}
                \caption{Houston data set}
        \end{subfigure}
        \hfill
        \begin{subfigure}[b]{0.47\textwidth}
                \includegraphics[height=2in, width=3.2in]{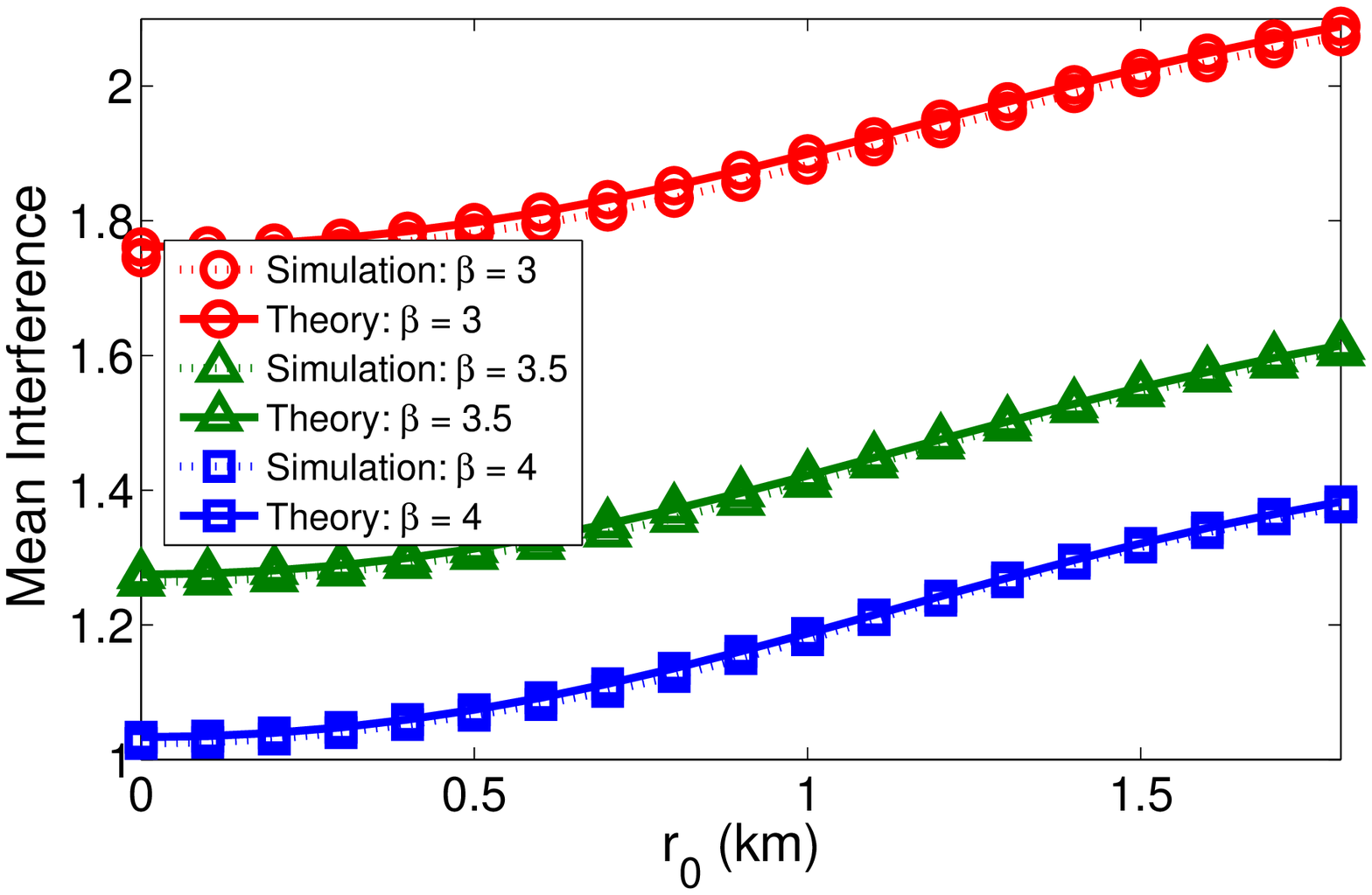}
                \caption{LA data set}
        \end{subfigure}
        \caption{Mean interference under the fixed associated BS scheme.}\label{MeanIntFig}
\end{figure}
\subsubsection{Mean interference with nearest BS association scheme}
The Quasi-Monte Carlo integration method is adopted to evaluate the mean interference under the nearest BS association scheme, which is given in~(\ref{MeanIntx0}). 
In Fig.~\ref{MeanIntnNBSFig}, the mean interference is evaluated when the path loss exponent $\beta$ is 3, 3.5, 4. It can be observed from Fig.~\ref{MeanIntnNBSFig} that when $r_0$ (i.e., the distance from the typical user to its nearest BS) increases, the mean interference decreases. This is because the strong interferers are farther away from the typical user when $r_0$ increases, which leads to a smaller aggregate interference. This is quite different from the case when the BS associated to the typical user is assumed to be at some fixed location. In addition, since the path loss function $\mathit{l}(x)$ is non-increasing with respect to $\beta$ given the norm of $x$, the mean interference decreases when $\beta$ increases for a given $r_0$. 



\begin{figure}
        \begin{subfigure}[b]{0.47\textwidth}
                \includegraphics[height=2in, width=3.2in]{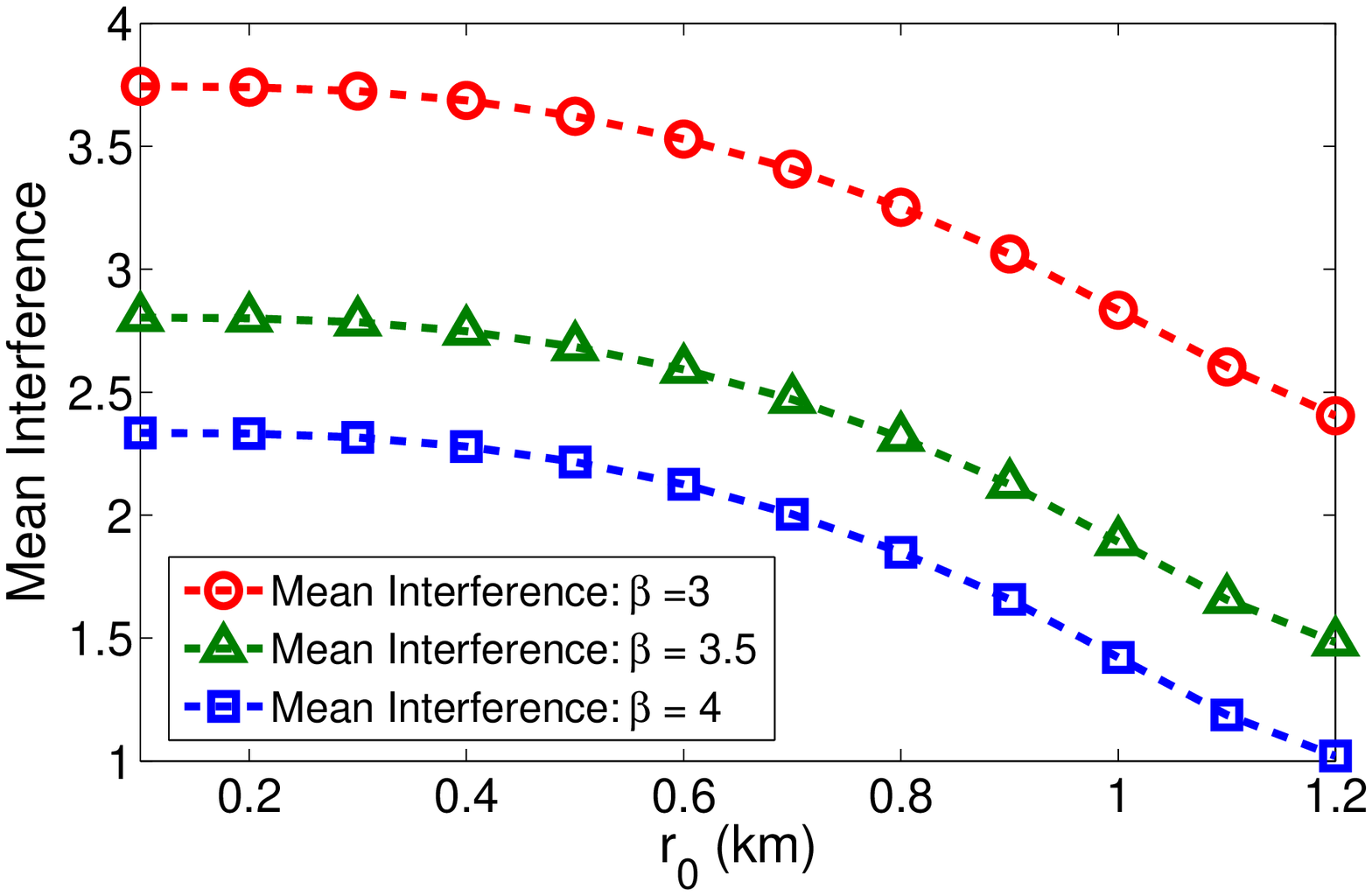}
                \caption{Houston data set}
        \end{subfigure}
        \hfill
        \begin{subfigure}[b]{0.47\textwidth}
                \includegraphics[height=2in, width=3.2in]{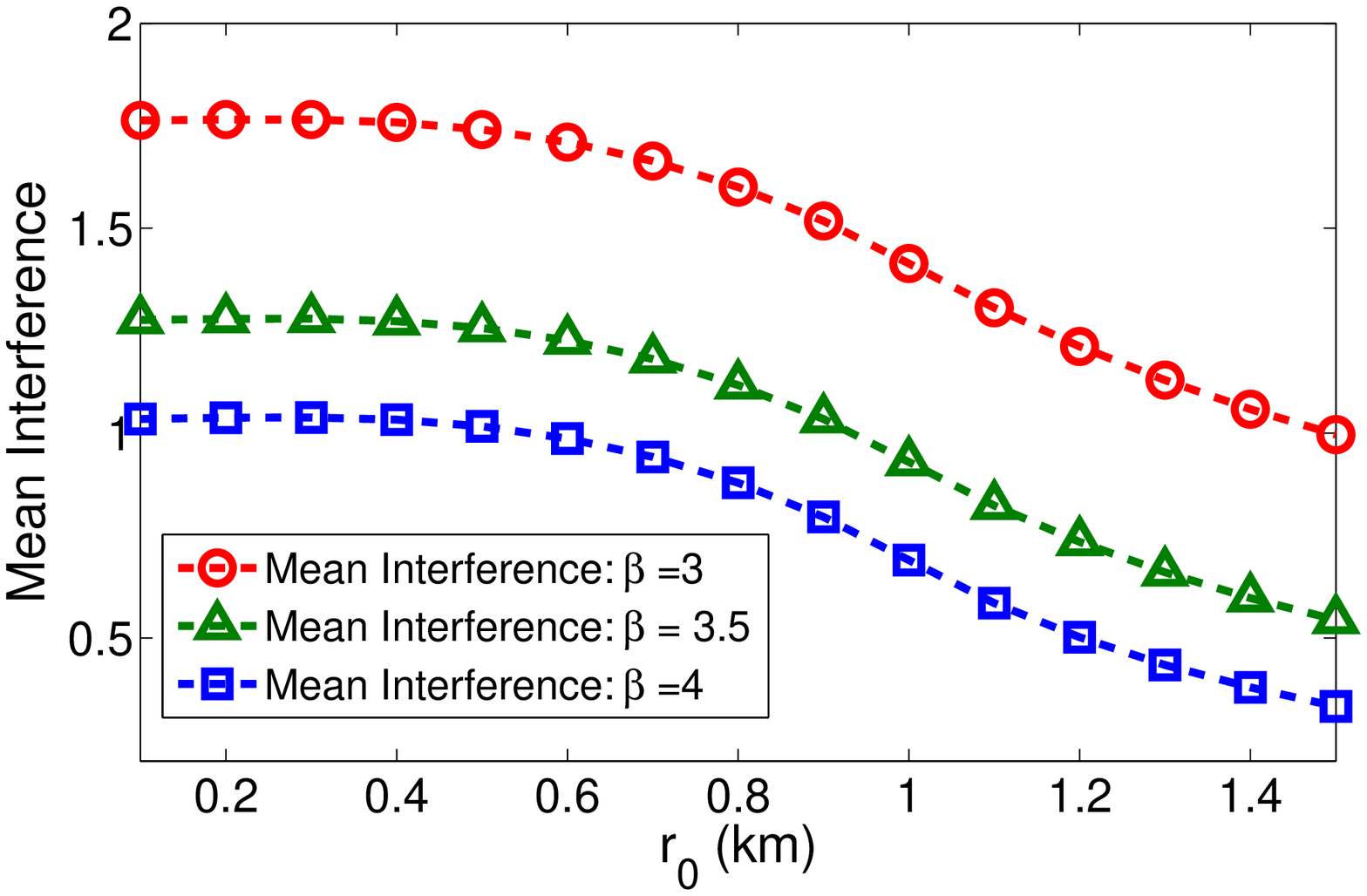}
                \caption{LA data set}
        \end{subfigure}
        \caption{Mean interference under the nearest BS association scheme.}\label{MeanIntnNBSFig}
\end{figure}

\subsection{SIR Distribution}~\label{COPNumSubSec}
The QMC integration method can, in principle, be used to numerically evaluate~(\ref{COPThmEq}). However, it is time consuming due to the need to evaluate multiple integrations over $\mathbb{R}^2$. Therefore, we use the diagonal approximation of the matrix determinant~\cite{ipsen2011determinant} to roughly estimate~(\ref{COPThmEq}). Specifically, the determinant of matrix $\left(K(x_{i},x_{j})\right)_{1\leq i,j\leq n}$ is approximated\footnote{The relative error bound for diagonal approximation is provided in~\cite[Theorem 1]{ipsen2011determinant}.} as $\det (\left(K(x_{i},x_{j})\right)_{1\leq i,j\leq n}) \approx \prod_{i=1}^{n} K(x_{i},x_{i})$ under the diagonal approximation.

\begin{lemma}\label{COPThmDiagApproxLemma}
Under the diagonal approximation, the SIR distribution for the typical user is approximated as:
\begin{equation}\label{COPThmDiagApproxEq}
\mathbb{P}(\text{SIR}(0,\Phi) >T ) \approx \int_{0}^{+\infty} \lambda 2 \pi r_0 \exp\left(-\int_{\mathbb{R}^2} K_{x_0}^{!}(x,x) (1-\frac{\mathbbm{1}_{|x| \geq r_0}}{1+ T\mathit{l}(x)/\mathit{l}(x_0)}) \bigg|_{x_0 = (r_0,0)}{\rm d} x\right) {\rm d} r_0.
\end{equation}
\end{lemma}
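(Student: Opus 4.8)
The plan is to start from the exact SIR distribution in Theorem~\ref{COPThm} and simply substitute the diagonal approximation of the determinant into the inner series of~(\ref{COPThmEq}). Recall that this approximation replaces $\det(K_{x_0}^{!}(x_i,x_j))_{1\le i,j\le n}$ by $\prod_{i=1}^{n} K_{x_0}^{!}(x_i,x_i)$. The key observation is that, once this replacement is made, the $n$-th summand has a completely factorized integrand: both the approximated determinant and the product $\prod_{i=1}^{n}[1 - \mathbbm{1}_{|x_i|\ge r_0}/(1+T\mathit{l}(x_i)/\mathit{l}(x_0))]$ split into products over the individual variables $x_i$, with $x_0=(r_0,0)$ held fixed throughout the inner integral.

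Because the integrand factorizes, the integral over $(\mathbb{R}^2)^n$ collapses, via Fubini, to the $n$-th power of a single two-dimensional integral. Concretely, writing $g(x) \triangleq K_{x_0}^{!}(x,x)\,[1 - \mathbbm{1}_{|x|\ge r_0}/(1+T\mathit{l}(x)/\mathit{l}(x_0))]$ (with $x_0=(r_0,0)$), each summand equals $\frac{(-1)^n}{n!}\left(\int_{\mathbb{R}^2} g(x)\,{\rm d}x\right)^n$, so the inner series becomes $\sum_{n=0}^{\infty} \frac{(-1)^n}{n!}\left(\int_{\mathbb{R}^2} g(x)\,{\rm d}x\right)^n$. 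I would then recognize this as the Taylor expansion of $\exp\left(-\int_{\mathbb{R}^2} g(x)\,{\rm d}x\right)$. Substituting back into the outer integral over $r_0$ in~(\ref{COPThmEq}) and setting $x_0=(r_0,0)$ yields exactly~(\ref{COPThmDiagApproxEq}).

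The manipulation itself is routine; the only point requiring any care is the finiteness of the single integral $\int_{\mathbb{R}^2} g(x)\,{\rm d}x$, which is what makes the exponential series meaningful. For a stationary DPP the diagonal term $K_{x_0}^{!}(x,x)$ is bounded, so integrability hinges on the tail of $1 - 1/(1+T\mathit{l}(x)/\mathit{l}(x_0))$, which decays like $T\mathit{l}(x)/\mathit{l}(x_0)$; with $\mathit{l}(x)=\min(1,|x|^{-\beta})$ and $\beta>2$ this is integrable over $\mathbb{R}^2$. Rather than a genuine obstacle, I expect the main conceptual point to be the recognition that the diagonal approximation is precisely the device that decouples the variables and converts the determinantal series into a Poisson-type exponential functional. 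The resulting formula has the form of a PPP Laplace functional with intensity $K_{x_0}^{!}(x,x)$, which serves as a useful sanity check consistent with the PPP specialization noted after Theorem~\ref{COPThm}.
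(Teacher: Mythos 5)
Your proposal is correct and is exactly the route the paper intends: it states that the lemma "can be proved by applying diagonal approximation to Theorem~\ref{COPThm}" and omits the details, which are precisely your factorization of the integrand, collapse of the $n$-fold integral to an $n$-th power via Fubini, and recognition of the exponential series. Your added remark on integrability of the single two-dimensional integral is a sensible (and harmless) supplement to what the paper leaves implicit.
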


Lemma~\ref{COPThmDiagApproxLemma} can be proved by applying diagonal approximation to Theorem~\ref{COPThm}, thus we omit the proof. 

Next, we evaluate the accuracy of Lemma~\ref{COPThmDiagApproxLemma} by assuming the BSs are distributed according to the Gauss DPP. In addition, the power-law path loss model with path loss exponent $\beta >2$ is used for simplicity, i.e., $\mathit{l}(x)=\|x\|^{-\beta}$ for $x \in \mathbb{R}^2$.

\begin{corollary}
When BSs are distributed according to the Gauss DPP with parameters $(\lambda,\alpha)$, the SIR distribution can be approximated under the diagonal approximation as:
\begin{align}
\allowdisplaybreaks
\mathbb{P}(\text{SIR}(0,\Phi) >T ) \approx & \int_{0}^{+\infty} \lambda 2 \pi r_0 \exp\biggl(-\lambda 2 \pi \biggl[ \int_{0}^{r0} (1-\exp(-\frac{2(r^2+r_{0}^{2})}{\alpha^2})I_0(\frac{4rr_0}{\alpha^2}))r{\rm d}r \nonumber\\
&+\int_{r_0}^{+\infty}(1-\exp(-\frac{2(r^2+r_{0}^2)}{\alpha^2})I_0(\frac{4rr_0}{\alpha^2}))\frac{Tr_{0}^{\beta}r}{Tr_{0}^{\beta}+r^{\beta}} {\rm d}r\biggl]\biggl){\rm d} r_0 ,\label{DiagApproxGaussEq}
\end{align}
\end{corollary}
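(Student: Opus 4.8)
The plan is to specialize the diagonal-approximation formula of Lemma~\ref{COPThmDiagApproxLemma} to the Gauss DPP covariance function and the power-law path loss $\mathit{l}(x)=\|x\|^{-\beta}$, and then to reduce the planar integral appearing in the exponent to the two one-dimensional radial integrals in~(\ref{DiagApproxGaussEq}). The first step is to evaluate the diagonal of the Palm kernel. Using~(\ref{PalmEqn}) from Lemma~\ref{PalmDPP} together with the Hermitian property of $K$, I would write $K_{x_0}^{!}(x,x) = K(x,x) - |K(x,x_0)|^2/K(x_0,x_0)$. For the Gauss model, $K(x,x)=K_0(0)=\lambda$ and $K(x,x_0)=K_0(x-x_0)=\lambda\exp(-\|x-x_0\|^2/\alpha^2)$, so that
\begin{equation*}
K_{x_0}^{!}(x,x) = \lambda\left(1 - \exp\left(-\frac{2\|x-x_0\|^2}{\alpha^2}\right)\right).
\end{equation*}

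Next I would pass to polar coordinates $x = (r\cos\theta,\, r\sin\theta)$ with the serving BS fixed at $x_0=(r_0,0)$, so that $\|x-x_0\|^2 = r^2 + r_0^2 - 2rr_0\cos\theta$. Since the path loss factor $1 - \mathbbm{1}_{|x|\geq r_0}/(1+T\mathit{l}(x)/\mathit{l}(x_0))$ depends only on $r$, the angular integration acts solely on $K_{x_0}^{!}(x,x)$. Applying the identity $\int_0^{2\pi}\exp(\pm\beta\cos\theta)\,{\rm d}\theta = 2\pi I_0(\beta)$---the same tool used in Corollary~\ref{MeanIntCoro}---I would obtain
\begin{equation*}
\int_0^{2\pi} K_{x_0}^{!}(x,x)\,{\rm d}\theta = 2\pi\lambda\left(1 - \exp\left(-\frac{2(r^2+r_0^2)}{\alpha^2}\right)I_0\left(\frac{4rr_0}{\alpha^2}\right)\right).
\end{equation*}

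Finally I would dispatch the radial factor. For the power-law path loss, $T\mathit{l}(x)/\mathit{l}(x_0) = Tr_0^{\beta}/r^{\beta}$, so the indicator forces a split of the radial integral at $r=r_0$: on $[0,r_0)$ the indicator vanishes and the factor equals $1$, while on $[r_0,\infty)$ it equals $Tr_0^{\beta}/(Tr_0^{\beta}+r^{\beta})$. Inserting these two pieces, together with the $2\pi\lambda$ prefactor and the Jacobian $r$, into the exponent of Lemma~\ref{COPThmDiagApproxLemma} produces exactly the two bracketed integrals of~(\ref{DiagApproxGaussEq}), which completes the derivation. The computation is essentially mechanical; the only step requiring care is the angular integration, where one must recognize the $\cos\theta$ cross-term in $\|x-x_0\|^2$ as the argument generating the modified Bessel function $I_0$. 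This is the sole point at which the Gaussian structure of the kernel is used in an essential way, and it is where I expect the main (though modest) obstacle to lie.
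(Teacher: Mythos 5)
Your proposal is correct and follows exactly the route the paper intends: substitute the Gauss kernel into Lemma~\ref{COPThmDiagApproxLemma}, compute $K_{x_0}^{!}(x,x)=\lambda(1-\exp(-2\|x-x_0\|^2/\alpha^2))$, and use $\int_0^{2\pi}\exp(\pm\beta\cos\theta)\,{\rm d}\theta=2\pi I_0(\beta)$ to carry out the angular integration before splitting the radial integral at $r_0$. The paper's own proof is a one-line appeal to precisely these ingredients, so your write-up is simply a fleshed-out version of the same argument.
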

where $I_0(\cdot)$ denotes the modified Bessel function of first kind with parameter $\nu = 0$.
\begin{proof}
Based on the fact that $\int_{0}^{2\pi} \exp(\pm \beta \cos(x)) {\rm d} x = 2\pi I_0(\beta)$~\cite[p. 491]{integrate}, this corollary can be derived by substituting the Gauss DPP kernel into Lemma~\ref{COPThmDiagApproxLemma}. 
\end{proof}


\begin{figure}
        \begin{subfigure}[b]{0.47\textwidth}
                \includegraphics[height=2in, width=3.3in]{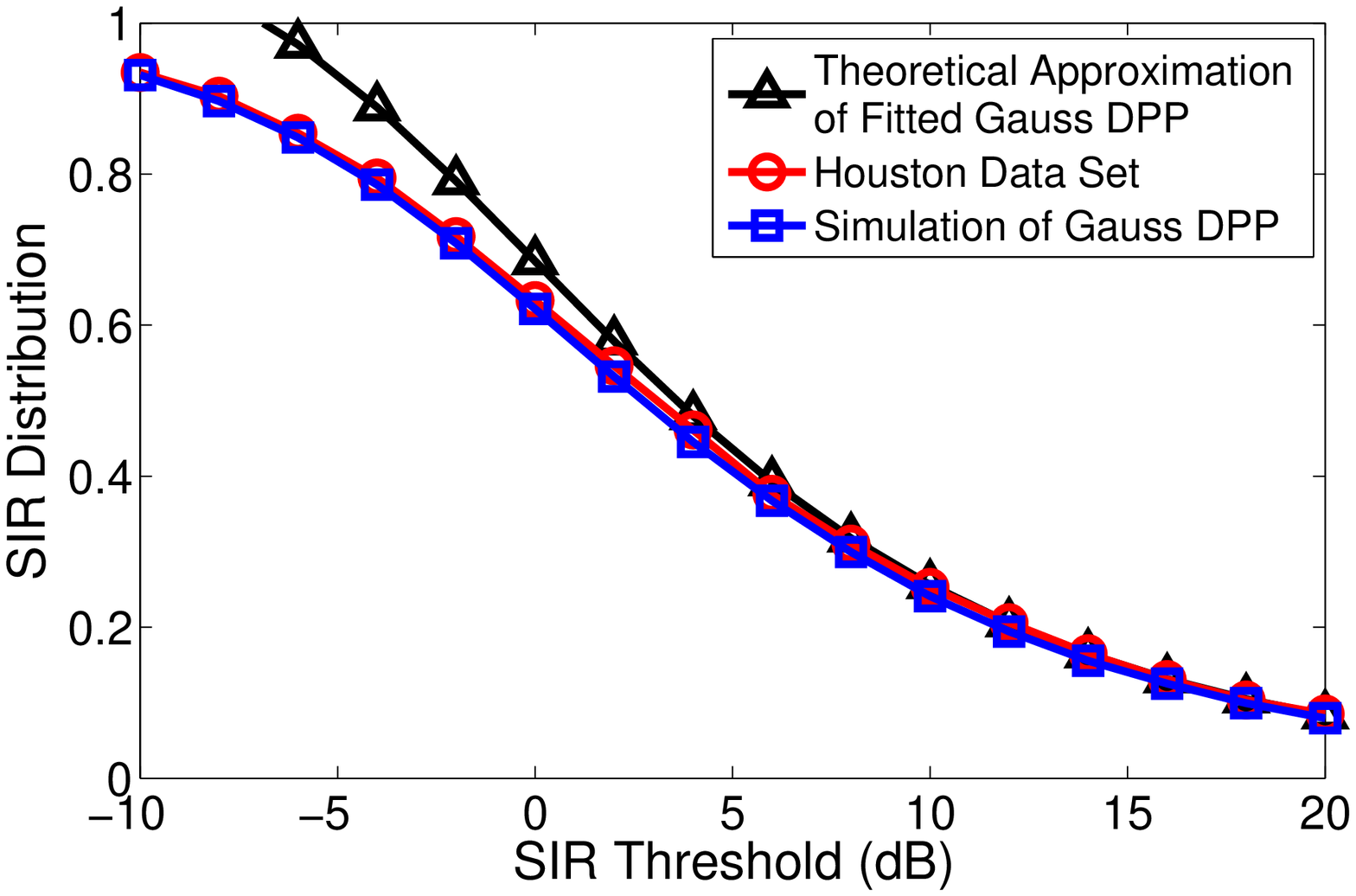}
                \caption{Houston data set}
        \end{subfigure}
        \hfill
        \begin{subfigure}[b]{0.47\textwidth}
                \includegraphics[height=2in, width=3.3in]{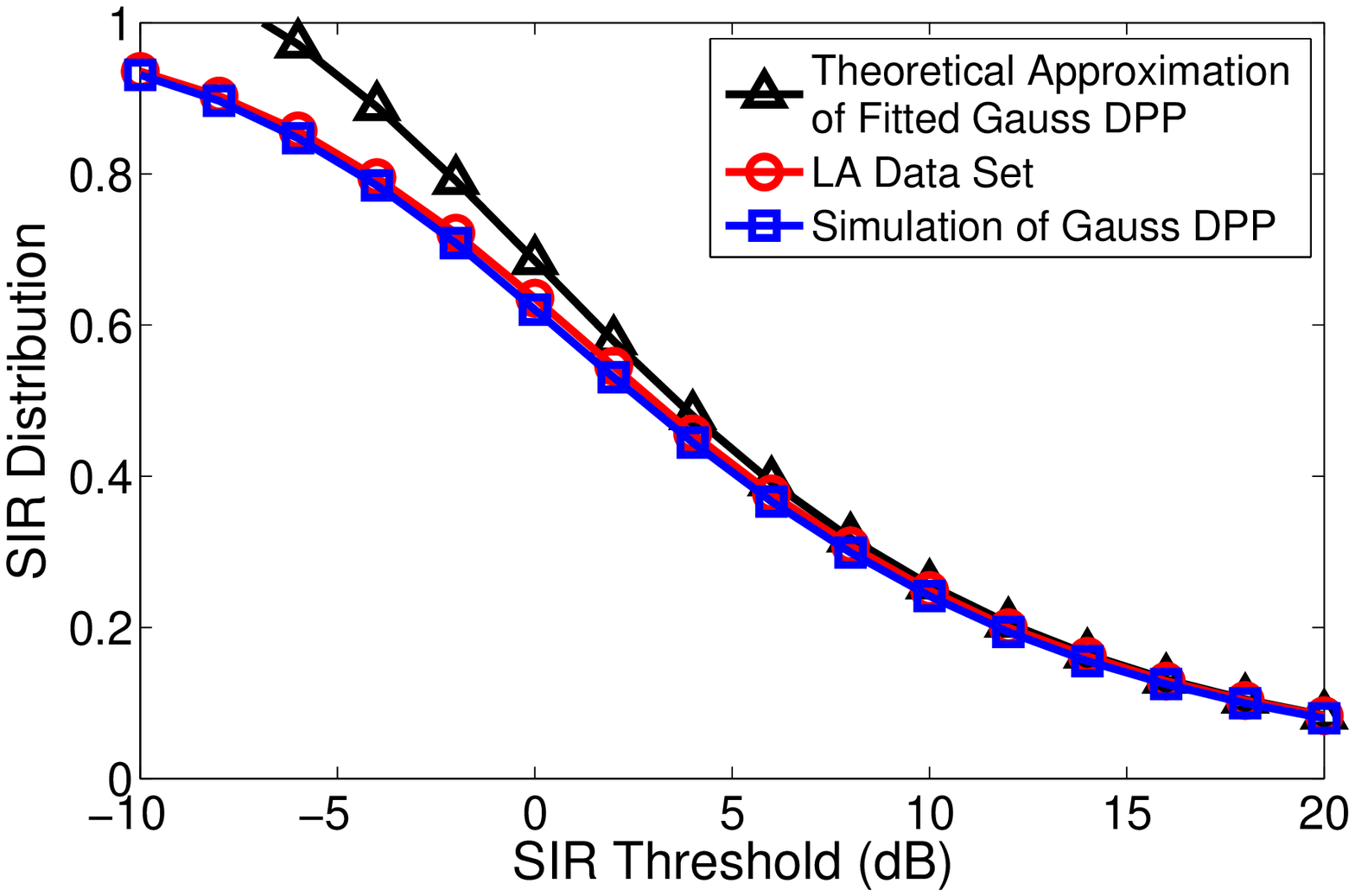}
                \caption{LA data set}
        \end{subfigure}
        \caption{Diagonal approximation to the SIR distribution of the fitted Gauss DPP.}\label{fig:COP_Approx}
\end{figure}

The QMC integration method is used to evaluate (\ref{DiagApproxGaussEq}) with path loss exponent $\beta = 4$, and the result for the Gauss DPP fitted to Houston data set is plotted in Fig.~\ref{fig:COP_Approx}. It can be observed that the diagonal approximation to the coverage probability is accurate compared to the simulation result in the high SIR regime, i.e., when the SIR threshold is larger than 6 dB. The same trend can also be found for the LA data set. Therefore, we can use the diagonal approximation as an accurate estimate for the coverage probability in the high SIR regime. 

\section{Goodness-of-fit for Stationary DPPs to Model BS Deployments}~\label{SECModelAccu}
Given that the stationary DPP models are tractable, we provide rigorous investigation of their modeling accuracy to real BS deployments in this section. 
Our simulations are based on the publicly available package for DPP models~\cite{rubak} implemented in R, which is used as a supplement to the Spatstat library~\cite{R}.

\subsection{Summary Statistics}~\label{SubSecSummStat}
To test the goodness-of-fit of these DPP models, we have used Ripley's K function and the coverage probability as performance metrics, which are described below:

\textbf{Ripley's K function:} Ripley's K function is a second order spatial summary statistic defined for stationary point processes. It counts the mean number of points within distance $r$ of a given point in the point process excluding the point itself. Formally, the K function $\mathbb{K}(r)$ for a stationary and isotropic point process $\Phi$ with intensity $\lambda$ is defined as: 
\begin{align}
\allowdisplaybreaks
\mathbb{K}(r)= \frac{\mathbb{E}_{o}^{!}\left(\Phi(B(0,r))\right)}{\lambda},
\end{align}
where $\mathbb{E}_{o}^{!}(\cdot)$ is the expectation with respect to the reduced Palm distribution of $\Phi$. 


The K-function is used as a measure of repulsiveness/clustering of spatial point processes. Specifically, compared to the PPP which is completely random, a repulsive point process model will have a smaller K function, while a clustered point process model will have a larger K function. 



\textbf{Coverage Probability:} The coverage probability is defined as the probability that the received SINR at the typical user is larger than the threshold $T$. When measuring the fitting accuracy of spatial point processes to real BS deployments, metrics related to the wireless system such as the coverage probability are more practical. In particular, the coverage probability also depends on the repulsive/clustering behavior of the underlying point process used to model the BS deployment. Compared to the fitted PPP, due a larger empty space function, the distance from the typical user to its serving BS is stochastically less in a fitted repulsive point process. Similarly, due to a smaller nearest neighbor function, the fitted repulsive point process has stochastically larger distance from the serving BS to its closest interfering BS than the PPP case. Therefore, from~(\ref{COPEq}), a larger coverage probability is expected when the BS deployments are modeled by more repulsive spatial point processes. 
We will use the same parameter assumptions as in Section~\ref{COPNumSubSec} for evaluating the coverage probability. Since the thermal noise power is assumed to be 0, the CCDF of SIR at the typical user, i.e., $\mathbb{P}(\text{SIR}(0,\Phi) > T)$,  coincides with its coverage probability with threshold $T$. 

\subsection{Hypothesis Testing using Summary Statistics}\label{SubSecStat}

In this part, we evaluate the goodness-of-fit of stationary DPP models using the summary statistics discussed above. Particularly, we fit the real BS deployments in Fig.~\ref{RealBS} to the Gauss, Cauchy and Generalized Gamma DPPs.


\begin{figure}
        \begin{subfigure}[b]{0.47\textwidth}
                \includegraphics[height=2in, width=3.2in]{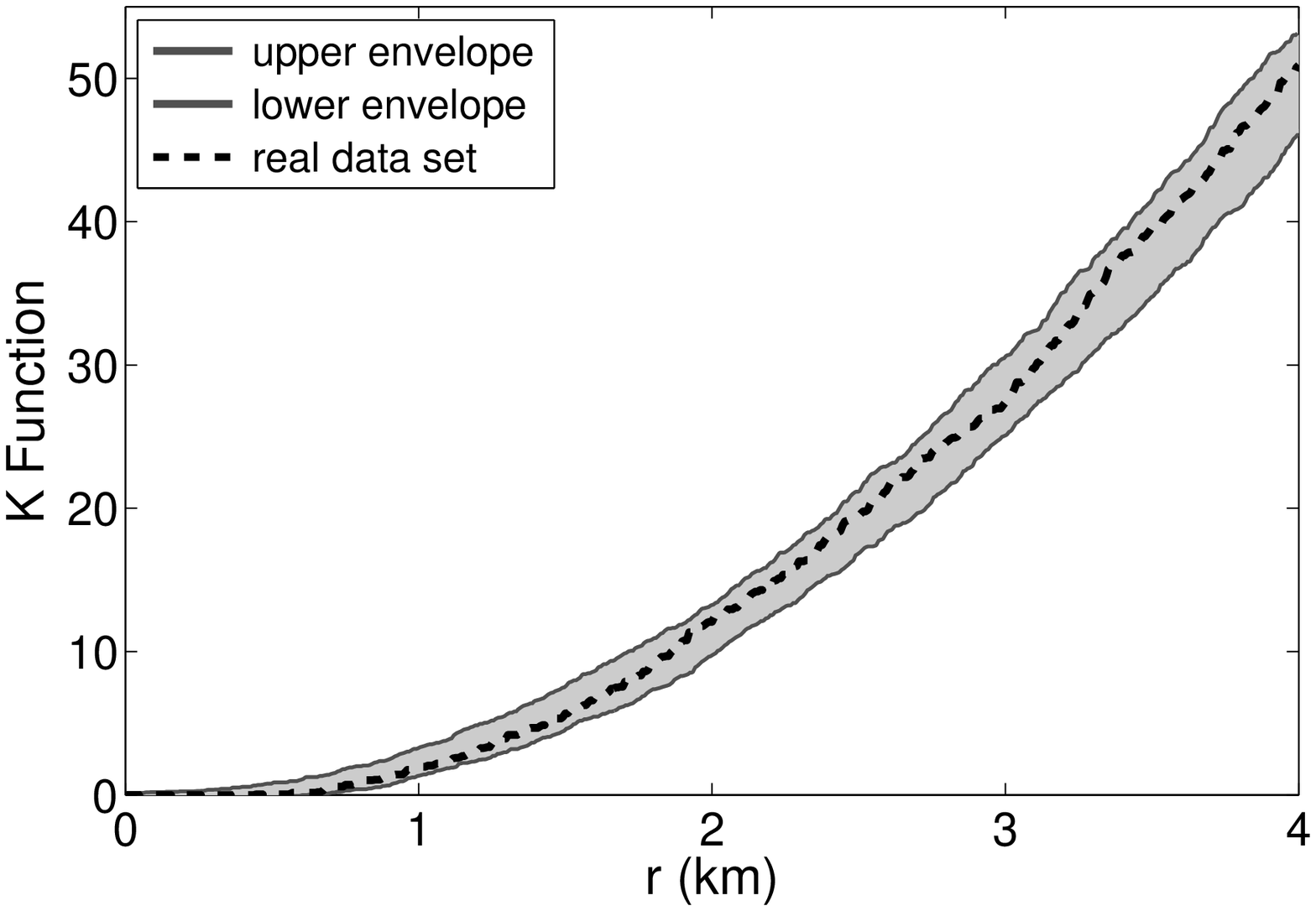}
                \caption{Houston data set}
        \end{subfigure}
        \hfill
        \begin{subfigure}[b]{0.47\textwidth}
                \includegraphics[height=2in, width=3.2in]{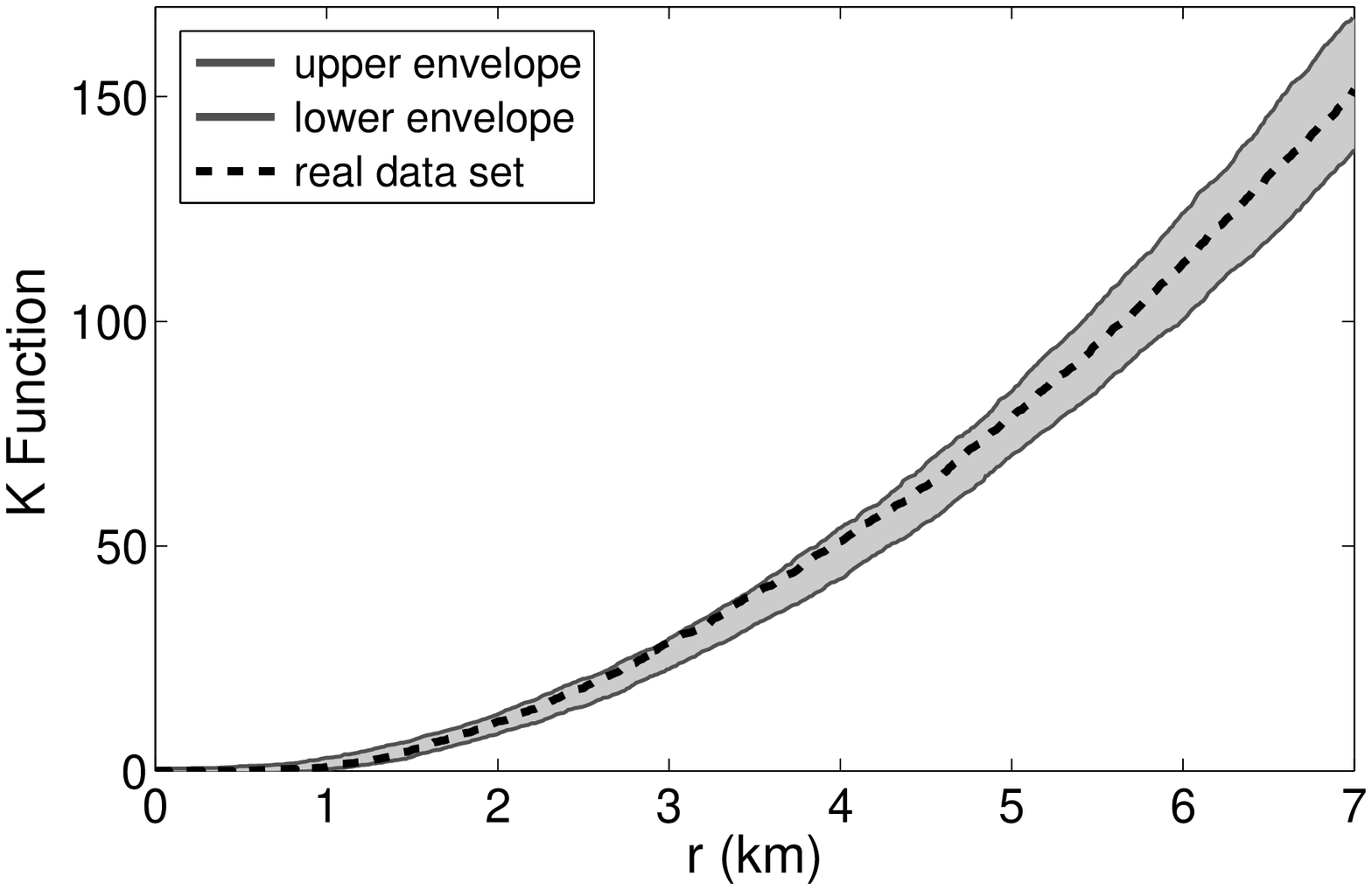}
                \caption{LA data set}
        \end{subfigure}
        \caption{K function of the fitted Gauss DPP.}\label{GaussfitHou}
\end{figure}



\begin{figure}
        \begin{subfigure}[b]{0.47\textwidth}
                \includegraphics[height=2in, width=3.2in]{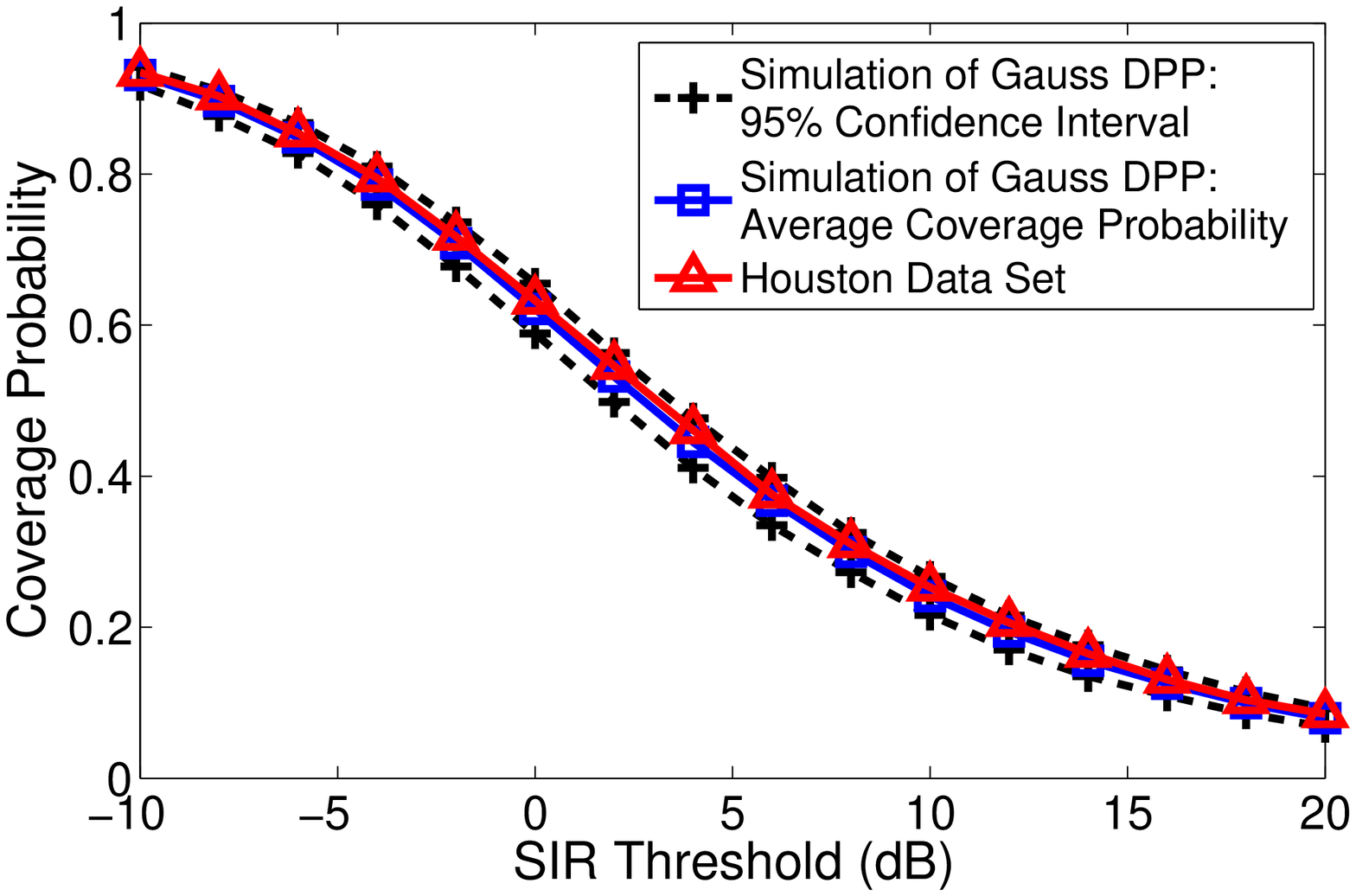}
                \caption{Houston data set}
        \end{subfigure}
        \hfill
        \begin{subfigure}[b]{0.47\textwidth}
                \includegraphics[height=2in, width=3.2in]{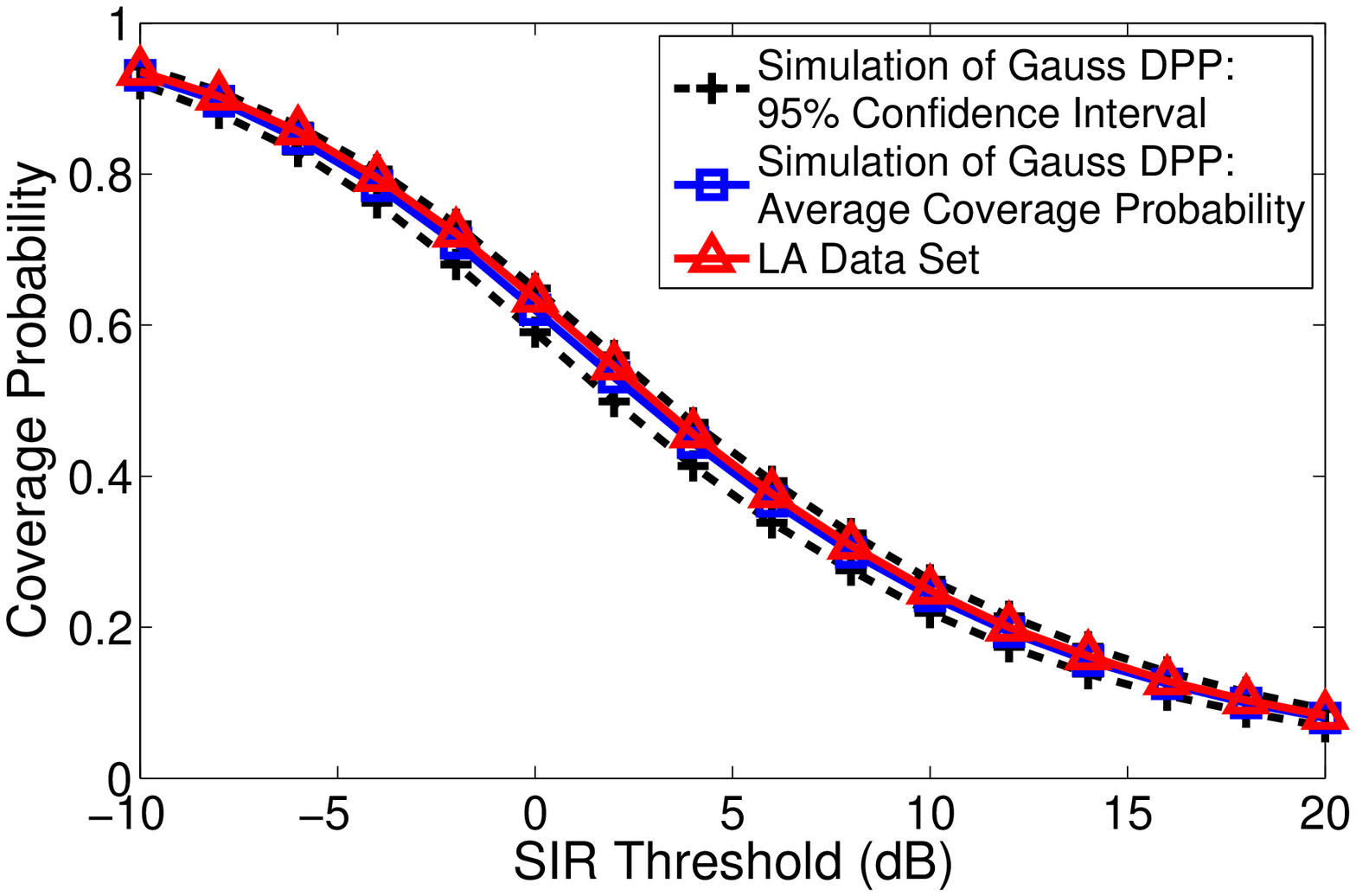}
                \caption{LA data set}
        \end{subfigure}
        \caption{Coverage probability of the fitted Gauss DPP.}\label{coverageGauss}
\end{figure}

To evaluate the goodness-of-fit for these DPP models, we generate 1000 realizations of each DPP model and examine whether the simulated DPPs fit with the behavior of real BS deployments in terms of the summary statistics. Specifically, based on the null hypothesis that real BS deployments can be modeled as realizations of DPPs, we verify whether the K-function of the real data set lies within the envelope of the simulated DPPs. We use similar testing method for the coverage probability; a 95\% confidence interval is used for evaluation. 

\textbf{Goodness-of-fit for Gauss DPP Model:} The testing results for the K function of the fitted Gauss DPP are given in Fig.~\ref{GaussfitHou}, which clearly show that the K functions of the real BS deployments lie within the envelope of the fitted Gauss DPP. The coverage probability for the fitted Gauss DPP is provided in Fig.~\ref{coverageGauss}, from which it can be observed that the coverage probabilities of the Houston and LA data sets lie within the 95\% confidence interval of the simulated Gauss DPPs. In addition, the average coverage probability of the fitted Gauss DPP is slightly lower than that of real data sets, which means that the fitted Gauss DPP corresponds to a slightly smaller repulsiveness than the real deployments. 

Therefore, in terms of the above summary statistics, the Gauss DPP model can be used as a reasonable point process model for real BS deployments. In addition, due to the concise definition of its kernel, the shot noise analysis of the Gauss DPP is possible, which further motivates the use of Gauss DPPs to model real-world macro BS deployments.

\begin{figure}
        \begin{subfigure}[b]{0.47\textwidth}
                \includegraphics[height=2in, width=3.2in]{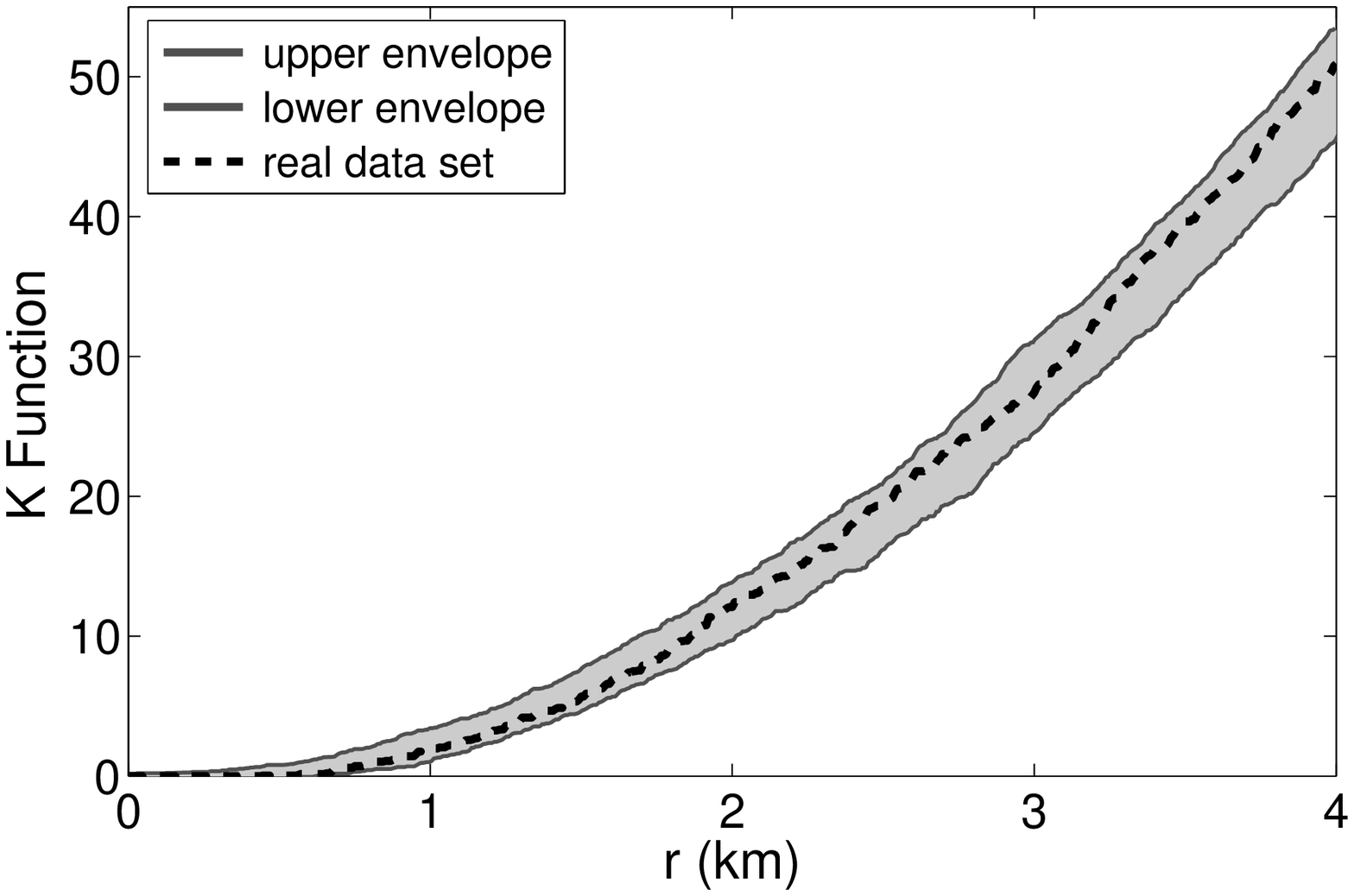}
               \caption{K function}
        \end{subfigure}
        \hfill
        \begin{subfigure}[b]{0.46\textwidth}
                \includegraphics[height=2in, width=3.2in]{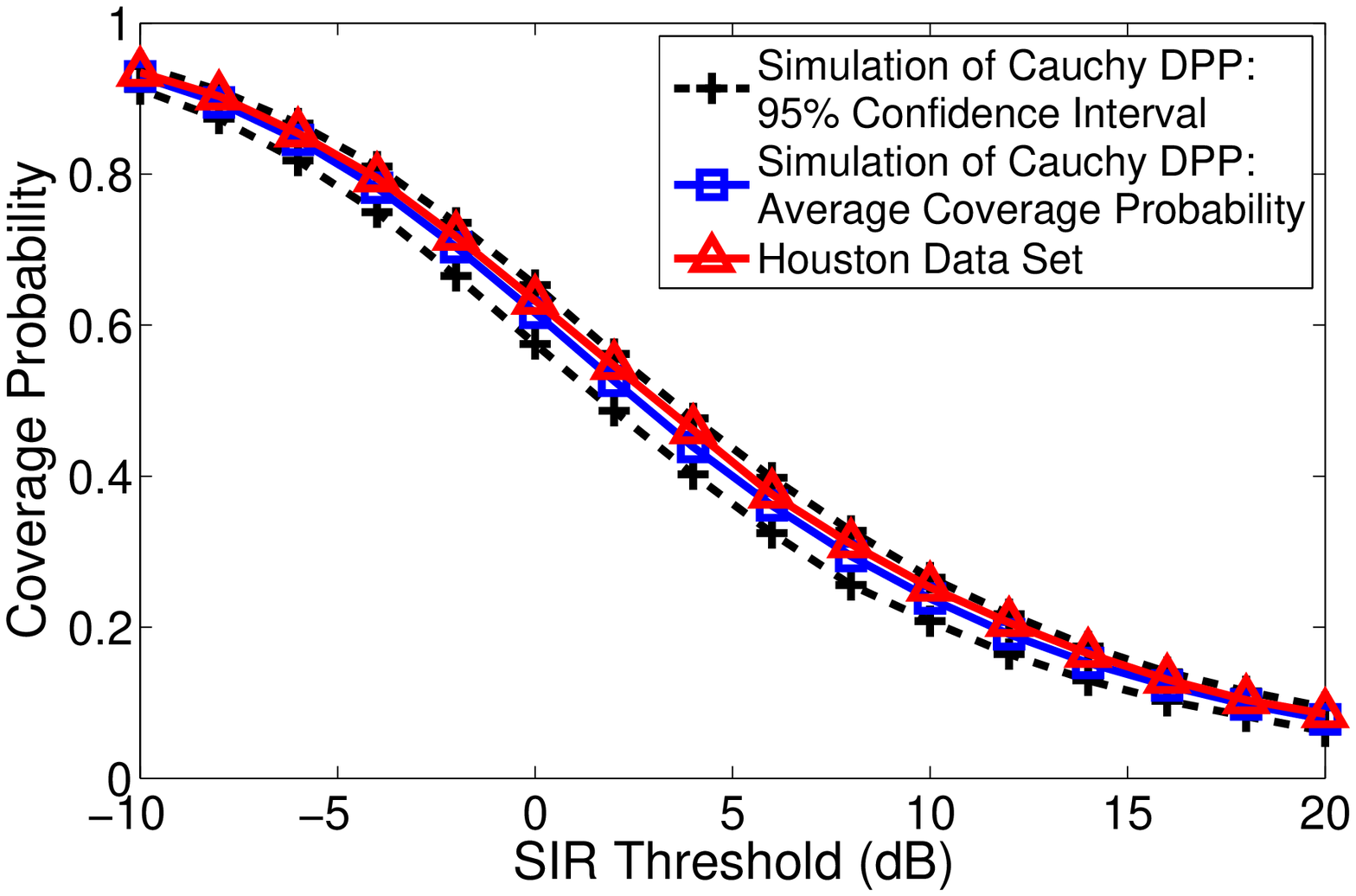}
                \caption{Coverage probability}
        \end{subfigure}
        \caption{Goodness-of-fit for the Cauchy DPP fitted to the Houston data set.}\label{cauchyfit}
\end{figure}

\begin{figure}
        \begin{subfigure}[b]{0.47\textwidth}
                \includegraphics[height=2in, width=3.2in]{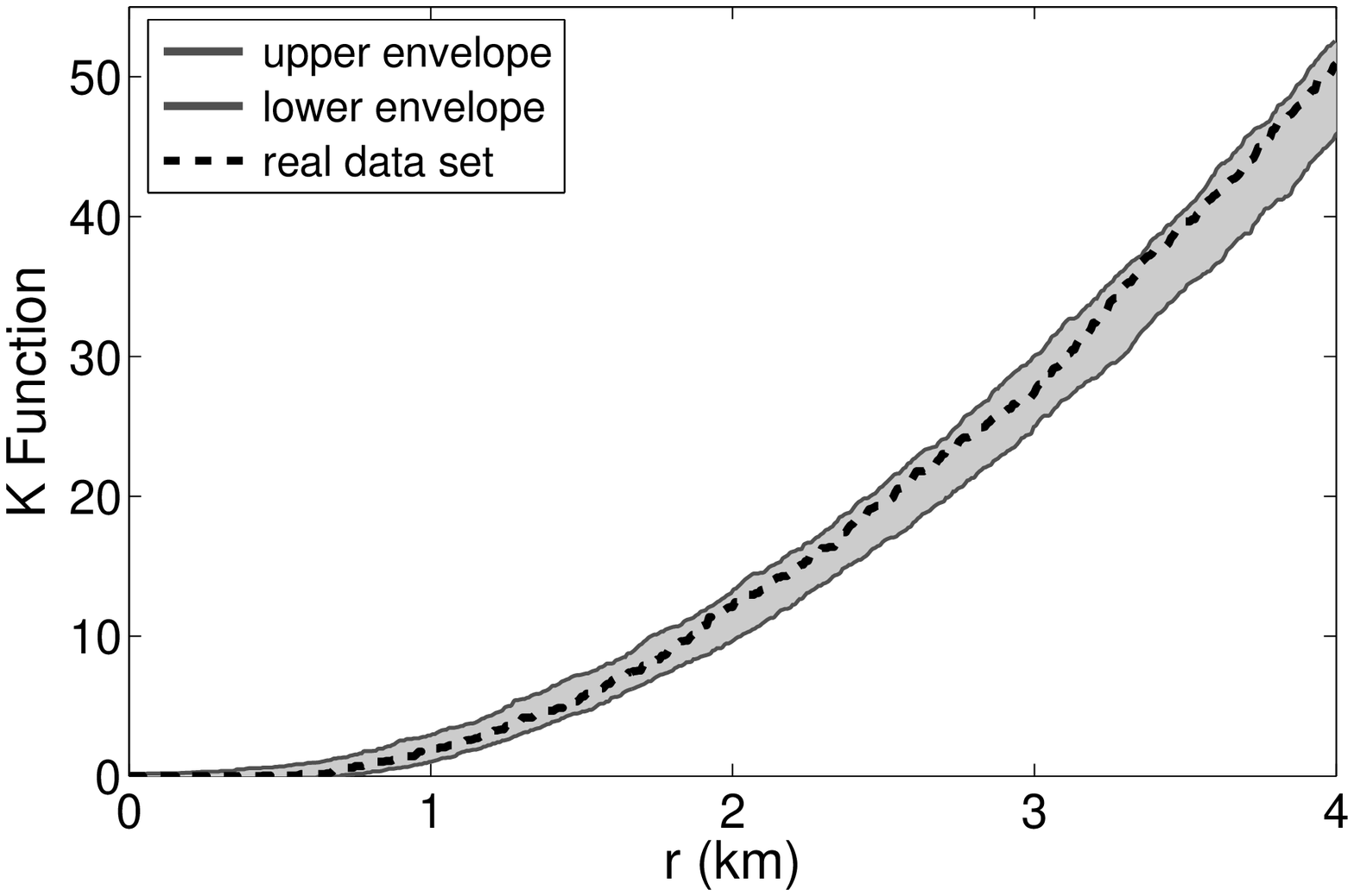}
                \caption{K function}
        \end{subfigure}
        \hfill
        \begin{subfigure}[b]{0.46\textwidth}
                \includegraphics[height=2in, width=3.2in]{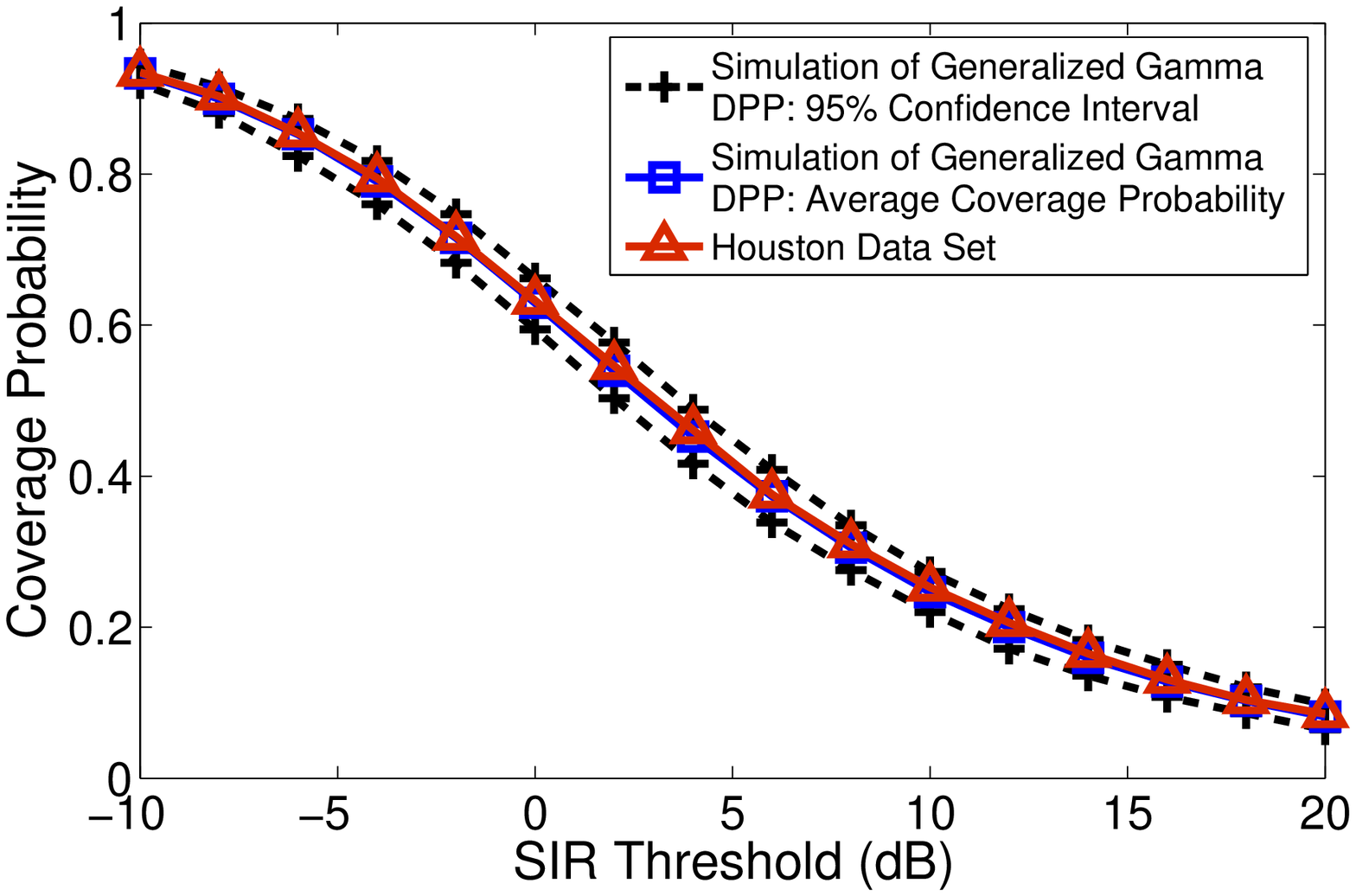}
                \caption{Coverage probability}
        \end{subfigure}
        \caption{Goodness-of-fit for the Generalized Gamma DPP fitted to the Houston data set.}\label{GGfit}
\end{figure}

\textbf{Goodness-of-fit for the Cauchy DPP Model:} Based on the same method as for the Gauss DPP model, we tested the goodness-of-fit for the Cauchy DPP model. The fitting results for the Houston data set are shown in Fig.~\ref{cauchyfit}, from which it can be concluded that the Cauchy DPP model is also a reasonable point process model for real BS deployments. Similar fitting results are also observed for the LA data set, and thus we omit the details.
Compared to the fitted Gauss DPP, the average coverage probability for the fitted Cauchy DPP in Fig.~\ref{cauchyfit} is slightly lower than that in Fig.~\ref{coverageGauss}, which means the fitted Cauchy DPP corresponds to a smaller repulsiveness than the Gauss DPP.

\textbf{Goodness-of-fit for the Generalized Gamma DPP Model:} The goodness-of-fit for the Generalized Gamma DPP fitted to the Houston data set is evaluated in Fig.~\ref{GGfit} (the LA data set has similar fitting results). The Generalized Gamma DPP provides the best fit among all these DPP models, especially in terms of coverage probability. In Fig.~\ref{GGfit}, the average coverage probability of the fitted Generalized Gamma DPP almost exactly matches the real BS deployment, while the average coverage probability of the fitted Gauss DPP and the fitted Cauchy DPP all stay below the real data set. This is because the Generalized Gamma DPP corresponds to a higher repulsiveness (which will be proved in Section~\ref{RepComp}), from which a larger coverage probability is expected. 

\textbf{Goodness-of-fit for the PPP and the perturbed hexagonal model:} Finally, the goodness-of-fit for the PPP and the perturbed hexagonal grid model are studied. The perturbed hexagonal grid model is obtained by independently perturbing each point of a hexagonal grid in the random direction by a distance $d$~\cite{pairmodel}. This distance is uniformly distributed between 0 and $\eta r$, with $r$ being the radius of the hexagonal cells and $\eta$ is chosen as 0.5 in our simulation. Fig.~\ref{PPPnGrid} depicts the coverage probability of the PPP and of the perturbed hexagonal grid model, which correspond to a lower bound and an upper bound of the actual coverage probability respectively. This is because the PPP exhibits complete spatial randomness while the perturbed grid model maintains good spatial regularity. 


\begin{figure}
        \begin{subfigure}[b]{0.45\textwidth}
                \includegraphics[height=2in, width=3.2in]{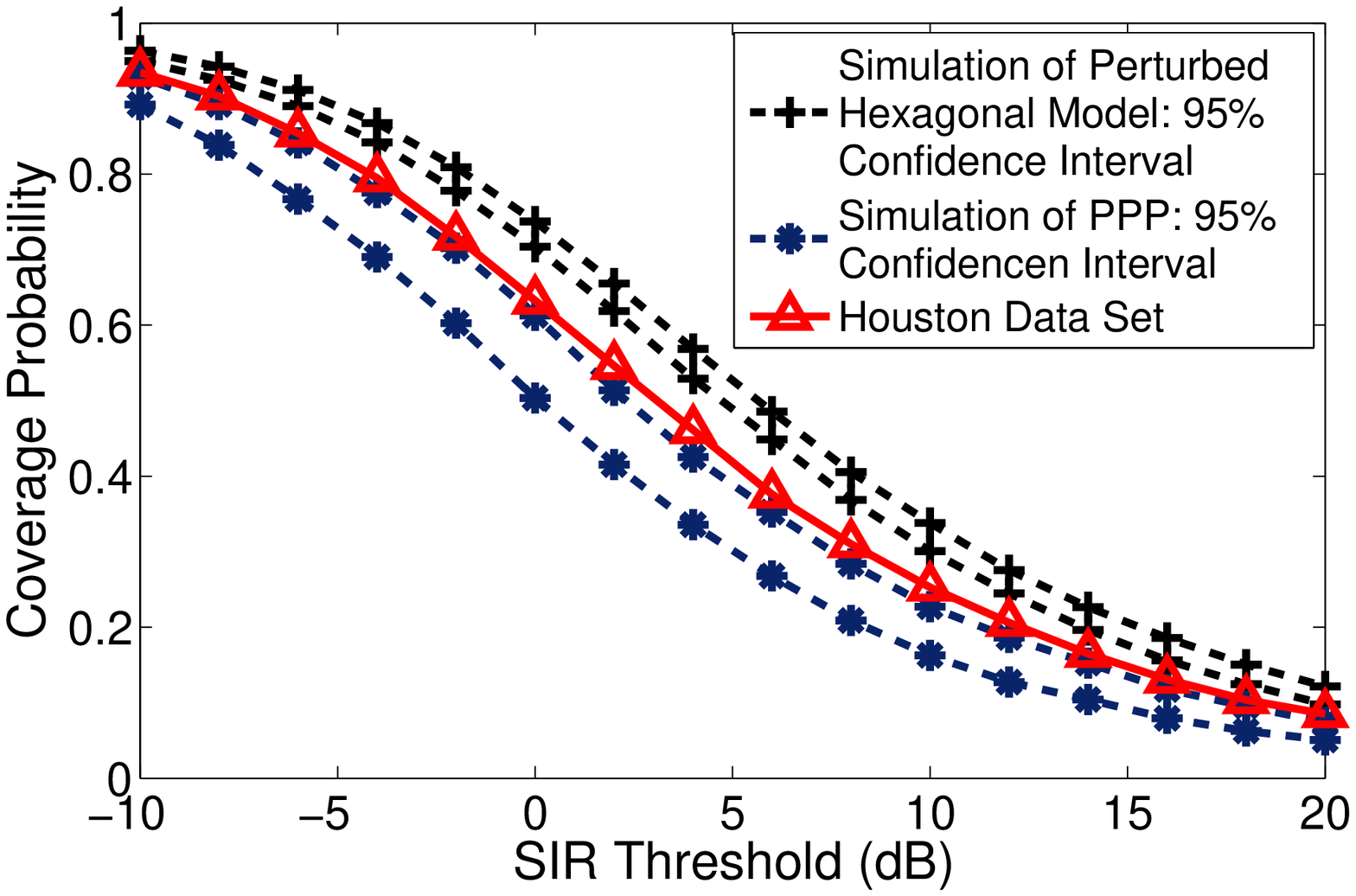}
                \caption{Houton data set}
        \end{subfigure}
        \hfill
        \begin{subfigure}[b]{0.45\textwidth}
                \includegraphics[height=2in, width=3.2in]{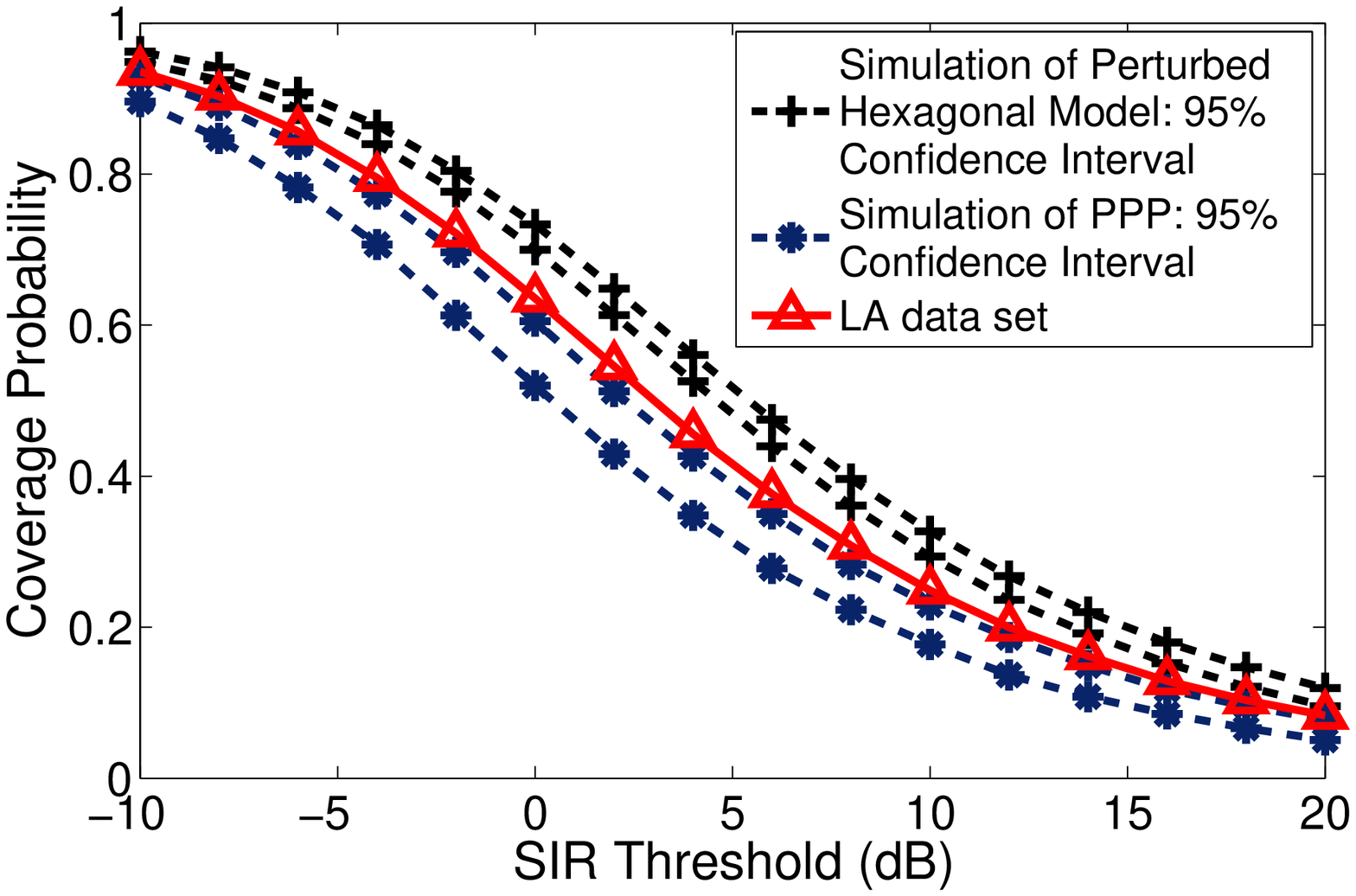}
                \caption{LA data set}
        \end{subfigure}
        \caption{Coverage probability of the PPP and the perturbed grid model.}\label{PPPnGrid}
\end{figure}

\subsection{Repulsiveness of Different DPPs}~\label{RepComp}
In order to explain why the Generalized Gamma DPP has larger repulsiveness, we use the metric suggested in~\cite{rubak} to measure the repulsiveness of different DPPs.
Specifically, from Lemma~\ref{PalmDPP}, the intensity measure of a stationary DPP $\Phi$ under its reduced Palm distribution is $\rho_{o}^{(1)}(x)=\rho^{(2)}(0,x)/\rho^{(1)}(x)$, where $\rho^{(2)}$ and $\rho^{(1)}$ are the second and the first order product density of $\Phi$. By calculating the difference of the total expected number of points under the probability distribution $\mathbb{P}$ and the reduced Palm distribution $\mathbb{P}_{o}^{!}$, the repulsiveness of a stationary DPP $\Phi$ with intensity $\lambda$ can be measured using the following metric~\cite{rubak}:
\begin{equation}
	\small
	\mu= \int_{\mathbb{R}^{2}} \left[\lambda - \rho_{o}^{(1)}(x)\right] {\rm d} x =\frac{1}{\lambda} \int_{\mathbb{R}^{2}} | K_{0} (x)|^{2} {\rm d} x =\frac{1}{\lambda} \int_{\mathbb{R}^{2}} | \varphi (x) |^{2} {\rm d} x,
\end{equation}
where $K_0(x)$ and $\varphi(x)$ denote the covariance function and spectral density of $\Phi$ respectively. 

PPP has $\mu = 0$ due to Slivnyak's theorem, while the grid-based model has $\mu=1$ since the point at the origin is excluded under reduced Palm distribution. Generally, larger value of $\mu$ will correspond to a more repulsive point process. This repulsiveness measure for the Gauss, Cauchy and Generalized Gamma model can be calculated as: $\mu_{\text{gauss}}=\lambda \pi \alpha^{2}/2$, $\mu_{\text{cauchy}}=\lambda \pi \alpha^{2}/(2 \nu +1)$, and $\mu_{\text{gengamma}}=\lambda \nu \alpha^{2}/(2^{1+2/\nu} \pi \Gamma(2/\nu))$.
Based on the parameters in Table~\ref{par1}, we can calculate the repulsiveness measure of each DPP model fitted to the Houston data set as $\mu_{\text{gauss}}=0.4999$, $\mu_{\text{cauchy}}=0.4365$ and $\mu_{\text{gengamma}}=0.5905$. Similarly, the repulsiveness measure of each DPP model fitted to the LA data set is given by $\mu_{\text{gauss}}=0.5004$, $\mu_{\text{cauchy}}=0.4351$, $\mu_{\text{gengamma}}=0.5479$. Therefore, it can be concluded that the fitted Generalized Gamma DPP has the largest repulsiveness, followed by the fitted Gauss DPP, while the fitted Cauchy DPP is the least repulsive. Since higher repulsiveness will result in more regularity for the point process, a Generalized Gamma DPP generally corresponds to a larger average coverage probability.


\section{Performance Comparisons of DPPs and PPPs}\label{SecPerfEva}
Based on the analytical, numerical and statistical results from previous sections, we demonstrate that the DPPs are more accurate than the PPPs to predict key performance metrics in cellular networks for the following reasons.

Firstly, since the DPPs have more regularly spaced point pattern, they will have larger empty space function than the PPPs. Equivalently, this means the distance from the origin to its closest point on the DPPs fitted to real deployments is stochastically less than the PPPs, which can be observed in Fig.~\ref{ESFGaussvsPPP} for the Gauss DPP. Therefore, if each user is associated with its nearest BS, DPPs will lead to a stronger received power at the typical user compared to PPPs in the stochastic dominance sense. 

\begin{figure}
        \begin{subfigure}[b]{0.47\textwidth}
                \includegraphics[height=2in, width=3.2in]{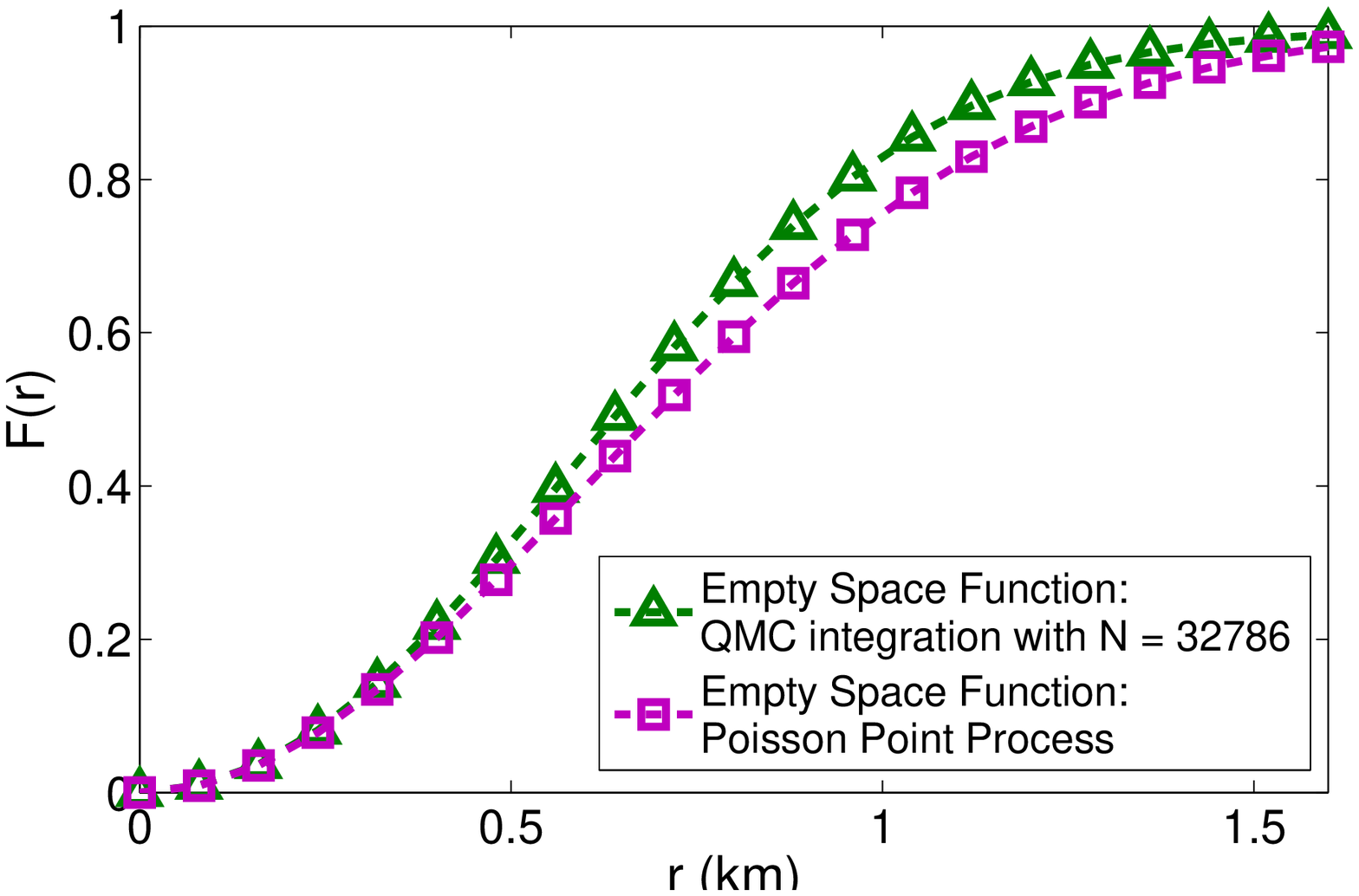}
                \caption{Empty space function}
                \label{ESFGaussvsPPP}
        \end{subfigure}
        \hfill
        \begin{subfigure}[b]{0.47\textwidth}
                \includegraphics[height=2in, width=3.2in]{}
                \caption{Nearest neighbor function}
                \label{NNGaussvsPPP}
        \end{subfigure}
        \caption{Comparison of the Gauss DPP and PPP fitted to Houston data set.}\label{GaussvsPPP}
\end{figure}

Secondly, the fitted DPPs will have smaller nearest neighbor function than the PPP, which can be observed in Fig.~\ref{NNGaussvsPPP} for the Gauss DPP. In addition, we can also observe from Fig.~\ref{NNGaussvsPPP} that the nearest neighbor function for the Gauss DPP is much smaller than the PPP when $r$ is small. This indicates that the PPP will largely overestimate the nearest neighbor function when $r$ is small, which leads to much closer strong interfering BSs compared to the Gauss DPP. 

In addition to the empty space function and the nearest neighbor function, the DPPs are also more accurate in estimating the interference and coverage probability than the PPP. When each user is associated with an arbitrary but fixed BS, an immediate implication of Lemma~\ref{MeanInt} is that the mean interference for a stationary DPP $\Phi$ with intensity $\lambda$ is strictly smaller than that of the PPP with the same intensity. This can be observed by separating~(\ref{meanInt1}) as:
\begin{equation}
\mathbb{E}[I |  x_{0} = (r_0,0)]  = P\lambda \int_{\mathbb{R}^2} \mathit{l} (x) {\rm d}x - \frac{P}{\lambda}\int_{\mathbb{R}^2} |K(x,x_0)|^2 \mathit{l}(x) {\rm d}x,
\end{equation}
where the first term is equal to the mean interference under the PPP distributed BSs by Slivnyak's theorem, while the second term stems from the soft repulsion among BSs in the DPP $\Phi$. 

Finally, under the nearest BS association scheme, the coverage probability estimated from the fitted DPPs is validated to be close to the BS deployments in Section~\ref{SubSecStat}. In contrast, the PPP only provides a lower bound to the actual coverage probability.

\section{Conclusion}~\label{SECSumm}
In this paper, the analytical tractability and the modeling accuracy of determinantal point processes for modeling cellular network BS locations are investigated. First, cellular networks with DPP configured BSs are proved to be analytically tractable. Specifically, we have summarized the fact that DPPs have closed form expressions for the product density and reduced Palm distribution, then we have derived the Laplace functional of the DPPs and of independently marked DPPs for functions satisfying certain mild conditions. Based on these computational properties, the empty space function, the nearest neighbor function, and the mean interference were derived analytically and evaluated using the Quasi-Monte Carlo integration method. 
In addition, the Laplace transform of the interference and the SIR distribution under the nearest BS association scheme are also derived and numerically evaluated. 

Next, using the K function and the coverage probability, DPPs are shown to be accurate by fitting three stationary DPP models to two real macro BS deployments: the Gauss DPP, Cauchy DPP and Generalized Gamma DPP. 
In particular, the Generalized Gamma DPP is found to provide the best fit in terms of coverage probability due to its higher repulsiveness. However, the Generalized Gamma DPP is generally less tractable since it is defined based on its spectral density. The Gauss DPP also provides a reasonable fit to real BS deployments, but with higher mathematical tractability, due to the simple definition of its kernel. Compared to other DPP models, the fitted Cauchy DPP has the smallest repulsiveness and also less precise results in terms of the summary statistics. Therefore, we conclude that the Gauss DPP provides the best tradeoff between accuracy and tractability.

Finally, based on a combination of analytical, numerical and statistical results, we demonstrate that DPPs outperform PPPs to model cellular networks in terms of several key performance metrics. 

Future work may include finding different DPP examples that lead to more efficient evaluations of the key performance metrics (i.e., without relying on Quasi-Monte Carlo integration), or extending the SISO single-tier network model analyzed here to MIMO or HetNet models. 

\section*{Acknowledgments}
The work of F. Baccelli and Y. Li was supported by a grant of the Simons Foundation (\#197982 to UT Austin). The authors thank Dr. Ege Rubak (Aalborg University) for sharing the DPP simulation package, and Dr. Paul Keeler (ENS and INRIA, Paris) for his suggestions on the Quasi-Monte Carlo integration method.
\appendices
\section{Proof of Lemma~\ref{LFDPP2}}~\label{LFDPP2Appdx}
For any function $f$ satisfying the conditions in Lemma~\ref{LFDPP2}, define the following function for $k \in \mathbb{N}$: 
\allowdisplaybreaks
\begin{align}
f_k(x)=
\left\{
\begin{array}{ll}
f(x), & \text{if } x \in B(0,k), \\
0, & \text{otherwise}.
\end{array}
\emph{ } \right.
\end{align}
Based on Lemma~\ref{LFDPP}, since each $f_k(x)$ has finite support, we have:
$\mathbb{E}\left[\exp\left(-\int_{\mathbb{R}^2} f_k(x) \Phi({\rm d} x)\right)\right] = \sum_{n=0}^{+\infty}  \frac{(-1)^n}{n!}  \int_{(\mathbb{R}^2)^n}\det \left(K(x_i,x_j)\right)_{1 \leq i,j \leq n} \prod_{i=1}^{n} \left(1-\exp(-f_k(x_i))\right) {\rm d}x_1 ... {\rm d}x_n.$

From the monotone convergence theorem, we have:\\
1. $\lim\limits_{k \rightarrow \infty} \mathbb{E}\left[ \exp\left(-\int_{\mathbb{R}^2} f_k(x) \Phi({\rm d} x) \right)\right] = \mathbb{E} \left[ \exp\left(-\int_{\mathbb{R}^2} f(x) \Phi({\rm d} x)\right)\right]$.

Let us now show that:\\
2. $\lim\limits_{k \rightarrow \infty} \sum_{n=0}^{+\infty}  \frac{(-1)^n}{n!}  \int_{(\mathbb{R}^2)^n}\det \left(K(x_i,x_j)\right)_{1 \leq i,j \leq n} \prod_{i=1}^{n} \left(1-\exp(-f_k(x_i))\right) {\rm d}x_1 ... {\rm d}x_n = \sum_{n=0}^{+\infty}  \\ \frac{(-1)^n}{n!}  \int_{(\mathbb{R}^2)^n}\det \left(K(x_i,x_j)\right)_{1 \leq i,j \leq n} \prod_{i=1}^{n} \left(1-\exp(-f(x_i))\right) {\rm d}x_1 ... {\rm d}x_n$.

To prove this result, we use the following lemma~\cite[Theorem 7.11]{rudin1976principles}:
\begin{lemma}~\label{limsumlemma}
Suppose $f_n \rightarrow f$ uniformly on a set $E$ in a metric space. Let $x$ be a limit point on $E$ such that $\lim\limits_{t \rightarrow x} f_n(t)$ exists for $\forall n \in \mathbb{N}$, then 
$\lim\limits_{t \rightarrow x}\lim\limits_{n \rightarrow \infty} f_n(t) = \lim\limits_{n \rightarrow \infty} \lim\limits_{t \rightarrow x} f_n(t)$.
\end{lemma}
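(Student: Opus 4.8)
The plan is to prove Lemma~\ref{limsumlemma} by the classical two-step argument for commuting iterated limits, in which uniformity of the convergence $f_n \to f$ is the decisive ingredient. Write $A_n = \lim_{t \to x} f_n(t)$, which exists for every $n$ by hypothesis; the goal is to show that the sequence $\{A_n\}$ converges and that its limit coincides with $\lim_{t \to x} f(t)$.

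First I would establish that $\{A_n\}$ is convergent. The tool is the uniform Cauchy criterion: since $f_n \to f$ uniformly on $E$, for every $\epsilon > 0$ there is an $N$ with $\sup_{t \in E} |f_n(t) - f_m(t)| < \epsilon$ whenever $n,m \geq N$. Fixing such $n$ and $m$ and letting $t \to x$ through $E$, the inequality passes to the limit to give $|A_n - A_m| \leq \epsilon$. Hence $\{A_n\}$ is a Cauchy sequence in $\mathbb{R}$, and so converges to some limit $A$.

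Next I would show $\lim_{t \to x} f(t) = A$ by a triangle-inequality estimate. For any $t \in E$ and any index $n$,
\begin{equation*}
|f(t) - A| \leq |f(t) - f_n(t)| + |f_n(t) - A_n| + |A_n - A|.
\end{equation*}
Given $\epsilon > 0$, I first choose $n$ large enough that the first term is below $\epsilon/3$ for all $t$ (by uniform convergence) and the third term is below $\epsilon/3$ (by $A_n \to A$). With this $n$ held fixed, the definition of $A_n = \lim_{t \to x} f_n(t)$ supplies a neighborhood of $x$ on which the middle term is below $\epsilon/3$. Therefore $|f(t) - A| < \epsilon$ for all $t \in E$ sufficiently close to $x$, which is precisely the statement $\lim_{t \to x} f(t) = A$. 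Since $f = \lim_n f_n$, this reads $\lim_{t \to x} \lim_{n \to \infty} f_n(t) = A = \lim_{n \to \infty} A_n = \lim_{n \to \infty} \lim_{t \to x} f_n(t)$, as required.

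The crux is the first step, and this is exactly where uniform rather than merely pointwise convergence is needed: the bound on $|f_n(t) - f_m(t)|$ must hold uniformly in $t$ so that it may survive the passage $t \to x$ and yield Cauchyness of $\{A_n\}$. Under pointwise convergence alone this control is unavailable, and the two iterated limits may genuinely disagree. Once convergence of $\{A_n\}$ is secured, the remaining three-epsilon argument is entirely routine.
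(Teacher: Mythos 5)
Your proof is correct and is essentially the classical argument for this result: the paper does not reprove the lemma but cites it as Theorem 7.11 of Rudin's \emph{Principles of Mathematical Analysis}, and your two-step argument (uniform Cauchy criterion passed to the limit $t \to x$ to get convergence of $A_n$, followed by the three-epsilon estimate) is exactly that textbook proof. No gaps.
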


Let $h_n(k) = \sum_{m=0}^{n} \int_{(\mathbb{R}^2)^m} \frac{(-1)^m}{m!}\det \left(K(x_i,x_j)\right)_{1 \leq i,j \leq m} \prod_{i=1}^{m} \left(1-\exp(-f_k(x_i))\right) {\rm d}x_1 ... {\rm d}x_m$. We prove that $\{h_n\}$ converges uniformly $\forall k \in \mathbb{N}$. This is because:
\begin{align*}
\allowdisplaybreaks
&\biggl|\int_{(\mathbb{R}^2)^m} \frac{(-1)^m}{m!} \det \left(K(x_i,x_j)\right)_{1 \leq i,j \leq m} \prod_{i=1}^{m} \left(1-\exp(-f_k(x_i))\right) {\rm d}x_1 ... {\rm d}x_m \biggl|\\
\overset{(a)}{\leq} & \frac{1}{m!}  \left(\int_{\mathbb{R}^2}K(x,x) (1-\exp(-f(x)) ){\rm d}x \right)^m \triangleq M_m,
\end{align*}
where (a) follows from Hadamard's inequality, i.e., $\det (\left(K(x_{i},x_{j})\right)_{1\leq i,j\leq n}\leq  \prod_{i=1}^{n} K(x_{i},x_{i})$ if $K$ is positive semi-definite. 
Since $\int_{\mathbb{R}^2}K(x,x) (1-\exp(-f(x)) ){\rm d}x$ is finite by assumption, $\sum_{m=0}^{\infty} M_m $
is also finite. Therefore, by Weierstrass M-test~\cite[Theorem 7.10]{rudin1976principles}, $\{h_n\}$ converges uniformly.

Next, we show $\lim\limits_{k \rightarrow \infty} h_n (k)$ exists for $\forall n \in \mathbb{N}$. This is because for $ 0 \leq m \leq n$, we have:
\begin{align}\label{limitexistEq}
\allowdisplaybreaks
&\lim\limits_{k \rightarrow \infty}\int_{(\mathbb{R}^2)^m}  \frac{(-1)^m}{m!}   \det \left(K(x_i,x_j)\right)_{1 \leq i,j \leq m} \prod_{i=1}^{m} \left(1-\exp(-f_k(x_i))\right) {\rm d}x_1 ... {\rm d}x_m \nonumber\\
\overset{(a)}{=}& \int_{(\mathbb{R}^2)^m} \frac{(-1)^m}{m!}   \det \left(K(x_i,x_j)\right)_{1 \leq i,j \leq m} \lim\limits_{k \rightarrow \infty} \prod_{i=1}^{m} \left(1-\exp(-f_k(x_i))\right) {\rm d}x_1 ... {\rm d}x_m \nonumber\\
=& \int_{(\mathbb{R}^2)^m} \frac{(-1)^m}{m!}   \det \left(K(x_i,x_j)\right)_{1 \leq i,j \leq m} \prod_{i=1}^{m} \left(1-\exp(-f(x_i))\right) {\rm d}x_1 ... {\rm d}x_m.
\end{align}
Step (a) follows from the dominated convergence theorem (DCT): given $m$, denote $\textbf{x} \triangleq (x_1,...,x_m)$ and $g_k(\textbf{x}) \triangleq \frac{(-1)^m}{m!} \det \left(K(x_i,x_j)\right)_{1 \leq i,j \leq m} \prod_{i=1}^{m} (1-\exp(-f_k(x_i)))$; then from the definition of $f_k(x)$, $g_k(\textbf{x})$ converges pointwise to $\frac{(-1)^m}{m!} \det \left(K(x_i,x_j)\right)_{1 \leq i,j \leq m} \prod_{i=1}^{m} (1-\exp(-f(x_i)))$. In addition, observe that $|g_k(\textbf{x})| \leq \frac{1}{m!}\prod_{i=1}^{m} K(x_i,x_i)(1-\exp(-f(x_i)))$, we have $\int_{(\mathbb{R}^2)^m} \frac{1}{m!}\prod_{i=1}^{m} K(x_i,x_i)\\(1-\exp(-f(x_i))){\rm d}x_1 ... {\rm d}x_m = \frac{\left(\int_{\mathbb{R}^2}K(x,x)(1-\exp(-f(x))){\rm d}x\right)^m}{m!} < \infty$. Since each term of $h_n(k)$ has a finite limit when $k \rightarrow \infty$, thus $\lim\limits_{k \rightarrow \infty} h_n (k)$ also exists.

Now we can apply Lemma~\ref{limsumlemma} to $h_n(k)$ to derive the desired fact:
\begin{align}
\allowdisplaybreaks
&\lim\limits_{k \rightarrow \infty} \sum_{m=0}^{\infty}\frac{(-1)^m}{m!} \int_{(\mathbb{R}^2)^m} \det \left(K(x_i,x_j)\right)_{1 \leq i,j \leq m} \prod_{i=1}^{m} \left(1-\exp(-f_k(x_i))\right) {\rm d}x_1 ... {\rm d}x_m \nonumber\\
=&\lim\limits_{k \rightarrow \infty}  \lim\limits_{n \rightarrow \infty}\sum_{m=0}^{n}\frac{(-1)^m}{m!} \int_{(\mathbb{R}^2)^m} \det \left(K(x_i,x_j)\right)_{1 \leq i,j \leq m} \prod_{i=1}^{m} \left(1-\exp(-f_k(x_i))\right) {\rm d}x_1 ... {\rm d}x_m \nonumber\\
\overset{(a)}{=} &\lim\limits_{n \rightarrow \infty}  \lim\limits_{k \rightarrow \infty}\sum_{m=0}^{n}\frac{(-1)^m}{m!} \int_{(\mathbb{R}^2)^m} \det \left(K(x_i,x_j)\right)_{1 \leq i,j \leq m} \prod_{i=1}^{m} \left(1-\exp(-f_k(x_i))\right) {\rm d}x_1 ... {\rm d}x_m \nonumber\\
\overset{(b)}{=} &\sum_{m=0}^{\infty}\frac{(-1)^m}{m!} \int_{(\mathbb{R}^2)^m} \det \left(K(x_i,x_j)\right)_{1 \leq i,j \leq m} \prod_{i=1}^{m} \left(1-\exp(-f(x_i))\right) {\rm d}x_1 ... {\rm d}x_m,
\end{align}
where (a) is derived using Lemma~\ref{limsumlemma}, and (b) follows from~(\ref{limitexistEq}).

The proof of the lemma follows from these two facts.

\section{Proof of Lemma \ref{LFMarkDPPCoro}}~\label{LFMarkCoroApp}
This can be proved by the following procedure:
\allowdisplaybreaks
\begin{align*}
\allowdisplaybreaks
&\mathbb{E}\left[\exp(-\sum_{i} f(x_i,p_i) )\right] \\
\overset{(a)}{=}& \mathbb{E} \left[\prod_{i} \int_{\mathbb{R}^+} \exp(-f(x_i,p))F({\rm d} p) \right] \\
\overset{(b)}{=}& \sum_{n = 0}^{+\infty} \frac{(-1)^n}{n!} \int_{(\mathbb{R}^2)^n} \det \left(K(x_i,x_j)\right)_{1 \leq i,j \leq n} \prod_{i=1}^{n} \left(1-\int_{\mathbb{R}^{+}}\exp(-f(x_i,p_i)) F({\rm d} p_i) \right) {\rm d}x_1 ... {\rm d}x_n,
\end{align*}
where (a) is because all the marks are i.i.d. and independent of DPP $\Phi$, while (b) comes from Corollary~\ref{pgfl}.

\section{Proof of Corollary~\ref{ESFPDFCoro}}\label{ESPPDFCoroProof}
We start the proof with the following two lemmas:
\begin{lemma}\label{SwapIntDiff}
Consider two non-negative functions $g(u,v): \mathbb{R} \times \mathbb{R}^d \rightarrow [0,\infty)$, and $p(u): \mathbb{R} \rightarrow [0,+\infty)$, which satisfy the following conditions: (1) $g(u,v)$ is non-decreasing, right continuous w.r.t. $u$, and $g(u,v) = 0$ for $\forall u \leq 0$; (2) $p(u)$ is bounded, right continuous, and $\lim\limits_{u \rightarrow +\infty} p(u) = 0$; (3) $p(u)$ and $g(u,v)$ do not have common discontinuities for Lebesgue almost all $v$. Let $F(u) = \int_{\mathbb{R}^d} g(u,v) {\rm d}v$, we also assume that $F(u)$ is continuous, non-decreasing and bounded on $\mathbb{R}$. Then the following equation holds:
\begin{align}
\int_{\mathbb{R}} p(u) {\rm d}F(u) = \int_{\mathbb{R}^d \times \mathbb{R}} p(u){\rm d}_{u} g(u,v) {\rm d}v,
\end{align}
where the integrals w.r.t. ${\rm d}F(u) $ and ${\rm d}_{u} g(u,v)$ are in the Stieltjes sense.
\end{lemma}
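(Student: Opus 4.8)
The plan is to recast both sides of the claimed identity as integrals against finite Lebesgue--Stieltjes measures and then reduce the statement to a single application of Tonelli's theorem. For each fixed $v \in \mathbb{R}^d$, the map $u \mapsto g(u,v)$ is non-negative, non-decreasing and right-continuous with $g(u,v)=0$ for $u\le 0$, so it is the distribution function of a unique Borel measure $\mu_v$ on $\mathbb{R}$, characterized by $\mu_v((a,b]) = g(b,v)-g(a,v)$; by construction $\int_{\mathbb{R}} p(u)\,{\rm d}_u g(u,v) = \int_{\mathbb{R}} p(u)\,\mu_v({\rm d}u)$. Likewise, since $F$ is continuous, non-decreasing and bounded, it induces a finite Borel measure $\mu_F$ with $\int_{\mathbb{R}} p(u)\,{\rm d}F(u)=\int_{\mathbb{R}} p(u)\,\mu_F({\rm d}u)$, and condition (2) (boundedness of $p$) guarantees all of these integrals are finite. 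In these terms the lemma becomes the assertion that $\mu_F(A) = \int_{\mathbb{R}^d}\mu_v(A)\,{\rm d}v$ for every Borel set $A$, followed by an interchange of the $v$- and $u$-integrations.

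First I would establish the measure identity $\mu_F = \int_{\mathbb{R}^d}\mu_v\,{\rm d}v$. On the $\pi$-system of half-open intervals this is immediate from the definition of $F$:
\begin{equation*}
\mu_F((a,b]) = F(b)-F(a) = \int_{\mathbb{R}^d}\big[g(b,v)-g(a,v)\big]\,{\rm d}v = \int_{\mathbb{R}^d}\mu_v((a,b])\,{\rm d}v .
\end{equation*}
Measurability of $v\mapsto\mu_v(A)$ holds for intervals by the measurability of $g$ and extends to all Borel $A$ by a monotone-class argument. The two finite measures $\mu_F$ and $A\mapsto\int_{\mathbb{R}^d}\mu_v(A)\,{\rm d}v$ therefore agree on a generating $\pi$-system and on $\mathbb{R}$, so by Dynkin's $\pi$--$\lambda$ theorem they coincide on the full Borel $\sigma$-algebra.

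Next I would introduce the kernel measure $\Theta$ on $\mathbb{R}^d\times\mathbb{R}$ defined by $\Theta(B\times A)=\int_B\mu_v(A)\,{\rm d}v$ and apply Tonelli's theorem to the non-negative integrand $(v,u)\mapsto p(u)$. Integrating first in $u$ reproduces the right-hand side $\int_{\mathbb{R}^d}\big(\int_{\mathbb{R}} p\,{\rm d}_u g\big)\,{\rm d}v$, while the $u$-marginal of $\Theta$ is exactly $\mu_F$ by the identity just proved, so the same product integral also equals $\int_{\mathbb{R}} p\,{\rm d}F$. Since $p\ge 0$, Tonelli applies with no integrability hypotheses beyond the finiteness already in hand, and the two evaluations of the product integral give the stated equality.

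The main obstacle is reconciling this measure-theoretic reading with the Riemann--Stieltjes interpretation intended in the statement, and this is precisely where hypotheses (2) and (3) enter. When $g(\cdot,v)$ has an atom at a point $u_0$ that is also a discontinuity of $p$, the Riemann--Stieltjes integral $\int p\,{\rm d}_u g(u,v)$ fails to exist, whereas the Lebesgue--Stieltjes integral always does. The right-continuity of $p$ together with condition (3)---that $p$ and $g(\cdot,v)$ share no discontinuity for Lebesgue-almost every $v$---ensures that for almost every $v$ the Riemann--Stieltjes integral exists and coincides with $\int p\,\mu_v({\rm d}u)$, so the null set of exceptional $v$ does not affect the outer $v$-integration. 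I would therefore devote the most care to verifying this coincidence of the two Stieltjes notions on a full-measure set of $v$; the remaining steps are the routine measure identity and the Tonelli interchange described above.
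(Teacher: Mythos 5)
Your argument is correct, but it takes a genuinely different route from the paper's. The paper proves the identity by two applications of Stieltjes integration by parts sandwiching a Fubini step: it first converts $\int p\,{\rm d}F$ into $-\int F\,{\rm d}p$ (the boundary terms vanish because $F$ is bounded and vanishes on $(-\infty,0]$ while $p$ is bounded with $p(u)\to 0$ as $u\to+\infty$), then swaps the order of $\int_{\mathbb{R}^d}(\cdot)\,{\rm d}v$ and $\int_{\mathbb{R}}(\cdot)\,{\rm d}p(u)$ by Fubini, and finally integrates by parts back inside the $v$-integral; condition (3) on common discontinuities is what licenses the existence of the Riemann--Stieltjes integrals needed for both integration-by-parts steps. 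You instead work entirely at the level of Lebesgue--Stieltjes measures: you identify $\mu_F$ with the mixture $\int_{\mathbb{R}^d}\mu_v\,{\rm d}v$ via agreement on half-open intervals plus Dynkin's $\pi$--$\lambda$ theorem, and then a single Tonelli interchange on the kernel measure finishes the job, with conditions (2) and (3) relegated to the (correctly identified) task of reconciling the Lebesgue--Stieltjes integrals with the Riemann--Stieltjes reading intended in the statement. Your route avoids integration by parts altogether and is more robust --- it needs no boundary-term analysis and, for the Lebesgue--Stieltjes version of the conclusion, would survive with weaker hypotheses --- at the cost of the extra bookkeeping (measurability of $v\mapsto\mu_v(A)$, $\sigma$-finiteness of the $\mu_v$, finiteness of $\mu_v(\mathbb{R})$ for a.e.\ $v$ from the boundedness of $F$) that the paper's shorter argument never has to surface. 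Both proofs are sound; just make sure you actually carry out the coincidence of the two Stieltjes notions on a full-measure set of $v$, since that is the one place where your hypotheses (2) and (3) do real work, exactly as you flag.
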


\begin{proof}
Using Stieltjes integration by parts, we have the following:
\begin{align}
\allowdisplaybreaks
\int_{\mathbb{R}} p(u) {\rm d}F(u) &= \int_{\mathbb{R}} p(u) {\rm d}_u \int_{\mathbb{R}^d} g(u,v) {\rm d}v \nonumber\\
\overset{(a)}{=}& -\int_{\mathbb{R}}\int_{\mathbb{R}^d} g(u,v) {\rm d}v {\rm d}p(u) \nonumber\\
\overset{(b)}{=}& -\int_{\mathbb{R}^d}\int_{\mathbb{R}} g(u,v) {\rm d}p(u) {\rm d}v \nonumber\\
\overset{(c)}{=}& \int_{\mathbb{R}^d}\int_{\mathbb{R}} p(u) {\rm d}_u g(u,v) {\rm d}v,
\end{align}
where (a) and (c) are derived using integration by parts for the Stieltjes integrals, and (b) follows from Fubini's theorem. 
\end{proof}

\begin{lemma}[Rubin~\cite{rudin1976principles}]~\label{RubinLemma}
Suppose $\{f_n\}$ is a sequence of differentiable functions on $[a,b]$ such that $\{f_n(x_0)\}$ converges for some point $x_0$ on $[a,b]$. If $\{f_{n}^{'}\}$ converges uniformly on $[a,b]$ to $f'$, then $\{f_n\}$ converges uniformly on $[a,b]$ to a function $f$, and $f^{'}(x) = \lim\limits_{n\rightarrow \infty} f_n^{'}(x)$ for $a \leq x \leq b$.

\end{lemma}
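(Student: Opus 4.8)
The plan is to follow the classical two-stage argument: first establish that $\{f_n\}$ converges uniformly on $[a,b]$ to some function $f$, then show that $f$ is differentiable with $f'(x) = \lim_{n\to\infty} f_n'(x)$. The workhorse for both stages is the Mean Value Theorem applied to the differentiable difference $f_m - f_n$, combined with the Cauchy criterion for uniform convergence. The second stage will be concluded by invoking the interchange-of-limits result already available to us as Lemma~\ref{limsumlemma}.

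First I would prove uniform convergence of $\{f_n\}$. Fix $\epsilon > 0$. Since $\{f_n(x_0)\}$ converges and $\{f_n'\}$ converges uniformly, both are uniformly Cauchy, so there is an $N$ such that for all $m,n \geq N$ one has $|f_m(x_0) - f_n(x_0)| < \epsilon/2$ and $\sup_{t \in [a,b]} |f_m'(t) - f_n'(t)| < \epsilon/(2(b-a))$. Applying the Mean Value Theorem to $f_m - f_n$ on the interval between any $x$ and $t$ in $[a,b]$ yields
$$\left|\left(f_m(x) - f_n(x)\right) - \left(f_m(t) - f_n(t)\right)\right| \leq |x - t| \cdot \frac{\epsilon}{2(b-a)} \leq \frac{\epsilon}{2}.$$
Specializing $t = x_0$ and using the triangle inequality with the bound at $x_0$ gives $|f_m(x) - f_n(x)| < \epsilon$ uniformly in $x$, so $\{f_n\}$ is uniformly Cauchy and converges uniformly to some $f$.

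For the second stage, fix $x \in [a,b]$ and define the difference quotients $\phi_n(t) = (f_n(t) - f_n(x))/(t-x)$ and $\phi(t) = (f(t) - f(x))/(t-x)$ for $t \neq x$. By definition $\lim_{t \to x} \phi_n(t) = f_n'(x)$ exists for every $n$, and $\phi_n \to \phi$ pointwise because $f_n \to f$. The crucial observation is that the very same inequality controls the quotients: dividing the displayed estimate by $|t-x|$ gives $|\phi_m(t) - \phi_n(t)| \leq \epsilon/(2(b-a))$ for all $m,n \geq N$ and all $t \neq x$, so $\{\phi_n\}$ is uniformly Cauchy and hence converges uniformly to $\phi$ on $\{t \in [a,b] : t \neq x\}$. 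Since $x$ is a limit point of this set, Lemma~\ref{limsumlemma} lets me interchange the two limits, yielding
$$f'(x) = \lim_{t \to x} \phi(t) = \lim_{t \to x} \lim_{n \to \infty} \phi_n(t) = \lim_{n \to \infty} \lim_{t \to x} \phi_n(t) = \lim_{n \to \infty} f_n'(x),$$
which is exactly the claimed identity.

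The main obstacle is conceptual rather than computational: uniform convergence of the \emph{derivatives} must be converted into uniform control of the \emph{difference quotients}, not of the functions directly, and the Mean Value Theorem is precisely the bridge that accomplishes this. It is no accident that a single inequality simultaneously delivers uniform convergence of $\{f_n\}$ (by setting $t = x_0$) and uniform convergence of $\{\phi_n\}$ (by dividing through by $|t-x|$). Care is needed to approach the limit point $x$ within $[a,b] \setminus \{x\}$ so that Lemma~\ref{limsumlemma} applies, and to verify that the bound $\epsilon/(2(b-a))$ is genuinely independent of $x$, which it is since it involves only the uniform Cauchy bound on the derivatives.
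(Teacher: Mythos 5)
Your proof is correct: it is precisely the classical argument for this result (Theorem 7.17 in Rudin) — the Mean Value Theorem applied to $f_m - f_n$ yields the uniform Cauchy property of both the functions and the difference quotients, and the limit interchange then gives $f'(x)=\lim_n f_n'(x)$. The paper supplies no proof of its own, citing Rudin instead, so your argument coincides with the intended source, including the appeal to Lemma~\ref{limsumlemma}, which is exactly the interchange-of-limits theorem (Rudin's Theorem 7.11) used there.
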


We can express the empty space function as $F(r) = \lim\limits_{n \rightarrow \infty} F_n(r)$, where:
\begin{equation*}
F_n(r) = \sum\limits_{k=1}^{n} \frac{(-1)^{k-1}}{k!} \int_{(B(0,r))^k}\det (K(x_i,x_j))_{1 \leq i,j \leq k} {\rm d}x_1...{\rm d}x_k.
\end{equation*}
From Lemma~\ref{ESF}, we know $F_n(r)$ converges pointwise to $F(r)$ for any $r \geq 0$. Let $u(\cdot)$ denote the unit step function and $\delta(\cdot)$ denote the Dirac measure. Note that $F_n(r)$ is equal to 0 for $r \leq 0$; then by taking $p(v) = u(v)-u(v-r)$ with $r \in [0,\infty)$, we have:
\allowdisplaybreaks
\begin{align}\label{TermDiffEq}
F_n(r) =&\int_{\mathbb{R}}p(v) {\rm d}F_n(v) \nonumber\\
\overset{(a)}{=}&\sum\limits_{k=1}^{n} \frac{(-1)^{k-1}}{k!}\int_{(\mathbb{R}^2)^k \times [0,r)}  \det (K(x_i,x_j))_{1 \leq i,j \leq k} {\rm d}\left[\prod\limits_{i=1}^{k} u(v-|x_i|)\right] {\rm d}x_1...{\rm d}x_k \nonumber\\
\overset{(b)}{=} &\sum\limits_{k=1}^{n} \frac{(-1)^{k-1}}{k!} \int_{(\mathbb{R}^2)^k \times [0,r)}  \det (K(x_i,x_j))_{1 \leq i,j \leq k} \sum\limits_{m=1}^{k} \prod\limits_{i=1, i \neq m}^{k} u(v-|x_i|) \delta_{|x_m|}( {\rm d}v) {\rm d}x_1...{\rm d}x_k \nonumber\\
\overset{(c)}{=} & \sum\limits_{k=1}^{n} \frac{(-1)^{k-1}}{k!} \int_{(\mathbb{R}^2)^k \times [0,r)} k \det (K(x_i,x_j))_{1 \leq i,j \leq k} \prod\limits_{i=2}^{k} u(v-|x_i|)\delta_{|x_1|}( {\rm d}v) {\rm d}x_1...{\rm d}x_k \nonumber \\
= &\sum\limits_{k=1}^{n} \frac{(-1)^{k-1}}{(k-1)!} \int_{0}^{+\infty}\int_{0}^{2\pi}\int_{(\mathbb{R}^2)^{k-1}}\int_{0}^{r} \det (K(x_i,x_j))_{1 \leq i,j \leq k} \bigg|_{x_1 = (r_1,\theta)} \nonumber \\
& \times \prod\limits_{i=2}^{k} u(v-|x_i|) r_1 \delta_{r_1}({\rm d}v) {\rm d}x_2...{\rm d}x_k {\rm d} \theta {\rm d}r_1 \nonumber\\
\overset{(d)}{=}& \int_{0}^{r} \sum\limits_{k=1}^{n} \frac{(-1)^{k-1} }{(k-1)!}  2\pi v \int_{(B(0,v))^{k-1}}  \det (K(x_i,x_j))_{1 \leq i,j \leq k} \bigg|_{x_1 = (v,0)} {\rm d}x_2...{\rm d}x_k {\rm d}v
\end{align}
Step (a) is derived by applying Lemma~\ref{SwapIntDiff} to $F_n(v)$ and $p(v)$. 
Then (b) follows from the product rule for differentials, and the fact that the Dirac measure is the distributional derivative of the unit step function. Furthermore, (c) is because the determinant $\det(K(x_i,x_j))_{1 \leq i,j \leq n} $ remains the same if we swap the position of $x_1$ and $x_k$, which is equivalent to exchanging the first row and the $k$-th row, and then the first column and the $k$-th column of  $K(x_i,x_j)_{1 \leq i,j \leq n}$. Finally, (d) follows from the the defining property of Dirac measure, and noting that since $\Phi$ is stationary and isotropic, the integration is invariant w.r.t. the angle of $x_1$. Notice that $F_n(r)$ can be expressed as~(\ref{TermDiffEq}), which shows it is differentiable.

Given $r \in [0,\infty)$, we can check $F_n^{'}(v)$ converges uniformly for $v \in [0,r]$ using Hadamard's inequality for positive semi-definite matrices
. Then by applying Lemma~\ref{RubinLemma} to $\{F_n\}$, we have:
\allowdisplaybreaks
\begin{align*}
F(r) &= \int_{0}^{r} \lim\limits_{n \rightarrow \infty} F_{n}^{'}(v) {\rm d}v \\
&= \int_{0}^{r} \sum\limits_{n=0}^{+\infty} \frac{(-1)^{n} }{n!} 2\pi v  \int_{(B(0,v))^{n}}\det (K(x_i,x_j))_{0 \leq i,j \leq n} \bigg|_{x_0 = (v,0)} {\rm d}x_1...{\rm d}x_n{\rm d}v.
\end{align*}

\section{Proof of Lemma~\ref{MeanIntNearestBSLemma}}~\label{MeanIntNearestBSAppdix}
Denote the empty space function as $F(r)$, then the mean interference is calculated as:
\allowdisplaybreaks
\begin{align*}
\mathbb{E}[I|x^*(0) = x_0] =& -\frac{{\rm d}}{{\rm d}s} \left[\mathbb{E}[\exp(-sI) | |x^*(0) = x_0]\right]  \bigg|_{s=0}\\
\overset{(a)}{=}& -\frac{1}{1-F(r_0)} \sum\limits_{n=0}^{+\infty} \frac{(-1)^n}{n!}\int_{(\mathbb{R}^2)^n} \det(K_{x_0}^{!}(x_i,x_j))_{1\leq i,j \leq n} \\
&\times \frac{{\rm d}}{{\rm d}s}\prod\limits_{i=1}^{n}[1- \frac{\mathbbm{1}_{|x_i| \geq r_0}}{1+sP\mathit{l}(x_i) }] {\rm d}x_1...{\rm d}x_n  \bigg|_{s=0}\\
\overset{(b)}{=}& -\frac{1}{1-F(r_0)} \sum\limits_{n=1}^{+\infty} \frac{(-1)^n}{n!}\int_{(\mathbb{R}^2)^n} \det(K_{x_0}^{!}(x_i,x_j))_{1\leq i,j \leq n} \\
&\times \sum\limits_{k=1}^{n} \prod\limits_{i=1,i\neq k}^{n} [1-\frac{\mathbbm{1}_{|x_i| \geq r_0}}{1+sP\mathit{l}(x_i) }] \frac{P\mathit{l}(x_k)\mathbbm{1}_{|x_k| \geq r_0}}{(1+sP\mathit{l}(x_k))^2} {\rm d}x_1...{\rm d}x_n  \bigg|_{s=0}\\
\overset{(c)}{=} & \frac{\sum\limits_{n=1}^{+\infty} \frac{(-1)^{n-1}}{n!}\int_{(\mathbb{R}^2)^n}  \det(K_{x_0}^{!}(x_i,x_j))_{1\leq i,j \leq n} \times n \prod\limits_{i = 2}^{n} \mathbbm{1}_{|x_i| < r_0} \mathbbm{1}_{|x_1|\geq r_0}P\mathit{l}(x_1) {\rm d}x_1...{\rm d}x_n}{1-F(r_0)}\\ 
& = \frac{\sum\limits_{n=1}^{+\infty} \frac{(-1)^{n-1}}{(n-1)!}  \int_{(B(0,r_0))^{n-1}} \int_{B^{c}(0,r_0)} \det(K_{x_0}^{!} (x_i,x_j))_{1\leq i,j \leq n}P \mathit{l}(x_1) {\rm d}x_1...{\rm d}x_n}{\sum\limits_{n=0}^{+\infty}\frac{(-1)^n}{n!} \int_{B(0,r_0)^n} \det(K_{x_0}^{!}(x_i,x_j))_{1 \leq i,j \leq n}{\rm d}x_1...{\rm d}x_n}.
\end{align*}
Interchanging the infinite sum and the differentiation in (a) is guaranteed by Lemma~\ref{RubinLemma}.
Then (b) is derived by applying the derivative of product rule. In addition, (c) is true since consider $n$ points $x_1,...,x_n \in \mathbb{R}^2$ such that $|x_k| \geq r_0$ and the rest are  within the open ball $B^o(0,r_0)$, then the determinant $\det(K_{x_0}^{!}(x_i,x_j))_{1 \leq i,j \leq n} $ remains the same if we swap the position of $x_1$ and $x_k$. 

\bibliographystyle{ieeetr}
\bibliography{reference}

\end{document}